\numberwithin{equation}{section}
\def\cE{{\mathcal E}}
\def\cN{{\mathcal N}}
\def\cO{{\mathcal O}}
\def\cS{{\mathcal S}}
\def\cZ{{\mathcal Z}}
\def\E{\mathbb{E}}
\def\F{\mathbb{F}}
\def\N{\mathbb{N}}
\def\P{\mathbb{P}}
\def\R{\mathbb{R}}
\def\sB{\mathscr{B}}
\def\sC{\mathscr{C}}
\def\sF{\mathscr{F}}
\def\sH{\mathscr{H}}
\def\sL{\mathscr{L}}
\def\sP{\mathscr{P}}
\def\sX{\mathscr{X}}
\def\ba{\boldsymbol{a}}
\def\bc{\boldsymbol{c}}
\def\lba{\underline{a}}
\def\lbz{\underline{z}}
\renewcommand{\limsup}{\overline{\mathrm{lim}}}
\renewcommand{\liminf}{\underline{\mathrm{lim}}}
\def\fB{\mathfrak{B}}
\def\fs{\mathfrak{s}}
\renewcommand{\d}{\mathrm{d}}
\newcommand{\da}{\mathrm{d}a}
\newcommand{\ds}{\mathrm{d}s}
\newcommand{\dt}{\mathrm{d}t}
\newcommand{\dx}{\mathrm{d}x}
\newcommand{\dz}{\mathrm{d}z}
\newtheorem*{Def*}{Definition}
\newtheorem*{Thm*}{Theorem}
\newtheorem*{Cor*}{Corollary}
\newtheorem*{Rmk*}{Remark}
\newtheorem*{Lem*}{Lemma}
\newtheorem*{Prop*}{Proposition}
\newtheorem*{Asm*}{Assumption}
\newtheorem{Def}{Definition}[section]
\newtheorem{Thm}[Def]{Theorem}
\newtheorem{Cor}[Def]{Corollary}
\newtheorem{Rmk}[Def]{Remark}
\newtheorem{Lem}[Def]{Lemma}
\newtheorem{Prop}[Def]{Proposition}
\newtheorem{Asm}[Def]{Assumption}
\def\be{\begin{equation}}
\def\ee{\end{equation}}
\title{Stationary Heterogeneous-Agent Models in Continuous Time\footnote{The author thanks Markus Brunnermeier and Sebastian Merkel for their invaluable guidance and support. The author is grateful to Yuki Shigeta for helpful comments.
Research partially supported 
by the National Science Foundation grant DMS 2406762.}}
\author{Felix H\"ofer\footnote{Department of Operations Research and Financial
Engineering, Princeton University, Princeton, NJ, 08540, USA, email: 
{\tt fhoefer@princeton.edu}}}
\date{}
\begin{document}
\maketitle

\begin{abstract}
\noindent
We study a classical Bewley-Huggett-Aiyagari model in continuous time in which ex-post heterogeneity arises due to idiosyncratic, uninsurable income shocks. Our framework is rooted in the Fiscal Theory of the Price Level (FTPL), and we investigate the existence and multiplicity of stationary equilibria in models with and without capital. We establish the existence of an arbitrary even number of equilibria in which the government runs small constant deficits, which in turn implies the multiplicity of price levels. 
\end{abstract}

\section{Introduction}
We study a heterogeneous-agent\footnote{Households are heterogeneous because of income shocks that are idiosyncratic to each household, not due to different utility functions or dynamics. This should not be confused with heterogeneity modeled through graphon games.} economy with monetary and fiscal authorities who issue nominal government debt and levy taxes, respectively. We follow Kaplan, Nikolakoudis, and Violante \cite{kaplan2023price} to introduce household heterogeneity in the tradition of Bewley-Huggett-Aiyagari \cite{bewley1986stationary,huggett1993risk,aiyagari1994uninsured} in a model of Fiscal Theory of the Price Level (FTPL). Such heterogeneous-agent models have become central to modern macroeconomics, see for example Cherrier, Duarte, and Sa\"{i}di \cite{CHERRIER2023104497}. As emphasized in \cite{kaplan2023price}, while representative agent models only allow stationary equilibria in which governments run primary surpluses, the models we consider have equilibria with primary deficits---a prominent feature of empirical data from the United States since 1970 and Japan. In such equilibria, the real interest rate on government debt $r$ is less than the growth rate of the economy $g$. 

Household heterogeneity is driven by idiosyncratic endowment shocks, which, together with the price-taking assumption of households, naturally lends itself to a mean-field game (MFG) interpretation. The MFG literature, initiated independently by Caines, Malhamé and Huang~\cite{HMC1,HMC2,HMC3,HMC4}, and Lasry and Lions~\cite{LL1,LL2,LL3}, motivates our two-step equilibrium construction: we first establish the well-posedness of the household problem for given prices and policies and then impose market clearing. This mirrors the fixed-point structure of MFGs (see Carmona and Delarue \cite{carmona2018probabilistic})---a connection we make explicit in Section \ref{ssec:MFG-connection}.

We build on Achdou, Han, Lasry, Lions and Moll \cite{achdou2022income} to provide a mathematical analysis of stationary heterogeneous-agent equilibria, and our work complements the analysis of \cite{kaplan2023price}, which studies a Huggett model. We use techniques from the theory of viscosity solutions to establish regularity properties of households' optimal decisions and of the stationary cross-sectional distribution. We then apply these results to study the existence and uniqueness of stationary equilibria in models with capital (Aiyagari) and without (Huggett). In particular, 
\begin{itemize}
\item For a given strictly positive primary surplus, there exists a unique equilibrium in both Huggett and Aiyagari models, see Theorem \ref{thm:fixed-tau-surplus}.

\item For sufficiently small deficits, there exists an even number of equilibria. In fact, for any even number $d\in 2\N$, we construct Huggett economies with $d$ many equilibria, see Theorem \ref{thm:fixed-tau-deficit} \eqref{thm:fixed-tau-deficit-multiple-equilibria}. 

\item Finally, we prove that Aiyagari equilibria converge to Huggett equilibria as the capital elasticity of output tends to zero, see Theorem \ref{thm:huggett-to-aiyagari}.
\end{itemize}
These results rely on properties of aggregate savings $A(r)$ as a function of the interest rate, which are of interest in their own right:
\begin{itemize}
\item We provide an explicit interest rate criterion below which the economy is supported at the borrowing constraint, see Theorem \ref{thm:properties-G} \eqref{thm:properties-G-hand-to-mouth}.

\item We provide a direct proof that aggregate savings $A(r)$ diverge as the interest rate $r$ approaches the households' discount rate $\rho$, $\lim_{r\uparrow\rho}A(r)=\infty$, see Theorem \ref{thm:properties-G} \eqref{thm:properties-G-explosion-at-rho}.

\item In the case of CRRA utility, $A(r)$ is increasing in the interest rate $r$ if the CRRA constant $\gamma$ satisfies $\gamma\leq1$, see Theorem \ref{thm:properties-G}\eqref{thm:properties-A-increasing-in-r}. Conversely, for any $\gamma>1$, there exists an economy for which $A(r)$ is not increasing, see Theorem \ref{thm:A-can-decrease-gamma-greater-one}. 
\end{itemize}

\noindent 
We continue to introduce the models in Subsection \ref{ssec:models}, state our assumptions in Subsection \ref{ssec:assumptions}, and present our results on the household problem, the stationary cross-sectional distribution, and equilibria in Subsections \ref{ssec:main-results-hh}, \ref{ssec:main-results-invariant-distribution}, and \ref{ssec:main-results-equilibria}, respectively. We discuss the related literature in Subsection \ref{ssec:literature} and the MFG connection in Subsection \ref{ssec:MFG-connection}.
\vspace{1em}

{\bf Notation.} The set of non-negative real numbers is denoted by $\R_+:=[0,\infty)$.  On any topological space $X$, the Borel $\sigma$-algebra is $\sB(X)$ and the set of probability measures is $\sP(X)$, equipped with the topology of weak convergence. For any function $\varphi:X\times Y\mapsto U$, with a slight abuse of notation, we define the function $\varphi(x):Y\mapsto U$ by $\varphi(x)(y):=\varphi(x,y)$. The positive and negative part of a real number $x$ are denoted by $x^+=x\lor0$ and $x^-=(-x)\lor0$, respectively. The derivative of a function $\varphi(x,y)$ with respect to $x$ is denoted by $\varphi_x(x,y)$. The terms ``increasing'' and ``decreasing'' are used weakly to mean ``non-decreasing'' and ``non-increasing'', respectively.

\subsection{Huggett and Aiyagari models} \label{ssec:models}

Both the Huggett and Aiyagari economies are set in continuous time, and we begin by describing the Huggett economy. 
Aggregate output consists of a single consumption good. Without loss of generality, we assume that aggregate output remains constant over time—that is, its growth rate $g$ is normalized to $0$.
There is a continuum of households, each receiving a random idiosyncratic endowment share. These endowments are independent and identically distributed across households, and we use the term \emph{income} interchangeably to refer to them. We let $0<z_1<\ldots<z_d$ be fixed endowment levels. Idiosyncratic endowments of a typical household are modeled by a stationary Markov chain $(z_t)_{t\ge0}$ taking values in $\cZ:=\{z_1,\ldots,z_d\}$. Let $Z>0$ denote the mean of the stationary process $(z_t)$.

The government issues bonds $(\fB_t)_{t\ge0}$ and sets a constant nominal interest rate $i$. The nominal value of bonds evolves according to
$$
\d \fB_t = \mu^\fB \fB_t\,\d t ,\qquad t\geq0,
$$
where $\mu^\fB$ is the growth rate set by the government. 
Bonds are the only assets in this economy, and we use them as the num\'eraire. 
At time $t\geq0$, the price of the consumption good in terms of bonds is denoted by $P_t$. It satisfies the Fisher equation,
$$
\d  \left( \!\frac{1}{P_t} \!\right) = (r-i) \frac{1}{P_t}\, \dt,\qquad t\ge0.
$$
Upon receiving asset income $ia_t$ and an idiosyncratic endowment share $z_t$, a typical household pays an income tax $P_t\tau(z_t)$ (or receives a transfer if $\tau(z_t) < 0$), consumes an amount $c_t > 0$, and invests the remaining nominal savings
$$
ia_t +P_t[z_t-\tau(z_t)] - P_t c_t,
$$
in bonds. Households maximize discounted lifetime utility over consumption subject to a no-borrowing constraint. To close the economy, the government faces the nominal budget constraint
$$
\check\mu^\fB \fB_t + P_tT=0,\qquad t\geq0,
$$
where $\check \mu^\fB:=\mu^\fB-i$ and $T$ are the aggregate taxes (subsidies if negative), which we will refer to as \emph{primary surpluses} (deficits if negative) below.

For our analysis, it is convenient to directly work with \emph{real} quantities (in units of the good) instead of nominal ones (in units of the bond), and we directly define equilibria in the real formulation. 
If we let $B_t:=\fB_t/P_t$, $t\ge0$, denote the real government debt, then the (nominal) government's budget constraint implies that $\d B_t=0$, so that $B_t\equiv B$ for some $B$. Using the Fisher equation, 
$$
0= \d B_t = (\check \mu^{\fB}+r) B_t\,\dt \qquad \Longrightarrow \qquad r=-\check \mu^{\fB},
$$
where we recall $\check\mu^{\fB}= \mu^{\fB}-i$.
In real terms, the government's budget constraint becomes $rB=T$. 
\vspace{1em}

We continue with a precise definition of (real) stationary Huggett equilibria. Fix a filtered probability space $(\Omega,\sF,(\sF_t)_{t\ge0},\P)$ which is rich enough to support the stationary Markov chain $(z_t)_{t\ge0}$. The transition rate from state $z$ to $y\neq z$ is denoted by $\lambda(z,y)\geq0$. We define the \emph{state space} $\sX$ as the wealth-endowment space
$$
\sX := \R_+\times\cZ.
$$
We furthermore let $u:(0,\infty)\mapsto\R$ be a utility function. Precise assumptions and definitions on the income process $(z_t)$, the utility function and admissible consumption rules are stated below.\\

{\bf (Huggett).} A \emph{(real) stationary Huggett equilibrium} is a tuple
$$
\Xi^*=(\tau^*(\cdot), B^*, r^*, c^*(\cdot,\cdot), G^*(\d a, \d z))
$$
consisting of a \emph{tax-and-transfer function} $\tau^*(\cdot)$, a \emph{real value of bonds} $B^*\in[0,\infty)$, a \emph{real interest rate }$r^*\in\R$, a \emph{consumption rule} $c^*(a,z)$, and a \emph{cross-sectional distribution} $G^*\in\sP(\sX)$, such that the following hold:

\begin{enumerate}[(1)]
\item \label{def:Hug-I}  \emph{Household optimality.} Given $(r^*,\tau^*(\cdot))$, the map $(a,z)\mapsto c^*(a,z)$ is an optimal feedback control of the following stochastic control problem
$$
\sup_{(c_t)} \quad \E\left[\int_0^\infty \!\!e^{-\rho t} u(c_t)\,\d t\right] \quad \text{subject to} \quad  \d a_t = [r^* a_t + z_t - \tau^*(z_t)-c_t]\,\d t.
$$
Here, the supremum is taken over consumption streams $(c_t)$ that respect the no-borrowing constraint $a_t\ge0$ for all $t\ge0$. 

\item \label{def:Hug-II} \emph{Consistency.} $G^*(\d a, \d z)\in\sP(\sX)$ is a stationary distribution of the Markov process $(a^*_t,z_t)_{t\ge0}$ where
$$
a^*_t = a^*_0+\int_0^t[r^*a^*_s+z_s-\tau^*(z_s)-c^*(a^*_s,z_s)]\,\d s,\qquad t\ge0.
$$

\item \label{def:Hug-III}\emph{Government's budget constraint.} Aggregate real debt $B^*$ satisfies
$$
r^* B^* = \int_\sX\tau^*(z) \, G^*(\d a, \d z).
$$

\item \label{def:Hug-IV} \emph{Market clearing.} Both the \emph{asset market} and \emph{goods market }clear:
$$
A^*:= \int_\sX a\, G^*(\d a, \d z) = B^*\quad\text{and}\quad  C^* := \int_\sX c^*(a,z)\, G^*(\d a, \d z) = Z,
$$
where we recall that $Z$ denotes the mean of $(z_t)$.  We impose that $A^*$ and $C^*$ are well-defined in $[0,\infty)$.
\end{enumerate}

As explained in \cite{kaplan2023price}, real monetary equilibria $\Xi^*=(\tau^*(\cdot), B^*, r^*, c^*(\cdot,\cdot), G^*(\d a, \d z))$ with $B^*>0$ uniquely determine the price level. Indeed, let  $\fB_0>0$ be an initial value for nominal debt and $i^*$ be a value for the nominal interest rate. Then, the equilibrium price level $(P^*_t)_{t\ge0}$ is given as the unique solution to
$$
P^*_0 := \frac{\fB_0}{B^*},\quad \d P^*_t = (i^*-r^*) P^*_t\,\d t,\qquad t\ge0.
$$
We hence focus our treatment on real equilibria and now turn to an Aiyagari model with production. In addition to the setting of the Huggett model, there is a representative firm that rents capital $K$ at rate $r$ and pays a wage $w$ for $L$ amount of labor. Firms use a concave technology $F\in C^1(\R_+\times\R_+;\R_+)$ with constant returns to scale. Under the standard assumptions on $F$ such as $\partial_K F(K,1)\to\infty$ as $K\downarrow0$, in any competitive equilibrium, profit maximization implies 
$$
r=\partial_K F(K,L)-\delta,\quad w=\partial_L F(K,L),\qquad K,L\ge0,
$$
where $\delta>0$ is a constant capital depreciation rate. Note that this in particular forces $r>-\delta$ and $w>0$, see \cite[Lemma 2]{acikgoz2018existence}.
\vspace{1em}

\textbf{(Aiyagari)}. A \emph{(real) stationary Aiyagari equilibrium} is a tuple
$$
\Xi^* = (\tau^*(\cdot), B^*, K^*, r^*, w^*, c^*(\cdot,\cdot), G^*(\d a, \d z)),
$$
consisting of a \emph{tax-and-transfer function} $\tau^*(\cdot)$, a \emph{real value of bonds} $B^*\in[0,\infty)$, a \emph{real value of capital} $K^*\in(0,\infty)$, a \emph{real interest rate }$r^*\in(-\delta,\infty)$, a \emph{wage rate} $w^*\in(0,\infty)$, a \emph{consumption rule} $c^*(a,z)$ and a \emph{cross-sectional distribution} $G^*\in\sP(\sX)$, such that the following hold:

\begin{enumerate}[(1)]
\item \emph{Household optimality.} \label{def:aiyagari-1} Given $(r^*,w^*,\tau^*(\cdot))$, the function $(a,z)\mapsto c^*(a,z)$ provides an optimal feedback consumption rule of the household's problem,
$$
\sup_{(c_t)} \quad \E\left[\int_0^\infty \! e^{-\rho t} u(c_t)\,\d t\right] \quad \text{subject to} \quad  \d a_t = (r^* a_t + w^*[z_t - \tau^*(z_t)]-c_t)\,\d t.
$$
Again, the maximization is taken over admissible consumption controls satisfying the no-borrowing constraint.

\item \emph{Consistency.} \label{def:aiyagari-2} $G^*(\d a, \d z)$ is a stationary distribution of the optimal state process $(a^*_t,z_t)_{t\ge0}$ where
$$
a^*_t = a^*_0+\int_0^t(r^*a^*_s+w^*[z_s-\tau^*(z_s)]-c^*(a^*_s,z_s))\,\d s,\qquad t\ge0.
$$

\item \label{def:aiyagari-3}\emph{Government's budget constraint.} Aggregate real bonds $B^*$ satisfy $B^*\geq0$ and
$$
r^* B^* =  w^*\!\int_\sX\tau^*(z) \, G^*(\d a, \d z).
$$

\item \emph{Market clearing.} \label{def:aiyagari-4} Both the \emph{asset} and \emph{goods market} clear:
$$
A^* := \int_\sX a\, G^*(\d a, \d z) = K^* + B^*\quad\text{and}\quad C^* +I^* = F(K^*,Z),
$$
where aggregate consumption is $C^* := \int_\sX c^*(a,z)\, G^*(\d a, \d z)$, and \emph{investments} $I$ are defined by $I^* := \delta K^*$. We again impose that $A^*$ and $C^*$ are well-defined in $[0,\infty)$.

\item \emph{Competitive equilibrium between firms.} \label{def:aiyagari-5} The real interest rate $r^*>-\delta$ and wage $w^*>0$ satisfy
\begin{equation}\label{eq:aiy-market-prices-eq}
r^* = \partial_K F(K^*,Z)-\delta\quad \text{and}\quad w^* = \partial_L F(K^*,Z).
\end{equation}
\end{enumerate}

\begin{Rmk}
{\rm
It is commonly assumed in the economic literature that a \emph{continuum of households} is given, often on the unit interval $[0,1]$, and that each household $i\in[0,1]$ carries their own independent idiosyncratic random process $z^i_t$, see, for example, Bewley \cite{bewley1986stationary}. Aggregation is then taken with respect to the Lebesgue measure on $i\in[0,1]$. This approach is mathematically intricate and subtle measurability issues arise. While  this approach can be made precise using Fubini extensions, we avoid these issues entirely by treating the problem as a \emph{mean-field game}, and instead average over $\omega\in\Omega$ randomness. Under appropriate assumptions, both approaches yield the same solution concept. In particular, identities such as
$$
\int_\Omega z_t(\omega)\,\P(\d\omega) = \int_{[0,1]}z^i_t\,\d i
$$
can be proven using the exact law of large numbers, see for example Sun \cite{sun2006exact}.
}
\end{Rmk}

\noindent 
We emphasize that, in both the Huggett and Aiyagari model, households take prices $(r^*,w^*)$ and a tax-and-transfer function $\tau^*(\cdot)$ as given and then optimize over their consumption. In equilibrium, prices have to be consistent with the cross-sectional distribution of the optimally controlled state process $(a^*_t,z_t)_{t\geq0}$ via the market clearing conditions. This philosophy is precisely the idea underpinning the theory of mean-field games, and we discuss this connection in Section \ref{ssec:MFG-connection}.

\subsection{Assumptions} \label{ssec:assumptions}

\begin{Asm}[Standing]\label{asm:standing}
\phantom{.}
{\rm
\begin{enumerate}[(i)]
\item $(z_t)_{t\geq0}$ follows an irreducible stationary Markov chain on $\cZ=\{z_1,\ldots,z_d\}$,  for some $d\ge2$, and we normalize its mean to $Z=\E[z_t]=1$. 
\item In the Huggett model, the wage is $w=1$. In the Aiyagari model, the wage $w$ is constrained to be strictly positive: $w>0$.
\item We assume a linear tax-and-transfer function, $\tau(z)=\tau z$ for some $\tau\in(-\infty,1)$.
\item In the Aiyagari model, the production function is of Cobb-Douglas-type, $F(K,L) = K^\alpha L^{1-\alpha}$ for some $\alpha\in(0,1)$.
\end{enumerate}
}
\end{Asm}

\noindent
Note that under the normalization $Z=1$ and the linear tax-and-transfer function $\tau(z)=\tau z$, the parameter $\tau<1$ coincides with the primary surplus $\int \tau z \,G(\da,\dz)$ for any stationary distribution $G$ in the Huggett model. In the Aiyagari model, aggregate primary surpluses are $w\tau$.

\begin{Asm}[Borrowing limit]\label{asm:borrowing-constraint}
\phantom{.}
{\rm
The borrowing limit $\underline{a}\leq 0$ satisfies $\underline{a}>-(1-\tau)w\lbz/r$ if $r>0$.  Here, $\lbz>0$ denotes the lowest income level.
}
\end{Asm}

\noindent
If $\lba<0$, this imposes an upper bound on interest rates, and we define the set of admissible interest rates by
\be\label{eq:admissible-interest-rates-general}
R(w,\tau) := \{r< \rho : r\lba + w(1-\tau)\lbz>0\}.
\ee
Under the \emph{no-borrowing limit}, i.e.\ $\lba=0$, $R(w,\tau)=(-\infty,\rho)$, which is independent of $(w,\tau)$.

\begin{Asm}[Utility function]\label{asm:utility-function}
{\rm
\phantom{.}
The utility function $u:(0,\infty)\mapsto\R$ is $C^2$, increasing, strictly concave, and satisfies $\lim_{c\to\infty}u'(c)=0$ and $\lim_{c\downarrow0}u'(c)=\infty.$
}
\end{Asm}

\begin{Asm}[Tail condition]\label{asm:utility-tail}
{\rm
The utility function $u$ satisfies $\limsup_{c\to\infty} u'(c+y)/u'(c+z)\le 1$ for all $y,z>0$. 
}
\end{Asm}

\noindent
Finally, we say that the utility function $u(\cdot)$ is of \emph{CRRA type} if there is $\gamma>0$ such that, for all $c>0$,
\be \label{eq:CRRA}
u(c)=\frac{c^{1-\gamma}}{1-\gamma}\quad\text{if}\ \gamma \neq 1\quad \text{and} \quad u(c)=\log(c)\quad \text{if}\ \gamma=1.
\ee
The tail condition in Assumption \ref{asm:utility-tail} is used to establish the existence of a stationary cross-sectional distribution, see in particular Lemma \ref{lem:properties-saving-rules}. A similar condition can be found in Shigeta \cite[Assumption 5]{shigeta-equilibrium}. For example, it is implied by the condition that the absolute risk aversion vanishes, i.e.\ $\lim_{c\to\infty} -u''(c)/u'(c)=0$. In particular, it is true for all CRRA utilities. It does not hold for exponential (CARA) utilities. In fact, 
it is possible to construct models satisfying Assumptions \ref{asm:standing}--\ref{asm:utility-function} with
$r<\rho$, such that, for every initial condition, the optimal wealth process satisfies $a_t^*\to\infty$ almost surely for $t\to\infty$. In particular, there cannot exist a stationary distribution. Since this fact is not of central relevance to this paper, we chose not to include a proof.
We now present the main results of this paper.

\subsection{Results on the household problem}\label{ssec:main-results-hh}
We first analyze the problem of a typical household and then study the aggregate quantities that must satisfy the market-clearing conditions. For a borrowing limit $\lba$, let us set 
$$
\sX:=[\lba,\infty)\times\cZ.
$$
Given values for $(r,w,\tau)$, we define the \emph{value function} of a typical household, for $(a,z)\in\sX$, by
\begin{equation}\label{eq:value-function}
v^*(a,z):=\sup_{(c_t)_{t\geq0}}\E_{a,z}\left[\int_0^\infty \! e^{-\rho t}u(c_t)\,\dt\right]\quad \text{subject to}\quad \d a_t= [ra_t+w(1-\tau)z_t-c_t]\,\d t.
\end{equation}
Here, the notation $\E_{a,z}$ indicates that the wealth-income process is started from an initial value $(a_0,z_0)=(a,z)$, and the maximization is carried out over adapted processes $(c_t)_{t\ge0}$ such that the corresponding wealth process $a_t$ stays almost surely above $\lba$ for all $t\ge0$, see Definition \ref{def:adm-controls} below.
Our first result characterizes the value function as the unique classical solution of a dynamic programming equation with a boundary condition on the derivative at the borrowing limit $\{a=\lba\}$. To state this equation, we define a \emph{Hamiltonian} and the infinitesimal generator $\sL$ of the Markov chain $(z_t)_{t\geq0}$ by 
$$
H(p):=\sup_{c>0}\,(u(c)-cp)\in \R\cup\{\infty\},\qquad (\sL\varphi)(z):=\sum_{y\neq z}\lambda(z,y)\, [\varphi(y)-\varphi(z)],\qquad (p,z)\in\R\times\cZ,
$$
respectively, for any $\varphi:\cZ\mapsto\R$. Under Assumption \ref{asm:utility-function}, the Hamiltonian $H(\cdot)$ is strictly convex and decreasing on $(0,\infty)$. 
Then, the corresponding Hamilton-Jacobi-Bellman (HJB) equation is
\begin{equation}\label{eq:HJB}
\left\{
\begin{aligned}
\rho v(a,z) &= H(v_a(a,z)) + v_a(a,z)[ra+w(1-\tau)z]+\sL v(a,\cdot)(z),\\
v_a(\lba,z) &\geq u'(r\lba+w(1-\tau)z),
\end{aligned}
\right.
\end{equation}
for $(a,z)\in\sX$. We call a continuous function $v:\sX\mapsto\R$ a \emph{classical solution} of \eqref{eq:HJB} if, for every $z\in\cZ$, the function $v(\cdot,z)$ extends to a continuously differentiable function on $(\lba-\varepsilon,\infty)$ for some $\varepsilon>0$ and satisfies \eqref{eq:HJB} pointwise. Recall the set of admissible interest rates $R(w,\tau)$ from \eqref{eq:admissible-interest-rates-general}.

\begin{Thm}\label{thm:household}
Let Assumptions \ref{asm:standing}--\ref{asm:utility-function} be in force. 
\begin{enumerate}[(i)]
\item \label{thm:household-value} (Value function.) The value function $v^*$ defined in \eqref{eq:value-function} is a classical solution to \eqref{eq:HJB} on $\sX$. If $r\in R(w,\tau)$, it is unique in the class of continuous functions that satisfy \eqref{eq:HJB} in the constrained viscosity sense and which are of at most linear growth and bounded from below. If $r=\rho$, the same conclusion holds true when restricting the class of functions to those of sublinear growth.
\item \label{thm:household-control} (Optimal control.) For $r\in R(w,\tau)$, the feedback consumption rule 
$$
c^*(a,z) = (u')^{-1}(v^*_a(a,z)),\qquad (a,z)\in\sX,
$$
is optimal for the problem \eqref{eq:value-function}. It is unique in the class of admissible feedback controls stated in Definition \ref{def:adm-controls}.
\end{enumerate}
\end{Thm}

\noindent The notion of a \emph{constrained viscosity solution}, introduced by Soner \cite{soner1986optimal-I, soner1986optimal-II}, is recalled in Section \ref{ssec:viscosity}. 
We say that a function $f:\sX\mapsto\R$ is of \emph{at most linear growth} and \emph{bounded from below} if, respectively,
\begin{equation}\label{eq:growth-condition}
\underset{a\to\infty}{\limsup}\, \frac{f(a,z)}{a}<\infty \ \  \text{for all}  \ z\in\cZ \qquad \text{and} \qquad \inf \, f>-\infty.
\end{equation}
We say that $f$ is of \emph{sublinear growth} if $\lim_{a\to\infty} f(a,z)/a=0$, for all $z\in\cZ$. 

\subsection{Results on the stationary distribution} \label{ssec:main-results-invariant-distribution}

For $r\in R(w,\tau)$, the optimally controlled state process admits a unique invariant distribution which has compact support on $\sX$. In fact, it is exponentially ergodic as the next result shows.

\begin{Thm}\label{thm:invariant-distribution}
Under Assumptions
\ref{asm:standing}--\ref{asm:utility-tail}, for $r\in R(w,\tau)$, the optimal state process $(a^*_t,z_t)_{t\geq0}$ is an exponentially ergodic Markov process admitting a unique invariant measure $G^*(\d a,\d z)$. $G^*$ is compactly supported.
\end{Thm}

\noindent
To make the dependence on the parameters such as interest rate, wage and tax rate visible, we sometimes write $c^*(a,z)=c^*(a,z;r,w,\tau)$ and $G^*(\da,\dz)=G^*(\da,\dz;r,w,\tau)$ for the optimal feedback consumption and stationary measure, respectively, and omit certain parameters when they are fixed. Next, for given $(w,\tau)\in (0,\infty)\times(-\infty,1)$ and $r\in R(w,\tau)$, let 
$$
A(r,w,\tau) := \int_{\sX} a\, G^*(\da,\dz;r,w,\tau) \quad\text{and}\quad  C(r,w,\tau) := \int_{\sX} c^*(a,z;r,w,\tau)\,G^*(\da,\dz;r,w,\tau)
$$
denote the stationary aggregate asset demand and aggregate consumption of households, respectively.

\begin{Thm}\label{thm:properties-G}
Let Assumptions \ref{asm:standing}--\ref{asm:utility-tail} be in force and fix $(w,\tau)\in(0,\infty)\times(-\infty,1)$.
\begin{enumerate}[(i)]
\item \label{thm:properties-G-continuity} (Continuity.) The mapping $R(w,\tau)\ni r \mapsto G^*(\d a,\d z; r,w,\tau)$ is continuous in the weak topology. In particular, the map $r\mapsto A(r,w,\tau)$ is continuous.  
\item \label{thm:properties-G-hand-to-mouth} (Hand-to-mouth.) For $r\in R(w,\tau)$, set
\be\label{eq:hand-to-mouth}
\Theta(r):= \max_{z\in\cZ}\,  \sum_{y\neq z} \lambda(z,y)\left(\frac{u'(r\lba+w(1-\tau)y)}{u'(r\lba+w(1-\tau)z)} -1\right).
\ee
Then, the stationary measure $G^*=G^*(\d a,\d z;r)$ of the optimal state process is supported at the borrowing constraint $\{a=\lba\}\times\cZ$ if and only if the interest rate $r$ satisfies $\rho-r\geq \Theta(r)$.
\item \label{thm:properties-G-explosion-at-rho} (Explosion at $r=\rho$.) In addition, assume no-borrowing limit. Then, $\lim_{r\uparrow\rho}A(r,w,\tau)=\infty$.
\item \label{thm:properties-G-linearity-of-A-in-tau} (Linearity.) In addition, assume no-borrowing limit and CRRA utility. Then, for $r\in (-\infty,\rho)$,
$$
A(r,w,\tau)=w(1-\tau)A(r,1,0)\quad \text{and}\quad C(r,w,\tau)=w(1-\tau)C(r,1,0).
$$
\item \label{thm:properties-A-increasing-in-r} (Monotonicity.) In addition, assume no-borrowing limit and the CRRA coefficient satisfies $\gamma\le1$. Then,
$$
(-\infty,\rho)\ni r\longmapsto A(r,w,\tau) \quad \text{is an increasing function.}
$$
\end{enumerate}

\end{Thm}

We continue with some remarks on Theorem \ref{thm:properties-G}. 
First, Theorem \ref{thm:properties-G} \eqref{thm:properties-G-hand-to-mouth} provides a sharp condition on the interest rate which characterizes the unique invariant measure being supported at the borrowing constraint. In general, the function $\Theta(r)$ in Theorem \ref{thm:properties-G} \eqref{thm:properties-G-hand-to-mouth} is not necessarily monotone. However, for CRRA utility, it is a direct computation to show that $\Theta$ is increasing, which implies the next corollary.

\begin{Cor}
In addition to the assumptions of Theorem \ref{thm:properties-G} \eqref{thm:properties-G-hand-to-mouth} assume CRRA utility. Let $\underline{r}$ be the unique solution to $\underline{r}=\rho-\Theta(\underline{r})$. Then, $G^*$ is supported at the borrowing constraint if and only if $r\leq \underline{r}$.
\end{Cor}

In a two-state model $\cZ=\{z_\ell,z_h\}$ with a low and a high income state $0<z_\ell<z_h<\infty$ and CRRA utility \eqref{eq:CRRA}, equation \eqref{eq:hand-to-mouth} reduces to the criterion
$$
r\leq \rho - \lambda(z_h,z_\ell) \left(\left[\frac{r\lba+w(1-\tau)z_h}{r\lba+w(1-\tau)z_\ell}\right]^\gamma-1\right) .
$$
We note that the interest rate bound on the right-hand side can have either sign. Moreover, the explicit expression \eqref{eq:hand-to-mouth} shows that under the no-borrowing constraint $\lba=0$, the lower interest bound $\underline{r}$ is independent of the wage $w$ and tax-and-transfer rate $\tau$.

Theorem \ref{thm:properties-G} \eqref{thm:properties-A-increasing-in-r} recovers a result due to Achdou, Han, Lasry, Lions, and Moll \cite[Proposition 5]{achdou2022income}. We present a direct self-contained proof in Section \ref{ssec:A-increasing} using techniques from the theory of viscosity solutions. \cite{achdou2022income} suggest that conditions more general than $\gamma \leq 1$ should be examined in order to establish uniqueness of equilibria. In their setting, the equilibrium condition takes the form 
$$
A(r)=B
$$
where $B\ge0$ is fixed and $A(r)=A(r,w,\tau)$. Thus, uniqueness for arbitrary $B\geq 0$ is equivalent to strict monotonicity of the map $r\mapsto A(r)$. The next proposition shows that, under CRRA utility, $A(r)$ need not be increasing when $\gamma>1$. In this sense, the condition $\gamma\leq 1$ is sharp.

\begin{Thm}[Failure of monotonicity for $\gamma>1$]\label{thm:A-can-decrease-gamma-greater-one}
Assume the no-borrowing constraint and CRRA utility with coefficient $\gamma>1$. Then, there exists a two-state income process satisfying Assumption \ref{asm:standing} and $r_1<r_2<\rho$ such that $A(r_1,1,0)>A(r_2,1,0)$. If $\gamma>(1+\sqrt5)/2$, the two interest rates may be chosen strictly positive.
\end{Thm}

The proof of Theorem \ref{thm:A-can-decrease-gamma-greater-one} is constructive and given in Subsection \ref{ssec:A-fail-increasing}. It constructs a ``rare-disaster'' economy where individuals' income takes on a low ``disaster'' state while spending most time in the high income state.

\subsection{Results on stationary equilibria}\label{ssec:main-results-equilibria}

For fixed $\rho$ and $\alpha$, both Huggett and Aiyagari equilibria can be parameterized by the two parameters $(r,\tau)$. Indeed, in the Huggett model, this will become clear from the fact that, for fixed $(r,\tau)$, there is a unique optimal feedback control $c^*$ and a unique invariant measure $G^*$, which uniquely determines the asset demand $A^*=B^*$ in equilibrium. In the Aiyagari model, the same logic applies upon noting that there is a one-to-one correspondence between the interest rate $r^*$, capital $K^*$ and the wage $w^*$ in equilibrium, see Section \ref{ssec:proof-surplus} for more details. In light of these observations, we ask the following two questions:
\begin{enumerate}[(i)]
\item \label{question-given-r} Given an interest rate $r$, how many stationary equilibria with $r^*=r$ exist?
\item \label{question-given-tau} Given a primary surplus $\tau$, how many stationary equilibria with $\tau^*=\tau$ exist?
\end{enumerate}
Using a version of \emph{Walras' law}, see Section \ref{sec:general-equilibrium}, it is sufficient to ensure that \emph{either} the asset or goods market clears.  We summarize our main existence and uniqueness results for both the Huggett and Aiyagari models. First, for a fixed interest rate $r<\rho$, we prove the following theorem.

\begin{Prop}[Answer to \eqref{question-given-r}] \label{prop:fixed-r}
Under Assumption \ref{asm:standing}, the no-borrowing limit, and CRRA utility with $\gamma\leq1$, the following hold:

\begin{enumerate}[(i)]
\item \label{prop:fixed-r-huggett} For $r<\rho$, there exists a unique Huggett equilibrium with equilibrium interest rate $r$.
\item \label{prop:fixed-r-aiyagari} For $r\in(-\delta,\rho)$, there exists at most one Aiyagari equilibrium with $r=r^*$. 
An equilibrium exists whenever $r>\underline{r}$ and the capital elasticity of output $\alpha$ is small or the depreciation rate $\delta$ sufficiently large.
\end{enumerate}
\end{Prop}

We now state our results for a fixed primary surplus $\tau<1$. Recall the definition of the lower interest rate bound $\underline{r}$ from Theorem \ref{thm:properties-G} \eqref{thm:properties-G-hand-to-mouth}. The next theorem recovers the results of \cite{kaplan2023price} for the Huggett model.

\begin{Thm}[Answer to \eqref{question-given-tau} for surpluses]
\label{thm:fixed-tau-surplus}
Under Assumption \ref{asm:standing}, the no-borrowing limit, and CRRA utility with $\gamma\leq1$, the following hold:
\begin{enumerate}[(i)]
\item \label{thm:fixed-tau-surplus-positive} If $\tau\in(0,1)$, then there exists a unique stationary Huggett and Aiyagari equilibrium with $\tau^*=\tau$. The equilibrium interest rate satisfies $r^*\in(0,\rho)$. 
\item \label{thm:fixed-tau-surplus-zero} If $\tau=0$, then there exists a continuum of stationary Huggett equilibria with $\tau=\tau^*$ whose interest rates $r^*$ satisfy $r^*\in(-\infty,\underline{r}]\cup\{0\}$. There is a unique zero-debt Aiyagari equilibrium (i.e., $B^*=0$). An Aiyagari equilibrium with $r^*=0$ exists if and only if $A(0,1,0)\geq \alpha/((1-\alpha)\delta)$. If the inequality is strict, this gives a second equilibrium with $B^*>0$; if equality holds, it coincides with the zero-debt equilibrium.
\end{enumerate}
\end{Thm}

\begin{Thm}[Answer to \eqref{question-given-tau} for deficits] 
\label{thm:fixed-tau-deficit}
Let Assumptions \ref{asm:standing}, the no-borrowing limit, and CRRA utility hold.

\begin{enumerate}[(i)]
\item \label{thm:fixed-tau-deficit-small} Assume $\gamma\le1$ and $\tau<0$. A necessary condition for the existence of an equilibrium is that $\underline{r}<0$, in which case at least two Huggett equilibria exist for $\tau<0$ close to zero. In the case of Aiyagari equilibria, the same holds if additionally $\alpha$ is close to zero.

\item \label{thm:fixed-tau-deficit-multiple-equilibria}
Assume $d\ge2$ is an even integer. For any $\gamma>0$, there exists an income process with $d$ states satisfying Assumption \ref{asm:standing} and a primary deficit $\tau<0$ such that there are at least $d$ many stationary Huggett equilibria.

\item \label{thm:fixed-tau-deficit-large} Assume $\gamma\le1$. For sufficiently large deficits $\tau<0$, no equilibrium exists.
\end{enumerate}
\end{Thm}

We note that for primary deficits $\tau<0$, any equilibrium interest rate $r^*$ is strictly negative, $r^*<0$. 

We continue to illustrate Theorem \ref{thm:fixed-tau-surplus} \eqref{thm:fixed-tau-surplus-positive} and Theorem \ref{thm:fixed-tau-deficit} \eqref{thm:fixed-tau-deficit-small} and \eqref{thm:fixed-tau-deficit-large} using numerical computations. In the Huggett model, $w=1$ and we set $A(r,\tau) = A(r,1,\tau)$.
Note that asset market clearing together with the government's budget constraint imply that in any Huggett equilibrium $\Xi^*=(\tau^*, B^*, r^*, c^*, G^*)$ with $\tau^*=\tau$, we must have $r^*A(r^*,\tau)=\tau$. If $r^*$ is non-zero, we get the following condition for the interest rate:
$$
A(r^*,\tau)=\frac{\tau}{r^*}.
$$
We plot both sides of this equation in Figure \ref{fig:huggett-unique-eq} for a positive value of $\tau>0$, illustrating  Theorem \ref{thm:fixed-tau-surplus} \eqref{thm:fixed-tau-surplus-positive}. Note that there exists a unique interest rate that equates asset demand and supply.

\begin{figure}[h]
\centering
\includegraphics[width=0.5\linewidth]{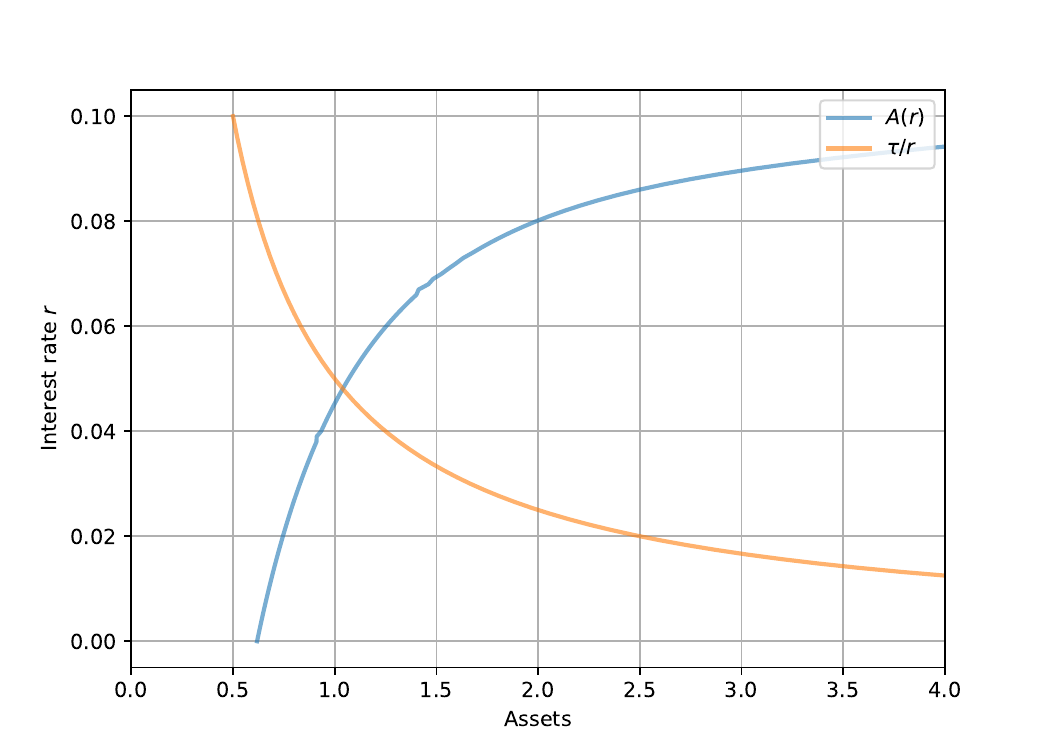}
\caption{{\bf (Huggett)} Plot of $r\mapsto A(r,\tau)$ and $r\mapsto \tau/r$ for $\tau\in(0,1)$.}
\label{fig:huggett-unique-eq}
\end{figure}

In Figure \ref{fig:huggett-negative-tau}, we illustrate Theorem \ref{thm:fixed-tau-deficit} \eqref{thm:fixed-tau-deficit-small} and \eqref{thm:fixed-tau-deficit-large}. Depending on the value of $\tau<0$, there might exist no or multiple equilibria.

\begin{figure}[H]
\centering
\begin{subfigure}[t]{0.495\textwidth}
\centering
\includegraphics[width=\linewidth]{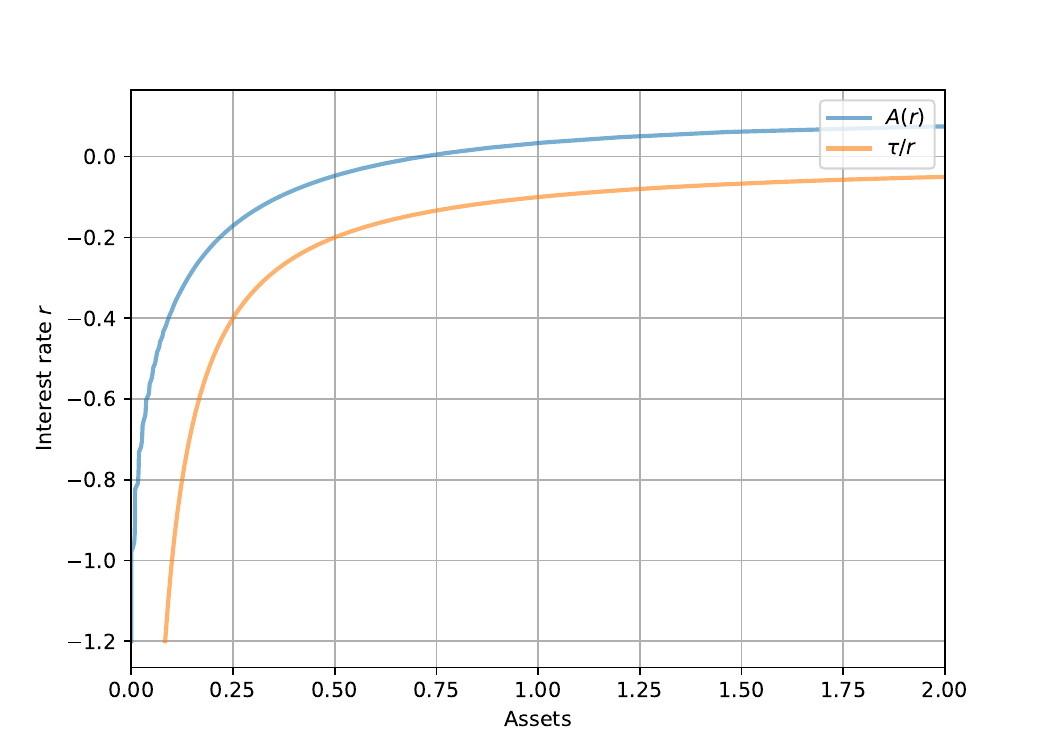}
\caption{No equilibria for sufficiently large $|\tau|$}
\label{fig:huggett-no-eqNLL}
\end{subfigure}
\begin{subfigure}[t]{0.495\textwidth}
\centering
\includegraphics[width=\linewidth]{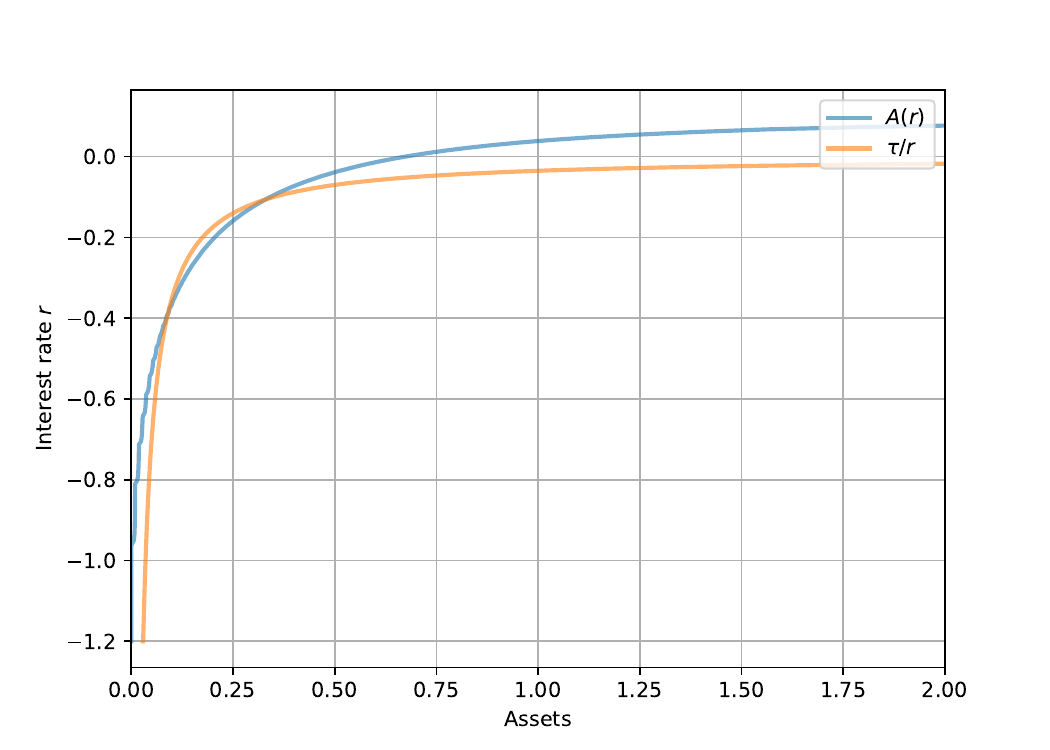}
\caption{Two equilibria for sufficiently small $|\tau|$}
\label{fig:huggett-multiple-eq}
\end{subfigure}
\caption{{\bf (Huggett)} Plots of $r\mapsto A(r,\tau)$ and $r\mapsto \tau/r$ for two values of $\tau<0$.}
\label{fig:huggett-negative-tau}
\end{figure}

We continue to illustrate these results numerically. In Section \ref{ssec:proof-surplus}, we see that Aiyagari equilibria $\Xi^*$ are characterized by values of $r^*$ that satisfy $A(r^*,1,\tau)=S(r^*)$ where 
$$
S(r):=\frac{\alpha}{1-\alpha} \,  \frac{1}{r+\delta} + \frac{\tau}{r},\qquad r\in(-\delta,\infty)\setminus\{0\}.
$$

Figure \ref{fig:aiyagari-negative-tau} illustrates Theorem \ref{thm:fixed-tau-deficit} \eqref{thm:fixed-tau-deficit-small} and \eqref{thm:fixed-tau-deficit-large} for the Aiyagari model: For sufficiently small $\alpha$ and $\tau<0$ there exist two equilibria in which $r^*<0$ and when $\alpha$ is sufficiently close to $1$, there are no equilibria. Note that intersection points that correspond to positive interest rates do not satisfy the government's budget constraint and the requirement $B^*\ge0$.

\begin{figure}[h]
\centering
\begin{subfigure}[t]{0.495\textwidth}
\centering
\includegraphics[width=\linewidth]{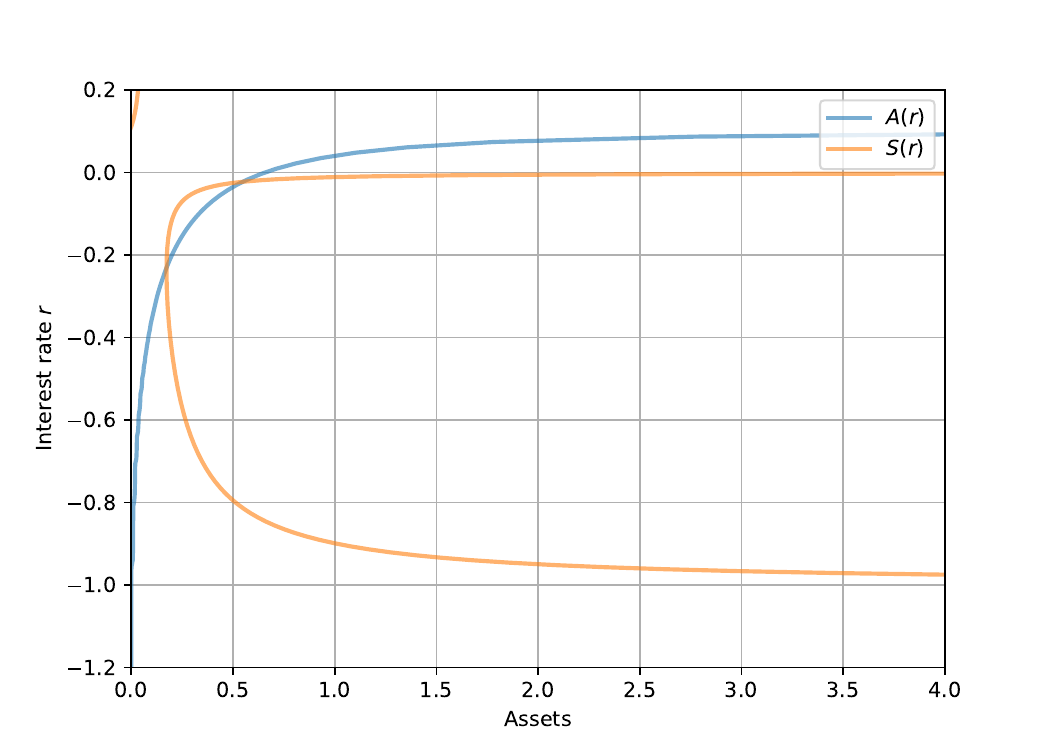}
\caption{Multiple equilibria with negative interest rate}
\label{fig:aiyagari-very-negative-tau}
\end{subfigure}
\begin{subfigure}[t]{0.495\textwidth}
\centering
\includegraphics[width=\linewidth]{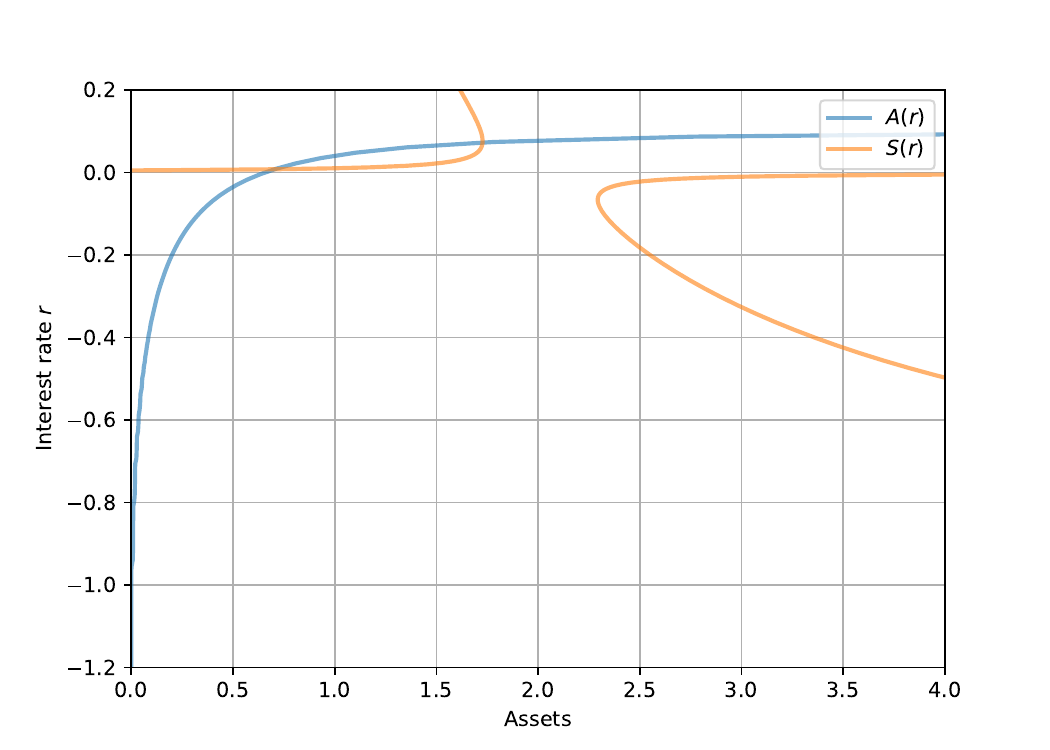}
\caption{No equilibria for a high $\alpha$}
\label{fig:aiyagari-neg-tau-high-alpha}
\end{subfigure}
\caption{{\bf (Aiyagari)} Plots of $r\mapsto A(r,1,\tau)$ and $r\mapsto S(r)$ for $\tau<0$ and  two values of $\alpha\in(0,1)$}
\label{fig:aiyagari-negative-tau}
\end{figure}

We finally discuss the relationship between stationary Huggett and Aiyagari equilibria.

\begin{Thm}\label{thm:huggett-to-aiyagari}
Let Assumption \ref{asm:standing}, the no-borrowing limit and CRRA utility with $\gamma\le1$ hold. For given $\tau\in(0,1)$, Aiyagari equilibria converge to the (unique) Huggett equilibrium as $\alpha\downarrow0$. Next, let $\tau<0$, $\alpha_n\downarrow0$, and let $r_n$ be any Aiyagari equilibrium interest rate corresponding to $\alpha_n$. If $\alpha_n/(r_n+\delta)\to0$, then any limit point of Aiyagari equilibria as $n\to\infty$ is a Huggett equilibrium. In this equilibrium, the interest rate $r^*$ satisfies $r^*\geq-\delta$.
\end{Thm}

\subsection{Discussion and related literature} \label{ssec:literature}

This work sits at the intersection of three streams of literature: (i) the Fiscal Theory of the Price Level (FTPL), (ii) the continuous-time analysis of heterogeneous-agent models and (iii) the mathematical theory of mean-field games.

We contribute to the question of price-level determinacy within FTPL models. Building on Brunnermeier, Merkel, and Sannikov \cite{brunnermeier2020fiscal}, who study uniqueness with idiosyncratic return risk in low-interest-rate environments, we adapt the setting to a framework where incomplete markets arise from idiosyncratic, uninsurable income shocks in the tradition of Bewley-Huggett-Aiyagari. In the version of our model without capital, we follow the setup of Kaplan, Nikolakoudis, and Violante \cite{kaplan2023price}. We complement their treatment with a mathematical analysis of an Aiyagari model with capital. For further references in the macroeconomic literature, see \cite{kaplan2023price}, and for additional background on the Fiscal Theory of the Price Level, see Cochrane \cite{cochrane2023fiscal}. We also mention the related work of Farmer \cite{farmer2024money}, who introduces fiat money into a discrete-time heterogeneous-agent economy with uninsurable income risk and shows that sufficiently high income uncertainty can generate both monetary and non-monetary steady states. 

Our analysis of the continuous-time economy builds on Shigeta’s recent preprints \cite{shigeta-control-problem,shigeta-equilibrium} which establish existence and uniqueness of an invariant distribution in a closely related framework. In contrast to the bounded-utility assumption of Shigeta, our analysis accommodates more general---possibly unbounded---utility functions and more general borrowing constraints. The explicit characterization of a hand-to-mouth economy in Theorem \ref{thm:properties-G} \eqref{thm:properties-G-hand-to-mouth} is novel to our knowledge. The other statements in Theorem \ref{thm:properties-G} appear in related contexts, and we provide rigorous proofs in the present continuous-time context. We draw extensively on the theory of viscosity solutions and stochastic optimal control, relying in particular on the notion of constrained viscosity solutions introduced by Soner \cite{soner1986optimal-I,soner1986optimal-II}. The failure of monotonicity of aggregate savings as a function of the interest rate, see Theorem \ref{thm:A-can-decrease-gamma-greater-one}, is new to the best of our knowledge and addresses an open question by Achdou, Han, Lasry, Lions, and Moll \cite{achdou2022income}.
The equilibrium analysis for the Huggett model for primary surpluses recovers results by Kaplan, Nikolakoudis, and Violante \cite{kaplan2023price}. In the case of primary deficits, they note that the existence of more than two equilibria cannot be ruled out. We provide a novel and explicit construction of a Huggett economy that confirms this. Such an economy can admit at least as many deficit equilibria as income states, see Theorem \ref{thm:fixed-tau-deficit} \eqref{thm:fixed-tau-deficit-multiple-equilibria}. 

Related work on the existence and uniqueness of stationary equilibria in heterogeneous-agent models includes the discrete-time analysis of \cite{acikgoz2018existence} and the continuous-time treatment of \cite{achdou2022income}. In the two-state income case $|\cZ|=2$, Bayer, Rendall, and W\"{a}lde \cite{bayer2019invariant} also proved the existence and uniqueness of an invariant measure.  Finally, we refer to Ambrose \cite{ambrose2021existence} and Chen et al.\ \cite{chen2025stochastic} for related time-dependent analyses.

\subsection{Connection to mean-field games} \label{ssec:MFG-connection}

This section formulates the problem of finding stationary Huggett and Aiyagari equilibria  as a mean-field game. We begin by recalling a generic mean-field game framework. Let $U$ be a control set, let $W$ be a Brownian motion and $\cN$ a stationary Poisson random measure. Fix an initial distribution $\mu_0\in\sP(\R^n)$, where $n\geq1$ denotes the dimension of the state process. Given a flow of probability distributions $\boldsymbol{\nu}=(\nu_t)_{t\geq0}$ on $\R^n\times U$, consider the stochastic optimal control problem
\begin{equation}\label{eq:MFG-control-problem}
\sup_{\alpha}\ \E\left[\int_0^\infty f(t,X^\alpha_t,\nu_t,\alpha_t)\,\d t\right],
\end{equation}
subject to some controlled dynamics 
$$
\d X^\alpha_t = b(t,X^\alpha_t,\nu_t,\alpha_t)\,\d t + \sigma(t,X^\alpha_t,\nu_t,\alpha_t)\,\d W_t+\int \xi(t-,X^\alpha_{t-},\nu_{t-},\alpha_t,\zeta)\,\cN(\d t,\d \zeta),\quad X_0\sim\mu_0.
$$ 
Here, the control $\alpha$ is a random process taking values in $U$. Apart from measurability and integrability requirements, one might further restrict the set of admissible controls by imposing state constraints.
Furthermore, $(b,\sigma,\xi)$ are coefficients of appropriate dimensions. We call the flow $\boldsymbol{\nu}$ a \emph{mean-field game Nash equilibrium} or \emph{solution to the mean-field game} starting from $\mu_0$ if 
$$
\nu_t=\mathrm{Law}(X^{\alpha^*}_t,\alpha^*_t),\qquad t\geq0,
$$
where $\alpha^*$ is an optimal admissible control given $\boldsymbol{\nu}$. A \emph{stationary mean-field game problem} looks for solutions to the mean-field game that are constant in time, in which case the initial condition $\mu_0$ is not given, but part of the solution.

We now showcase the connection to mean-field games in the Huggett model.  Fix a function $\tau(\cdot)$. Then, we can view stationary monetary Huggett equilibria, i.e.\ equilibria in which real government debt is strictly positive, as stationary solutions of the following mean-field game. Recall that $d$ denotes the number of possible income states. Let $\cN$ be a stationary Poisson random measure on $[0,\infty)\times\R^d$ with intensity measure 
$$
\nu(B):=\sum_{j=1}^d \, \mathrm{Leb}(B\cap S_j),\quad B\in\sB(\R^d),
$$ 
where $ \mathrm{Leb}$ is the one-dimensional Lebesgue measure and $S_j:=\{x\in\R^d\,:\,x_k=0\ \forall k\neq j\}$ is viewed as a subset of the real line. We then define the state $X=(a,z)$ along with
$$
\mathfrak{r}(\mu) := \frac{\int_\sX\tau(z)\,\mu(\d a,\d z)}{\int_\sX a\,\mu(\d a,\d z)},\qquad f(t,c):=e^{-\rho t}u(c),
$$
and coefficients
$$
b(X,\mu,c):= 
\begin{pmatrix}
\mathfrak{r}(\mu)a+z-\tau(z)-c\\
0
\end{pmatrix},
\qquad \sigma\equiv0,\qquad
\xi(X,\zeta) :=
\begin{pmatrix}
0\\
\sum_{j=1}^d(z_j-z)\,\chi_{(0,\lambda(z,z_j))}(\zeta_j)
\end{pmatrix},
$$
for $(t,X,\mu,c,\zeta)\in\R_+\times\sX\times\sP(\sX)\times(0,\infty)\times\R^d$ satisfying $0<\int a\mu(\d a,\d z)<\infty$. Note that, indeed, with this construction the second component of the state process $X$ follows a stationary Markov chain with rates given by $\lambda(\cdot,\cdot)$.  

To make the connection to mean-field games, we relax the equilibrium definitions in Subsection \ref{ssec:models} and only require the existence of an optimal control starting from a fixed initial condition $(a_0,z_0)\sim G^*$. More precisely, instead of a feedback function $c^*(a,z)$ we consider an open-loop control $(c^*_t)_{t\geq0}$ that is admissible starting from an initial condition $(a_0,z_0)\sim G^*$. A priori, this makes the notion of equilibria dependent on the underlying probabilistic structure. Usually, one appeals to results on the \emph{law-invariance} of the typical household's problem, see, for example, \cite{djete2022mckean, hofer2024optimal}, to obtain a definition that is independent of the underlying probabilistic structure. In the cases we study, optimal controls are naturally of feedback form, and hence a law-invariance principle automatically holds. We have the following result, which can be deduced along the lines of Lemma \ref{lem:Walras-Huggett}.

\begin{Prop}
Given a tax-and-transfer function $\tau(\cdot)$, let $\mu^*\in\sP(\sX)$ satisfy $0<\int a \mu^*(\d a,\d z)<\infty$. Then, $\mu^*$ is a stationary equilibrium of the above mean-field game if and only if there exists a stationary monetary Huggett equilibrium $\Xi=(\tau(\cdot),B,r,(c_t)_{t\geq0},G)$ with $G=\mu^*$.  
\end{Prop}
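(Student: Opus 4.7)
The plan is to translate each clause of the Huggett equilibrium definition into the corresponding MFG consistency condition and vice versa, with the interest-rate functional $\mathfrak{r}$ serving as the bridge between the two formulations.

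I first tackle the direction ``stationary MFG equilibrium $\Rightarrow$ monetary Huggett equilibrium''. Given $\mu^*$ with $0 < \int a\,\mu^*(\d a,\d z) < \infty$, I define
$$
r := \mathfrak{r}(\mu^*), \qquad B := \int_\sX a\,\mu^*(\d a,\d z), \qquad G := \mu^*,
$$
and let $(c^*_t)_{t\ge 0}$ denote the open-loop optimal control produced by the MFG fixed point (cf.\ Remark \ref{rmk:open-vs-closed-loop}). Because the flow $\mu^*$ enters the controlled dynamics and running reward only through the scalar $\mathfrak{r}(\mu^*) = r$, the MFG optimization problem with $\nu_t \equiv \mu^*$ is literally the Huggett household problem at $(r,\tau)$, which yields Huggett item (1). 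The MFG stationarity $\mathrm{Law}(X^{c^*}_t) = \mu^*$ for all $t\ge 0$ is Huggett item (2). For item (3) I unfold the definition of $\mathfrak{r}$:
$$
r B \;=\; \mathfrak{r}(\mu^*) \int_\sX a\,\mu^*(\d a,\d z) \;=\; \int_\sX \tau(z)\,\mu^*(\d a,\d z) \;=\; \int_\sX \tau(z)\,G(\d a,\d z).
$$
Asset market clearing $A=\int a\,G = B$ is immediate from the construction of $B$, while goods market clearing $C^* = Z$ is Walras' law: integrating the budget ODE $\d a_t = [r a_t + z_t - \tau(z_t) - c^*_t]\,\dt$ against the stationary distribution $\mu^*$ gives $r A + Z - \int \tau(z)\,G - C^* = 0$, which together with item (3) forces $C^* = Z$.

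Conversely, given a monetary Huggett equilibrium $\Xi$, I set $\mu^* := G$. Asset market clearing combined with the government's budget constraint yields
$$
\mathfrak{r}(\mu^*) \;=\; \frac{\int \tau(z)\,G(\d a,\d z)}{\int a\,G(\d a,\d z)} \;=\; \frac{rB}{B} \;=\; r,
$$
where I use $B>0$ for monetary equilibria. The frozen MFG problem at $\nu_t \equiv \mu^*$ is therefore identical to the Huggett household problem at $(r,\tau)$, so optimality of $c^*$ in the Huggett sense is optimality in the MFG sense. Huggett consistency is precisely stationarity of $\mu^*$ under the optimally controlled dynamics, exhibiting $\mu^*$ as a stationary MFG equilibrium.

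The one step that is not pure bookkeeping is the Walras' law argument in the forward direction, since it requires justifying that one may integrate the drift of $a_t$ against $\mu^*$. Finiteness of $\int a\,\mu^*$ controls the linear term in $a$, while integrability of the optimal consumption $c^*(a,z)$ under $\mu^*$ follows from the regularity and growth estimates on optimal consumption rules established earlier in the manuscript. Rather than reproducing this computation, I would cite Lemma \ref{lem:Walras-Huggett}, which packages the stationarity-plus-integrability argument in the Huggett setting and closes the proof.
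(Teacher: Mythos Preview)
Your proposal is correct and follows exactly the route the paper indicates: the paper does not give a standalone proof but merely states that the result ``can be deduced along the lines of Lemma~\ref{lem:Walras-Huggett}'', and your argument is precisely that deduction, with the interest-rate functional $\mathfrak{r}$ translating between the two formulations and the stationarity-plus-budget integration supplying goods-market clearing. Your write-up is in fact more explicit than the paper's own treatment.
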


We emphasize that $\mu^*$ can be seen as a fixed point of the following mapping
\begin{align*}
\mu \longmapsto \mathfrak{r}(\mu) \mapsto \text{optimal state}\ X^*\ \text{given}\ r=\mathfrak{r}(\mu)\longmapsto \text{stationary measure}\ \mu^* \ \text{of}\ X^*. 
\end{align*}
In Subsection \ref{ssec:main-results-invariant-distribution} we will see that this indeed is a well-defined map. We conclude with a remark on common terminology.

\begin{Rmk}[Typical vs representative]
{\rm
In the mean-field literature, it is common to call a household that faces the optimal control problem \eqref{eq:MFG-control-problem} a \emph{representative} household. This is motivated by the fact that one imagines a mean-field of households, each solving the same optimal control problem but with independent idiosyncratic noise processes. In macroeconomic theory, however, a \emph{representative household} refers to a model without idiosyncratic shocks so that all households are identical, see \cite{kaplan2023price}, and we follow Acemoglu \cite[Chapter 17.4]{acemoglu2008introduction} in replacing the phrase \emph{representative household} by \emph{typical household}.
}
\end{Rmk}

\subsection{Structure}

In Section \ref{sec:household-problem}, we analyze the stochastic optimal control problem of a typical household. In particular, we prove Theorem \ref{thm:household}. In Section \ref{sec:invariant-distribution}, we prove the existence and uniqueness of an invariant distribution of the optimal state process, see Theorem \ref{thm:invariant-distribution}. We furthermore establish various properties of this cross-sectional distribution recorded in Theorem \ref{thm:properties-G} and \ref{thm:A-can-decrease-gamma-greater-one}. Finally, Section \ref{sec:general-equilibrium} uses the results established in the previous sections to prove the statements about stationary Huggett and Aiyagari equilibria recorded in Propositions and Theorems \ref{prop:fixed-r}--\ref{thm:huggett-to-aiyagari}.

\section{Household problem}\label{sec:household-problem}

This section analyzes the optimal control problem of a typical household for given values of $(r,w,\tau)$. Throughout this section we work under Assumptions \ref{asm:standing}, \ref{asm:borrowing-constraint} and \ref{asm:utility-function}. Furthermore, we assume that the interest rate $r$ satisfies $r\leq \rho$. Without loss of generality we set $w=1$ and $\tau=0$ in this section since they only rescale the income space $\cZ$. For a generic borrowing limit $\lba$, the state space is defined as
$$
\sX=[\lba,\infty)\times\cZ.
$$

Let $(\Omega,\sF,\F,\P)$ be a complete filtered probability space supporting the stationary Markov chain $(z_t)_{t\geq0}$. We assume that $\F=(\sF_t)_{t\ge0}$ satisfies the usual conditions with respect to $(\sF,\P)$. One may for example complete the right-continuous natural filtration $\sF^\circ_t$ of $(z_t)_{t\ge0}$ with $(\sF,\P)$-null sets $\cN$ and let $\sF_t := \sigma(\sF^\circ_t\cup\cN)$. 

For an initial condition $(a,z)\in\sX$ and a positive process $(c_t)_{t\ge0}$, define the controlled wealth process $\ba = \ba^{a,z,\bc} = (a^{a,z,\bc}_t)_{t\ge0}$ by
\begin{equation}\label{eq:hh-problem-state-equation}
a_t = a_0+\int_0^t(r a_s + z_s- c_s)\,\d s,\qquad z_0=z.
\end{equation}
We define the household's lifetime utility by 
\be\label{eq:hh-problem-lifetime-utility}
J(a,z,\bc) := \E_{a,z}\left[\int_0^\infty \! e^{-\rho t}u(c_t)\,\d t\right],\qquad (a,z)\in\sX.
\ee

\begin{Def}[Admissibility and optimality of controls]\label{def:adm-controls}
{\rm
\begin{enumerate}[(i)]
\item A progressively measurable stochastic process $\bc=(c_t)_{t\geq0}:\Omega\times[0,\infty)\mapsto(0,\infty)$ is an \emph{admissible control} or \emph{admissible consumption process} for the initial condition $(a,z)\in\sX$ if it is locally integrable in time, almost surely, and the corresponding controlled wealth process $ (a^{a,z,\bc}_t)_{t\ge0}$ satisfies $a^{a,z,\bc}_t\geq \underline{a}$ for all $t\geq0$ almost surely. The set of admissible open-loop controls for the initial condition $(a,z)$ is denoted by $\sC(a,z)$. A control $\bc^*\in\sC(a,z)$ is an \emph{optimal control starting from $(a,z)$} if it maximizes \eqref{eq:hh-problem-lifetime-utility} over $\sC(a,z)$.
\item  A continuous function $c:\sX\mapsto (0,\infty)$ is an \emph{admissible consumption policy} if the random ordinary differential equation
$$
a^{a,z,c}_t = a_0+\int_0^t(r a^{a,z,c}_s + z_s- c(a^{a,z,c}_s,z_s))\,\d s,\quad z_0=z.
$$
has an almost surely unique solution in the sense of Carath\'eodory for any initial condition $(a,z)\in\sX$ that stays above $\underline{a}$ for all $t\geq0$ a.s. The set of admissible closed-loop controls is denoted by $\sC_{\mathrm{cl}}$. A function $c^*\in\sC_{\mathrm{cl}}$ is an \emph{optimal feedback control} if, for any initial condition $(a,z)\in\sX$, the open-loop control $c^*_t:=c^*(a^{a,z,c^*}_t\!\!,z_t)$, $t\ge0$, maximizes \eqref{eq:hh-problem-lifetime-utility} .
\end{enumerate}
}
\end{Def}

We define the \emph{value function} by
\begin{equation}\label{eq:hh-problem-value-function}
v^*(a,z) := \sup\left\{J(a,z,(c_t)) \, :\, (c_t)\in \sC(a,z)\right\}, \qquad (a,z)\in\sX.
\end{equation}

\begin{Rmk}
{\rm
\begin{enumerate}
\item The set of admissible controls depends on the parameters of the model, such as the interest rate. To emphasize this dependence, we may write $\sC(a,z;r)$ instead of $\sC(a,z)$, and similarly for other parameters.
\item If the utility function is defined on $\R_+$, then we may allow consumption processes and consumption policies to take the value $0$. 
\end{enumerate}
}
\end{Rmk}

The borrowing limit imposes an upper bound on admissible consumption rules as demonstrated by the next lemma.

\begin{Lem}[Household's budget constraint]
Let $(a,z)\in\sX$. Then any $(c_t)\in\sC(a,z)$ satisfies, almost surely,
\be\label{hh-budget}
\int_0^\infty \! e^{-\rho t} c_t\,\dt \leq a_0-\frac{\rho-r}{\rho}\lba+\int_0^\infty \! e^{-\rho t}z_t\,\d t.
\ee
\end{Lem}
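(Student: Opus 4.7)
The plan is to derive a pathwise identity for the discounted wealth $e^{-\rho t}a_t$ via the product rule, then upper bound the right-hand side using the borrowing constraint $a_t\geq\lba$ together with the standing assumption $r\leq\rho$, and finally pass to the limit $T\to\infty$ by monotone convergence.

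First, I would note that $t\mapsto a_t$ is absolutely continuous because, by Definition \ref{def:adm-controls}(i), $\bc$ is locally integrable in time and the coefficients $r a_t + z_t$ are locally bounded (since $z_t\in\cZ$ is bounded). Applying the product rule to $e^{-\rho t}a_t$ and integrating the state equation \eqref{eq:hh-problem-state-equation} from $0$ to $T$ gives the pathwise identity
\begin{equation*}
\int_0^T e^{-\rho t}c_t\,\dt = a_0 - e^{-\rho T}a_T + (r-\rho)\int_0^T e^{-\rho t}a_t\,\dt + \int_0^T e^{-\rho t}z_t\,\dt.
\end{equation*}
Next, since $r-\rho\leq 0$ and $a_t\geq\lba$ almost surely, I would use the pointwise bound $(r-\rho)a_t\leq(r-\rho)\lba$ (the inequality flips because $r-\rho\leq 0$) and the boundary estimate $-e^{-\rho T}a_T\leq -e^{-\rho T}\lba$. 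After integrating the explicit $\lba$-dependent term to $(r-\rho)\lba(1-e^{-\rho T})/\rho$, this yields, almost surely,
\begin{equation*}
\int_0^T e^{-\rho t}c_t\,\dt \;\leq\; a_0 - e^{-\rho T}\lba + (r-\rho)\lba\cdot\frac{1-e^{-\rho T}}{\rho} + \int_0^T e^{-\rho t}z_t\,\dt.
\end{equation*}

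Finally, I would take $T\to\infty$. Monotone convergence applied to the non-negative integrands $e^{-\rho t}c_t$ and $e^{-\rho t}z_t$ sends the two integrals to $\int_0^\infty e^{-\rho t}c_t\,\dt$ and $\int_0^\infty e^{-\rho t}z_t\,\dt$, respectively, while $e^{-\rho T}\lba\to 0$ and $(1-e^{-\rho T})/\rho\to 1/\rho$. The resulting bound is exactly \eqref{hh-budget}. The only conceptual subtlety is that the transversality-type term $e^{-\rho T}a_T$ has no a priori upper bound (wealth may in principle accumulate), but the one-sided estimate $-e^{-\rho T}a_T\leq -e^{-\rho T}\lba\to 0$ is all that is needed to close the argument, and it is precisely this step that produces the $-\tfrac{\rho-r}{\rho}\lba$ contribution of the borrowing limit to the lifetime budget.
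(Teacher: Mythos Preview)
Your proof is correct. It differs from the paper's argument in a small but genuine way: you discount directly at the rate $\rho$, apply the product rule to $e^{-\rho t}a_t$, and then use the two one-sided bounds $a_t\geq\lba$ (inside the $(r-\rho)$-integral) and $a_T\geq\lba$ (for the terminal term) before letting $T\to\infty$. The paper instead first discounts at the rate $r$, obtaining the finite-horizon bound
\[
\int_0^t e^{-rs}c_s\,\ds \leq a_0-e^{-rt}\lba+\int_0^t e^{-rs}z_s\,\ds,
\]
and then converts this into a $\rho$-discounted bound by multiplying through by the density $(\rho-r)e^{-(\rho-r)t}$ and integrating over $t\in[0,\infty)$ via Fubini--Tonelli. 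Your route is arguably more transparent because it avoids the Fubini step and makes the role of the assumption $r\leq\rho$ explicit in the single inequality $(r-\rho)a_t\leq(r-\rho)\lba$; the paper's averaging trick, on the other hand, packages the transversality and drift contributions into one clean identity. Both arguments are equally elementary and yield exactly the same constant $-\tfrac{\rho-r}{\rho}\lba$.
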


\begin{proof}
Solving the state equation and using admissibility of $(c_t)\in\sC(a,z)$, 
$$
e^{-rt}\lba\leq e^{-rt}a_t = a_0 + \int_0^t e^{-rs}[z_s-c_s]\,\ds,\quad\Rightarrow \quad \int_0^t e^{-rs}c_s\,\ds \leq a_0-e^{-rt}\lba+ \int_0^t e^{-rs} z_s\,\ds.
$$
If $r=\rho$, taking $t\to\infty$ yields the claim. If $r<\rho$, multiply both sides by $(\rho-r)\exp(-(\rho-r)t)$ and integrate over $[0,\infty)$ using Fubini-Tonelli's theorem to obtain
\begin{equation*}
\int_0^\infty \! e^{-\rho t} c_t\,\d t = (\rho-r)\int_0^\infty  e^{-(\rho-r)t} \int_0^t e^{-rs}c_s\,\ds \,\dt  \leq a_0-\frac{\rho-r}{\rho}\lba+\int_0^\infty \! e^{-\rho t}z_t\,\d t.  \qedhere
\end{equation*}
\end{proof}

\begin{Prop} \label{prop:basic-prop-value}
For every $z\in\cZ$, the value function $v^*(\cdot,z)$ of \eqref{eq:hh-problem-value-function} is finite, increasing and concave on $[\lba,\infty)$. Furthermore, we have the following upper bound
\be\label{eq:upper-bound-v}
v^*(a,z) \leq \frac{1}{\rho} u(\rho a-(\rho-r)\lba+\bar z),\qquad (a,z)\in\sX.
\ee
In particular, the value function is of sublinear growth. Here, $\bar z:= \max \cZ$.
\end{Prop}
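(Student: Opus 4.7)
The plan is to derive the upper bound first, since it immediately implies both finiteness-from-above and the sublinear growth claim, and then establish monotonicity and concavity directly from the definition by manipulating admissible controls. The density $\rho e^{-\rho t}\dt$ is a probability measure on $[0,\infty)$, so for any $\bc\in\sC(a,z)$, Jensen's inequality applied to the concave function $u$ yields
$$
\int_0^\infty e^{-\rho t} u(c_t)\,\dt = \frac{1}{\rho}\int_0^\infty \rho e^{-\rho t} u(c_t)\,\dt \leq \frac{1}{\rho}\, u\!\left(\rho\int_0^\infty e^{-\rho t} c_t\,\dt\right).
$$
Taking expectations and applying Jensen a second time in $\omega$, followed by the household's budget constraint \eqref{hh-budget} and the uniform bound $z_t\le \ubz$, yields
$$
J(a,z,\bc) \leq \tfrac{1}{\rho}\, u\!\bigl(\rho a - (\rho-r)\lba + \ubz\bigr),
$$
from which the stated bound (with the constant $\ubz$ absorbed into the ``$+1$'') follows by taking the supremum over $\bc$. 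This proves both finiteness from above and, using that $u'(c)\to0$ as $c\to\infty$ (which by concavity forces $u(c)/c\to0$), the sublinear growth in $a$.

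Monotonicity is immediate from the linearity of the wealth equation in $a$: for $\lba\le a_1\le a_2$ and $\bc\in\sC(a_1,z)$, one checks $a^{a_2,z,\bc}_t - a^{a_1,z,\bc}_t = (a_2-a_1)e^{rt}\ge0$, so $\bc\in\sC(a_2,z)$ and $J(a_1,z,\bc)=J(a_2,z,\bc)$ since the utility depends only on $\bc$. Taking the supremum gives $v^*(a_1,z)\le v^*(a_2,z)$. For concavity, fix $\lambda\in[0,1]$, set $a=\lambda a_1+(1-\lambda)a_2$, and choose $\bc_i\in\sC(a_i,z)$. The affine structure of the state equation gives $a^{a,z,\bc}_t = \lambda\, a^{a_1,z,\bc_1}_t + (1-\lambda)\, a^{a_2,z,\bc_2}_t\ge\lba$ with $\bc := \lambda\bc_1+(1-\lambda)\bc_2$, so $\bc\in\sC(a,z)$. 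Concavity of $u$ then gives $J(a,z,\bc)\ge \lambda J(a_1,z,\bc_1)+(1-\lambda)J(a_2,z,\bc_2)$, and taking suprema over $\bc_1$ and $\bc_2$ concludes.

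Finiteness from below requires exhibiting a single admissible control with finite expected utility, so that $v^*(a,z)>-\infty$. For $a>\lba$, I would take a small constant consumption $c_t\equiv\varepsilon>0$; since $z_t\ge\lbz>0$ and $r a_t\ge r\lba$, Assumption \ref{asm:borrowing-constraint} ensures one can pick $\varepsilon$ so that the wealth trajectory stays above $\lba$, giving $v^*(a,z)\ge u(\varepsilon)/\rho$. At the boundary $a=\lba$, the consumption $c_t = r\lba+z_t$ keeps wealth pinned at $\lba$ and is strictly positive under Assumption \ref{asm:borrowing-constraint}, yielding a finite value by boundedness of $z_t$.

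The only mildly delicate point is the double application of Jensen in the presence of possibly unbounded (e.g.\ logarithmic or CRRA with $\gamma<1$) utilities: one must verify that the integrals are well-defined (possibly equal to $-\infty$) before interchanging, which is handled by first working with the truncation $u\vee(-n)$ and passing to the limit via monotone convergence, or by noting that $u$ is concave and bounded above on any set $\{c\ge\varepsilon\}$, so the positive part of $u(c_t)$ is integrable against $e^{-\rho t}\dt$. Otherwise the proof is essentially an exercise in Jensen's inequality and the linearity of the state dynamics.
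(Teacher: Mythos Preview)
Your proof is essentially correct and follows the same route as the paper: Jensen's inequality against the density $\rho e^{-\rho t}\,\dt$ combined with the budget constraint \eqref{hh-budget} for the upper bound, the inclusion $\sC(a_1,z)\subset\sC(a_2,z)$ for monotonicity, and convex combinations of admissible controls for concavity.

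There is one small but genuine slip. The constant ``$+1$'' in \eqref{eq:upper-bound-v} comes from the normalization $\E[z_t]=Z=1$ in Assumption~\ref{asm:standing}(i), not from the pathwise bound $z_t\le\ubz$. After taking expectations in the budget constraint you have $\E\bigl[\int_0^\infty e^{-\rho t}z_t\,\dt\bigr]=1/\rho$, and this is what produces $\rho a-(\rho-r)\lba+1$ inside $u$. Your bound with $\ubz$ in place of $1$ is strictly weaker (since $\ubz>1$ whenever $|\cZ|\ge2$) and cannot be ``absorbed'' into the stated one; it still yields finiteness and sublinear growth, but does not reproduce the exact inequality claimed. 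The fix is immediate: use the mean rather than the maximum of $z_t$ after you pass to expectations.

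For the lower bound, the paper avoids your case split by observing that the single constant control $c_t\equiv(r\lba+\lbz)/2$ is admissible from \emph{every} initial condition, including $a=\lba$, since the drift at the constraint is then $r\lba+z_t-(r\lba+\lbz)/2\ge(r\lba+\lbz)/2>0$.
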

\begin{proof}
\emph{Finiteness.} Define $\mathsf{c}:= (r\lba+\lbz)/2$.
By Assumption \ref{asm:borrowing-constraint}, $\mathsf{c}>0$. Set $c_t:=\mathsf{c}$ for all $t\geq0$. Savings at the borrowing constraint satisfy
$$
r\underline{a}+z_t-\mathsf{c}  \geq \frac{r\lba+\lbz}{2}>0.
$$
Hence the corresponding wealth process $a_t$ stays above $\underline{a}$ and, consequently, $v^*(a,z)>-\infty$ for any initial condition $(a,z)\in\sX$. To prove $v^*(a,z)<+\infty$, let $(c_t)\in\sC(a,z)$ be any admissible consumption process.  By \eqref{hh-budget},
$$
\E\left[\int_0^\infty \! e^{-\rho t} c_t\,\d t\right] \leq a - \frac{\rho-r}{\rho} \, \lba + \frac{\bar z}{\rho}.
$$
Then, Jensen's inequality, concavity of $u(\cdot)$ and positivity of $c_s$ imply
\begin{align*}
\E\left[\int_0^{\infty} e^{-\rho t} u(c_t)\,\d t\right] &\leq \frac{1}{\rho} u\left(\E\left[\int_0^{\infty} \rho e^{-\rho t} c_t\,\d t\right]\right)\leq   \frac{1}{\rho} u\left(\rho a-(\rho-r)\lba + \bar z \right) <\infty.
\end{align*}
Taking the supremum over all admissible controls implies \eqref{eq:upper-bound-v}. Since $u'(c)\to0$ as $c\to\infty$, the value function is of sublinear growth.

\emph{Monotonicity.} This is immediate since $\sC(a_1,z)\subset \sC(a_2,z)$ whenever $\lba \leq a_1\leq a_2<\infty$.

\emph{Concavity.} This is a direct consequence of the linear dynamics and concavity of the utility function.
\end{proof}

We now state the dynamic programming principle (DPP) for our problem. Using the concavity of the value function, $v^*(\cdot,z)$ is continuous on the interior $(\lba,\infty)$. Continuity at the borrowing constraint $a=\lba$, however, is less obvious, and basic versions of the (DPP) do not apply. A proof of continuity is given below in Proposition \ref{prop:diff-boundary} and relies on the dynamic programming principle. In our setting, we still have the (DPP), and we refer to \cite[Appendix A]{gassiat2014investment} for a proof in a more general setting.

\begin{Thm}[Dynamic programming principle] \label{thm:DPP}
For any finite $\F$-stopping time $\tau$ and any $(a,z)\in\sX$,
\begin{equation*}
v^*(a,z) = \sup_{(c_t)\in\sC(a,z)} \E_{a,z}\left[ \int_0^\tau e^{-\rho t} u(c_t) \,\dt + e^{-\rho \tau} v^*(a_\tau,z_\tau)\right].
\end{equation*}
\end{Thm}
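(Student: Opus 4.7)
The plan is to establish the two inequalities separately, combining the strong Markov property of $(a_t,z_t)$ with a measurable selection of near-optimal controls. I work on the canonical setting from Section \ref{ssec:MFG-connection}, where the income process is driven by a Poisson random measure $\cN$ and the post-$\tau$ problem is identified with the original problem started from $(a_\tau,z_\tau)$ via the shift $\theta_\tau$. By Proposition \ref{prop:basic-prop-value}, $v^*(\cdot,z)$ is concave on $[\lba,\infty)$ and $\cZ$ is finite, so $v^*$ is Borel measurable on $\sX$; in particular $v^*(a_\tau,z_\tau)$ is integrable and the right-hand side of the DPP is well-defined, despite the as-yet unproven continuity at the borrowing boundary $a=\lba$.

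For the upper bound, I would fix $\bc\in\sC(a,z)$, split the integral at $\tau$, and apply the strong Markov property to rewrite
$$
\E_{a,z}\!\int_\tau^\infty e^{-\rho t}u(c_t)\,\dt \;=\; \E_{a,z}\!\left[e^{-\rho\tau}\,\E\!\left[\int_0^\infty e^{-\rho s}u(c_{\tau+s})\,\ds\,\Big|\,\sF_\tau\right]\right].
$$
The shifted control $(c_{\tau+s})_{s\ge0}$ is admissible from $(a_\tau,z_\tau)$, so the inner conditional expectation is dominated by $v^*(a_\tau,z_\tau)$, and taking the supremum over $\bc$ yields the desired inequality. For the reverse direction, I would fix $\varepsilon>0$ and use a Jankov--von Neumann type measurable selection (applied to the upper semianalytic functional $(a',z',\bc)\mapsto J(a',z',\bc)$, in the spirit of \cite[Appendix A]{gassiat2014investment}) to choose $\varepsilon$-optimal controls $\widehat\bc^{a',z'}\in\sC(a',z')$ jointly measurable in $(a',z',\omega,t)$. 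For any $\bc\in\sC(a,z)$, the concatenation
$$
\tilde c_t \;:=\; c_t\,\mathbf{1}_{\{t<\tau\}}+\widehat c^{a_\tau,z_\tau}_{t-\tau}\,\mathbf{1}_{\{t\ge\tau\}}
$$
is admissible, because $a_\tau\ge\lba$ by admissibility of $\bc$ and the tail is admissible from $(a_\tau,z_\tau)$; the strong Markov property together with the tower rule then give
$$
v^*(a,z)\;\ge\;J(a,z,\tilde\bc)\;\ge\;\E_{a,z}\!\int_0^\tau e^{-\rho t}u(c_t)\,\dt + \E_{a,z}\!\left[e^{-\rho\tau}\bigl(v^*(a_\tau,z_\tau)-\varepsilon\bigr)\right].
$$
Taking the supremum over $\bc$ and sending $\varepsilon\downarrow0$ closes the proof.

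The main obstacle is the measurable selection step: the admissibility set $\sC(a',z')$ depends on the starting point through the pathwise state constraint $a_t\ge\lba$, so $(a',z')\mapsto\sC(a',z')$ is a non-trivial correspondence rather than a fixed Polish set. The remedy is to represent controls on the canonical Poisson space through a fixed Polish parameter space on which the graph of admissibility is analytic and $J$ is upper semianalytic, which is exactly the setting covered by \cite[Appendix A]{gassiat2014investment}. A secondary integrability issue arises because $u$ is unbounded above, but the Jensen-type bound \eqref{eq:upper-bound-v} ensures finiteness of both $J(a,z,\bc)$ and $\E_{a,z}[e^{-\rho\tau}v^*(a_\tau,z_\tau)]$, so all expectations in the argument are well defined. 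Importantly, continuity of $v^*$ at the borrowing boundary plays no role here and is deferred to the later Proposition \ref{prop:diff-boundary}.
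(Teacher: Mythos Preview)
Your proposal is correct and aligns with the paper's approach: the paper does not give its own proof but simply refers to \cite[Appendix A]{gassiat2014investment}, and your outline is precisely the standard two-inequality argument via the strong Markov property and a measurable selection of $\varepsilon$-optimal controls carried out there. You also correctly flag the two delicate points the paper highlights---that continuity of $v^*$ at $a=\lba$ is not needed for the DPP (and is in fact derived from it later) and that the state-dependent admissibility set requires the analytic-set/upper-semianalytic machinery rather than a naive selection.
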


Equipped with Theorem \ref{thm:DPP}, we are able to prove continuity and even differentiability at the borrowing constraint. Recall the notation $\lbz$ for the lowest income level.

\begin{Prop}\label{prop:diff-boundary}
For every $z\in\cZ$, the limit
\begin{equation}\label{eq:diff-bdry}
v^*_a(\underline{a},z):=\lim_{h\downarrow0}\, \frac{1}{h}(v^*(\lba+h,z)-v^*(\lba,z))\ \ \text{exists in} \ \ [0,\infty).
\end{equation}

\end{Prop}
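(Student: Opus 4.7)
My plan splits the proposition into a soft existence step and a harder quantitative finiteness step. For the existence, Proposition~\ref{prop:basic-prop-value} gives that $a\mapsto v^*(a,z)$ is concave and nondecreasing on $[\lba,\infty)$; concavity implies the forward difference quotient $h\mapsto (v^*(\lba+h,z)-v^*(\lba,z))/h$ is nonincreasing on $(0,\infty)$ so the limit \eqref{eq:diff-bdry} exists in $(-\infty,+\infty]$, and monotonicity further places it in $[0,+\infty]$. The remaining task is to rule out the value $+\infty$, i.e., to establish a Lipschitz-type bound
\[
v^*(\lba+h,z)-v^*(\lba,z)\ \le\ Ch,\qquad h\in(0,h_0],
\]
for some $C=C(z)$.

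To obtain this bound, I would fix $\varepsilon>0$, pick an $\varepsilon$-optimal control $\bc^h\in\sC(\lba+h,z)$ with wealth path $(a^h_t)_{t\ge 0}$, and construct an admissible control $\tilde{\bc}\in\sC(\lba,z)$ whose lifetime utility is within $O(h)$ of $J(\lba+h,z,\bc^h)$. The recipe is to ``return'' the extra $h$ units of wealth by a small consumption cut on a fixed time window $[0,T]$: set $\tilde c_t:=c^h_t-\mu_t$ with $\mu_t:=he^{rt}/T$ on $[0,T]$ and $\tilde c_t:=c^h_t$ afterwards. A direct integration of the linear wealth ODE shows the two trajectories coincide for $t\ge T$ and differ by at most an amount of order $h$ on $[0,T]$. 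The utility gap is then bounded using concavity of $u$:
\[
J(\lba+h,z,\bc^h)-J(\lba,z,\tilde{\bc})\ \le\ \E\!\int_0^T e^{-\rho s}u'(\tilde c_s)\,\mu_s\,ds\ =\ O(h),
\]
provided $\tilde c_s$ stays uniformly bounded below by a positive constant. Sending $\varepsilon\downarrow 0$ then yields $v^*_a(\lba,z)\in[0,\infty)$.

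The main obstacle is the admissibility of $\tilde{\bc}$: (a) the wealth $\tilde a_t$ may dip below $\lba$ on $[0,T]$ if $\bc^h$ happens to spend the extra wealth early, and (b) the shifted consumption $c^h_s-\mu_s$ may be non-positive when $c^h_s$ is small. To address (a), I introduce the first-hit time $\sigma:=\inf\{t\in[0,T]:\tilde a_t=\lba\}$ and, on $\{\sigma\le T\}$, switch from $\tilde{\bc}$ to the constant safe consumption $\mathsf c:=(r\lba+\lbz)/2$, which by Assumption~\ref{asm:borrowing-constraint} is admissible from the boundary and in fact strictly increases wealth; the utility loss incurred on this event is again controlled by the same kind of concavity estimate and remains $O(h)$. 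To address (b), I first pre-process $\bc^h$ so that its consumption is bounded below by a fixed positive constant on $[0,T]$, losing only $O(\varepsilon)$ in value---such a perturbation is available because the Inada condition $u'(c)\to\infty$ as $c\downarrow 0$ makes arbitrarily small consumption strictly suboptimal via a direct exchange argument (shifting an infinitesimal amount of consumption from a time where $u'$ is large to any other time is strictly beneficial).
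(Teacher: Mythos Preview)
Your soft existence step is correct: concavity makes the forward difference quotient monotone, and nondecreasing places the limit in $[0,+\infty]$. The problem is entirely in the finiteness step, and specifically in your handling of obstacle~(a).

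When $\tilde a$ hits $\lba$ at time $\sigma\le T$ and you switch to the safe rate $\mathsf c$, the continuation utility you collect from $\sigma$ onward is $e^{-\rho\sigma}u(\mathsf c)/\rho$, whereas $\bc^h$ collects up to $e^{-\rho\sigma}v^*(a^h_\sigma,z_\sigma)$. Since $\tilde a_t=a^h_t-e^{rt}h(1-t/T)$, at the hitting time $a^h_\sigma=\lba+e^{r\sigma}h(1-\sigma/T)\le\lba+e^{|r|T}h$, so the loss on this event is
\[
e^{-\rho\sigma}\bigl(v^*(\lba,z_\sigma)-u(\mathsf c)/\rho\bigr)+O(h),
\]
which is $O(1)$, not $O(h)$. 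Nor is $\P(\sigma\le T)=O(h)$: an $\varepsilon$-optimal path started at $\lba+h$ may well return to an $O(h)$-neighborhood of $\lba$ with probability bounded away from zero (cf.\ Lemma~\ref{lem:properties-saving-rules}, where optimal savings in the low-income state are strictly negative away from $\lba$). If instead of $\mathsf c$ you switch at $\sigma$ to an \emph{optimal} control from $(\lba,z_\sigma)$, the loss becomes $e^{-\rho\sigma}\bigl(v^*(a^h_\sigma,z_\sigma)-v^*(\lba,z_\sigma)\bigr)$, which is precisely the quantity you are trying to bound---now at a random income state $z_\sigma$---so the argument is circular. Your pre-processing step~(b) is also not fully justified (raising $c^h$ to a floor $\eta$ consumes extra wealth and may itself violate the constraint), but this is secondary to the structural issue in~(a).

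The paper sidesteps all of this by running the comparison in the opposite direction. Rather than adapting a near-optimal control from $\lba+h$ down to $\lba$, it starts at $\lba$ with an explicit safe control that pushes wealth \emph{up} at a fixed rate $\varepsilon>0$, so that $\hat a_t\ge\lba+\varepsilon t$ deterministically for small $t$. The dynamic programming principle (Theorem~\ref{thm:DPP}) then gives
\[
v^*(\lba,z)\ \ge\ \E_{\lba,z}\!\left[\int_0^t e^{-\rho s}u(\delta_1)\,\d s+e^{-\rho t}v^*(\lba+\varepsilon t,z_t)\right],
\]
and expanding $\E_{\lba,z}[v^*(\lba+\varepsilon t,z_t)]$ via the transition rates of $\boldsymbol z$ rearranges to $v^*(\lba+\varepsilon t,z)-v^*(\lba,z)\le Ct$ with an explicit constant involving only $\rho v^*(\lba,\cdot)$, $u(\delta_1)$ and $\sL v^*(\lba,\cdot)(z)$. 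The asymmetry is that moving wealth \emph{away} from the constraint is trivially admissible, while moving it \emph{toward} the constraint---as your perturbation does---risks violating it in a way that cannot be repaired at $O(h)$ cost.
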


\begin{proof}
By Assumption \ref{asm:borrowing-constraint} we may choose  $\delta_1,\delta_2>0$ with $0<\delta_1<\delta_2<r\lba+\lbz$ and define the feedback consumption rule
$$
\hat c(a,z) := \delta_1 + (ra+z-\delta_2)^+,\qquad (a,z)\in\sX.
$$
Set $\varepsilon:=\delta_2-\delta_1>0$. Starting from $(a_0,z_0)=(\underline{a},z)$, $z\in\cZ$, the corresponding wealth process $(\hat a_t )_{t\ge0}$ satisfies
$$
\d \hat a_t = [r \hat a_t + z_t - \hat c(\hat a_t,z_t)]\,\dt = 
\begin{cases}
\varepsilon\,\dt, & r \hat a_t+z_t\geq\delta_2,\\
[r \hat a_t +z_t -\delta_1]\,\dt, & r \hat a_t+z_t<\delta_2.
\end{cases}
$$
At $t=0$, $\d \hat a_t = \varepsilon\,\d t>0$. If $r\geq0$, then $\hat a_t\geq \lba +\varepsilon t$ for all $t\geq0$. If $r<0$, it is clear that there exists $t_*>0$ such that $\hat a_t\geq\lba+\varepsilon t$ on $t\in[0,t_*]$. We let $\hat c_t:=\hat c(\hat a_t,z_t)$ be the corresponding open-loop control. By the dynamic programming principle, for any $t\in[0,t_*]$,
\begin{align*}
v^*(\lba,z) &\geq \E_{\lba,z}\left[\int_0^t e^{-\rho s} u(\hat c_s)\,\d s + e^{-\rho t} v^*(\hat a_t,z_t)\right] \\
&\geq \E_{\lba,z}\left[\int_0^t e^{-\rho s} u(\delta_1)\,\d s + e^{-\rho t} v^*(\lba+\varepsilon t,z_t)\right]= \frac{1}{\rho}(1-e^{-\rho t}) u(\delta_1) + e^{-\rho t} \E_{\lba,z}[v^*(\lba+\varepsilon t,z_t)],
\end{align*}
where we have used monotonicity of $u$ and $v^*$ in the second line. Using the relation
$$
\E_{\lba,z}[v^*(\lba+\varepsilon t,z_t)] = \sum_{y\neq z}\P_{z}(z_t=y)
[v^*(\lba+\varepsilon t,y)-v^*(\lba+\varepsilon t,z)] + v^*(\lba+\varepsilon t,z)
$$
we obtain
\begin{equation*}
e^{-\rho t} v^*(\lba+\varepsilon t,z ) - v^*(\lba,z) \leq - \frac{1}{\rho}(1-e^{-\rho t}) u(\delta_1) - e^{-\rho t}\sum_{y\neq z}\P_{z}(z_t=y)[v^*(\lba+\varepsilon t, y)-v^*(\lba+\varepsilon t,z)] .
\end{equation*}
We first establish continuity at the borrowing constraint. To this end, note that
\begin{equation}\label{eq:proof-diff-bdry-I}
\P_{z}(z_t=y)  = \lambda(z,y)\,t + o(t),\quad \text{as} \ t\downarrow0,\qquad y\neq z,
\end{equation}
where $o(t)/t\to0$ as $t\downarrow0$. Then, since the value function is increasing and locally bounded,
\begin{align*}
0&\le \lim_{t\downarrow0}\, \,(v^*(\lba+\varepsilon t,z)- v^*(\lba,z)) \\
&= \lim_{t\downarrow0}\, (1-e^{-\rho t}) v^*(\lba+\varepsilon t,z) + \lim_{t\downarrow0} \, e^{-\rho t}v^*(\lba+\varepsilon t,z)-v^*(\lba,z)\\
&\leq \lim_{t\downarrow0} \,\Big(\!\!-\frac{1}{\rho}(1-e^{-\rho t}) u(\delta_1) - e^{-\rho t}\sum_{y\neq z}\P_{z}(z_t=y)[v^*(\lba+\varepsilon t, y)- v^*(\lba+\varepsilon t,z)]\Big)=0.
\end{align*}
This shows continuity of $v^*(\cdot,z)$ at $\lba$. Together with \eqref{eq:proof-diff-bdry-I}, this implies
$$
\lim_{t\downarrow0}\,  \sum_{y\neq z}\frac{\P_{z}(z_t=y)}{ t}[v^*(\lba+\varepsilon t, y)-v^*(\lba+\varepsilon t,z)]  =  \sL v^*(\lba,\cdot)(z).
$$
By concavity, the limit in \eqref{eq:diff-bdry} exists in $[0,\infty]$, hence we only need to exclude the possibility that it is equal to $\infty$. Using the previous results, as $t\downarrow0$,
\begin{align*}
0&\leq\frac{1}{\varepsilon t}  (v^*(\lba+\varepsilon t,z)-v^*(\lba,z)) \\
&\leq \frac{1-e^{-\rho t}}{\varepsilon t} (v^*(\lba+\varepsilon t,z) - \frac{1}{\rho}u(\delta_1)) - e^{-\rho t}\sum_{y\neq z}\frac{\P_{z}(z_t=y)}{\varepsilon t} \, (v^*(\lba+\varepsilon t, y)-v^*(\lba+\varepsilon t,z)) \\
&\to \frac{1}{\varepsilon}\left( \rho v^*(\lba,z)-u(\delta_1) - \sL v^*(\lba,\cdot)(z)\right)<\infty.
\end{align*}
For example, choosing $\delta_1=(r\lba+\lbz)/2$ and letting $\delta_2\uparrow (r\lba+\lbz)$, leads to the explicit bound
\begin{equation}\label{eq:diff-bdry-II}
v^*_a(\lba,z) \leq \frac{2}{r\lba+\lbz}(\rho v^*(\lba,z)-u((r\lba+\lbz)/2) - \sL v^*(\lba,\cdot)(z)).
\end{equation}
\end{proof}

\subsection{Viscosity characterization}\label{ssec:viscosity}
In this section, we set $\cO:=(\lba,\infty)$, so that $\sX=\bar\cO\times\cZ$. To introduce the notion of \emph{constrained viscosity solution}, set
\be\label{eq:hh-problem-HJB-operator}
F(a,z,v,p):= \rho v(z)- H(p) - p (ra+z) -  (\sL v)(z)
\ee
for $(a,z,v,p)\in\sX\times\R^d\times\R$.
We consider the following Hamilton-Jacobi-Bellman equation:
\begin{equation}\label{eq:hh-problem-HJB}
F(a,z,v(a,\cdot),v_a(a,z))=0.
\end{equation}

\begin{Def}[Constrained viscosity solution]\label{def:constraint-viscosity}
{\rm
A continuous function $v:\sX\mapsto\R$ is called a \emph{constrained viscosity solution} of \eqref{eq:hh-problem-HJB} if it is a \emph{subsolution} on $\bar \cO$ and a \emph{supersolution} on $\cO$. Here:
\begin{enumerate}[(1)]
\item $v$ is a \emph{viscosity subsolution on $\bar \cO$} of \eqref{eq:hh-problem-HJB} if the following holds for every $z_*\in\cZ$. Let $a_*\in\bar\cO$ and a function $\varphi\in C^1(\bar\cO)$ be given. Then
$$
v(a_*,z_*)-\varphi(a_*) = \max_{\bar\cO}(v(\cdot,z_*)-\varphi)\quad \Rightarrow \quad F(a_*, z_*, v(a_*,\cdot), \varphi_a(a_*)) \leq 0.
$$
\item $v$ is a \emph{viscosity supersolution on $\cO$} of \eqref{eq:hh-problem-HJB} if the following holds for every $z_*\in\cZ$. Let $a_*\in \cO$ and a function $\varphi\in C^1(\cO)$ be given. Then
$$
v(a_*,z_*)-\varphi(a_*) = \min_{\cO}(v(\cdot,z_*)-\varphi)\quad \Rightarrow \quad F(a_*, z_*, v(a_*,\cdot), \varphi_a(a_*)) \geq 0.
$$
\end{enumerate}
}
\end{Def}

\begin{Rmk}
{\rm 
It is classical that replacing the term \emph{maximum} in Definition \ref{def:constraint-viscosity} by \emph{(strict) global maximum} or \emph{(strict) local maximum} or imposing ``$v(a_*,z_*)-\varphi(a_*)=0$'' yield equivalent definitions. An analogous statement holds for \emph{minima}. We will make use of these observations in the following.
}
\end{Rmk}

\noindent
As explained in Soner \cite{soner1986optimal-I}, the subsolution property on $\bar\cO$ imposes a boundary condition on the derivative.
Indeed, assume that $v$ is a constrained viscosity solution of \eqref{eq:hh-problem-HJB} which is continuously differentiable on $\bar\cO$. Let $z\in\cZ$. Then, for any $\varphi$ such that $v(\cdot,z)-\varphi$ has a maximum in $a=\lba$ so that $v_a(\lba,z)+\alpha=\varphi_a(\lba)$ for some $\alpha\ge0$. By the subsolution property at $\lba$, $F(\lba,z,v(\lba,\cdot),\varphi_a(\lba))\leq 0=F(\lba,z,v(\lba,\cdot),v_a(\lba,z))$. This implies that $H(v_a(\lba,z))\geq H(v_a(\lba,z)+\alpha) + \alpha(r\lba+z)$. Since $H$ is differentiable, this translates into $H'(v_a(\lba,z))\leq r\lba+z$. By Legendre–Fenchel duality, $H'(p)=-(u')^{-1}(p)$, so that $v_a(\lba,z)\geq u'(r\lba+z)$ as in \eqref{eq:HJB}.

The following result follows from the dynamic programming principle, see Soner \cite{soner1986optimal-I,soner1986optimal-II}. For closely related
continuous-time consumption problems with finite-state Markov-chain uncertainty and borrowing constraints, see Shigeta \cite[Proposition 7]{shigeta-control-problem} and Gassiat, Gozzi, and Pham \cite[Proposition 4.1]{gassiat2014investment}.

\begin{Thm}\label{thm:value-is-constr-viscosity}
The value function $v^*$ of \eqref{eq:hh-problem-value-function} is a constrained viscosity solution of \eqref{eq:hh-problem-HJB}.
\end{Thm}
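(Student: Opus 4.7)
The plan is to run the standard two-step viscosity argument based on the dynamic programming principle (Theorem~\ref{thm:DPP}) combined with Dynkin's formula: derive the supersolution property on $\cO$ by testing against constant controls, and the subsolution property on $\bar\cO$ by testing against $\varepsilon$-optimal controls. Continuity of $v^*$, needed for the very notion of a viscosity solution, is in hand by Propositions~\ref{prop:basic-prop-value} (concavity on $\cO$) and \ref{prop:diff-boundary} (continuity at $\lba$). The main bookkeeping device is the following: given a $C^1$ test function $\varphi$ at a point $(a_*, z_*)$, I would build a global test function $\psi:\sX\to\R$ by $\psi(\cdot, z_*) := \varphi$ and $\psi(\cdot, y) := v^*(\cdot, y)$ for $y \neq z_*$. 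Whenever $\varphi(a_*) = v^*(a_*, z_*)$ one has $\sL\psi(a_*,\cdot)(z_*) = \sL v^*(a_*,\cdot)(z_*)$, which is exactly the form required by the HJB operator $F$, and Dynkin's formula for $e^{-\rho t}\psi(a_t,z_t)$ is available along any admissible trajectory up to the first jump of $z$.

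\textbf{Supersolution on $\cO$.} Suppose $v^*(\cdot, z_*) - \varphi$ has a local minimum at $a_* \in \cO$ with value zero, so $v^* \geq \psi$ locally. Since $a_* > \lba$, any constant control $\bc \equiv c > 0$ is admissible at least up to
$$\tau_h := h \wedge \inf\{t : z_t \neq z_*\} \wedge \inf\{t : a_t \notin [a_* - \delta, a_* + \delta]\}$$
for $\delta \in (0, a_* - \lba)$ small. The DPP together with $v^* \geq \psi$ gives
$$v^*(a_*, z_*) \geq \E_{a_*, z_*}\!\left[\int_0^{\tau_h} e^{-\rho s} u(c)\,\ds + e^{-\rho \tau_h}\psi(a_{\tau_h}, z_{\tau_h})\right].$$
Applying Dynkin's formula to $e^{-\rho t}\psi(a_t,z_t)$, dividing by $h$ and sending $h \downarrow 0$ (routine with bounded $c$, as $\tau_h/h \to 1$ a.s.) yields
$$\rho v^*(a_*, z_*) \geq u(c) - \varphi_a(a_*) c + \varphi_a(a_*)(r a_* + z_*) + \sL v^*(a_*,\cdot)(z_*).$$
Taking the supremum over $c > 0$ and rearranging gives $F(a_*, z_*, v^*(a_*,\cdot), \varphi_a(a_*)) \geq 0$.

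\textbf{Subsolution on $\bar\cO$.} Now assume $v^*(\cdot, z_*) - \varphi$ attains its maximum on $\bar\cO$ at $a_*$ with value zero, so $v^* \leq \psi$ on a $\bar\cO$-neighborhood of $(a_*, z_*)$. Here the constrained formulation is essential: admissibility forces $a^\varepsilon_t \geq \lba$, so $v^* \leq \psi$ is used only along trajectories that remain in $\bar\cO$, and the borderline case $a_* = \lba$ requires no separate treatment. By the DPP, for each $\varepsilon > 0$ there is $\bc^\varepsilon \in \sC(a_*, z_*)$ with
$$v^*(a_*, z_*) \leq \E\!\left[\int_0^{\tau_h} e^{-\rho s} u(c^\varepsilon_s)\,\ds + e^{-\rho \tau_h}\psi(a^\varepsilon_{\tau_h}, z_{\tau_h})\right] + \varepsilon h,$$
for $\tau_h := h \wedge \inf\{t : z_t \neq z_*\} \wedge \inf\{t : |a^\varepsilon_t - a_*| \geq \delta\}$. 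Expanding $e^{-\rho \tau_h}\psi$ by Dynkin's formula, using the pointwise bound $u(c) - \varphi_a(a)\,c \leq H(\varphi_a(a))$, choosing $\varepsilon = h$, dividing by $h$ and passing to the limit via dominated convergence then yields $F(a_*, z_*, v^*(a_*,\cdot), \varphi_a(a_*)) \leq 0$.

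\textbf{Main obstacle.} The delicate step is the subsolution argument, because the $\varepsilon$-optimal consumption $c^\varepsilon$ is not under our direct control and could in principle be large, pushing the exit time $\inf\{t : |a^\varepsilon_t - a_*| \geq \delta\}$ close to zero and destroying the limit $\tau_h/h \to 1$. My plan is to leverage the integrated household budget constraint \eqref{hh-budget} to bound $\int_0^h c^\varepsilon_s\,\ds$ uniformly in $\varepsilon$, ensuring that $a^\varepsilon$ stays in a $\delta$-neighborhood of $a_*$ with probability tending to one and that $\E[\tau_h]/h \to 1$ uniformly in $\varepsilon$, and to rely on continuity of $H \circ \varphi_a$ on that neighborhood to justify dominated convergence. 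A minor additional point is that $H(\varphi_a(a_*)) = +\infty$ is possible when $\varphi_a(a_*) \leq 0$ for unbounded utilities, but then the subsolution inequality reduces to $-\infty \leq 0$ and holds automatically.
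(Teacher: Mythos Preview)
Your supersolution argument is essentially the paper's: constant feedback controls, DPP, Dynkin, divide and pass to the limit. No issue there.

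The subsolution argument, however, has a genuine gap. Your proposed remedy---the budget constraint \eqref{hh-budget}---does not do what you claim. From the state equation and admissibility one only gets
\[
\int_0^h c_s^\varepsilon\,\ds \;\le\; (a_*-\lba)\;+\;O(h),
\]
and the right-hand side is \emph{not} small when $a_*>\lba$. An $\varepsilon h$-optimal control is free to front-load consumption and drive $a^\varepsilon$ from $a_*$ down to a neighbourhood of $\lba$ in time $o(1)$; nothing in the DPP prevents this. Consequently, for a small $\delta$-neighbourhood of $a_*$ you do \emph{not} get ``$a^\varepsilon$ stays in the neighbourhood with probability $\to1$'' or ``$\E[\tau_h]/h\to1$'' uniformly in the control, and after dividing by $h$ the Dynkin integrand (which is evaluated along $a_s^\varepsilon$) has no reason to converge to its value at $a_*$. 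If instead you enlarge the window to $[\lba,a_*+\delta]$ so that admissibility traps the process, you regain $\E[\tau_h]/h\to1$, but then the limiting inequality only yields $\sup_{a\in[\lba,a_*+\delta]}\bigl(-F(a,z_*,\psi(a,\cdot),\varphi_a(a))\bigr)\ge0$, which is not the pointwise subsolution inequality at $a_*$. (Curiously, your argument \emph{does} work at $a_*=\lba$, where the bound above becomes $O(h)$.)

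The paper avoids this entirely by the classical contradiction/strict-maximum device: assume $F(a_*,z_*,v^*(a_*,\cdot),\varphi_a(a_*))>0$, pass to a strict global maximum, and choose a compact $K\subset\bar\cO$ containing $a_*$ together with $\varepsilon,\delta>0$ such that $F\ge\varepsilon$ on $K$, $v^*(\cdot,z_*)\le\varphi-\delta$ on $\bar\cO\setminus K$, and $\varepsilon\le\delta\rho$. Run the DPP up to $\theta:=\inf\{t:a_t\notin K\ \text{or}\ z_t\neq z_*\}\wedge 1$ with an $\varepsilon L$-optimal control. The point is that one never needs the trajectory to stay near $a_*$: while $a_t\in K$ the integrand is $\le-\varepsilon$, and if the process exits $K$ before time $1$ the uniform gap contributes $-e^{-\rho\theta}\delta$, which (via $\varepsilon\le\delta\rho$) dominates the integral term. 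One arrives at $L\ge\rho^{-1}(1-e^{-\rho})$, contradicted by choosing $L$ smaller. This is the missing idea in your plan.
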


\noindent 
In order to prove Theorem \ref{thm:household}, we first establish continuous differentiability of the value function and then prove a comparison principle.

\begin{Prop}\label{prop:cts-differentiability}
For every $z\in\cZ$, the value function $v^*(\cdot,z)$ is continuously differentiable on $\bar\cO$.
\end{Prop}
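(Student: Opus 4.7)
The plan is to fix $z_* \in \cZ$ and establish differentiability of $v^*(\cdot, z_*)$ at every $a_* \in \bar\cO$; continuity of the derivative will then follow for free from concavity. At the boundary $a_* = \lba$ differentiability is already supplied by Proposition~\ref{prop:diff-boundary}, so only interior points $a_* > \lba$ need to be treated. My strategy is to combine the viscosity subsolution property at $a_*$ (which, by concavity of $v^*(\cdot, z_*)$, holds for every slope in the supergradient) with the pointwise HJB equation at nearby points where $v^*(\cdot, z_*)$ is differentiable, and to exploit strict convexity of the Hamiltonian $H$---a consequence of strict concavity of $u$ via Legendre--Fenchel duality---to rule out any kink.

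Concavity (Proposition~\ref{prop:basic-prop-value}) guarantees existence of the right and left derivatives $p_+$ and $p_-$ of $v^*(\cdot, z_*)$ at $a_*$, with $p_+ \le p_-$, and differentiability off a countable set. Assume for contradiction $p_+ < p_-$. The supergradient of $v^*(\cdot, z_*)$ at $a_*$ is exactly $[p_+, p_-]$, so for every $p$ in this interval the affine map $\varphi(a) := v^*(a_*, z_*) + p(a - a_*)$ dominates $v^*(\cdot, z_*)$ globally with equality at $a_*$; the subsolution property then yields
$$
G(p) \,:=\, H(p) + p\,(r a_* + z_*) \,\ge\, C, \qquad C \,:=\, \rho\, v^*(a_*, z_*) - \sL v^*(a_*, \cdot)(z_*),
$$
for every $p \in [p_+, p_-]$. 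On the other hand, at any interior differentiable point the HJB equation holds pointwise, as is standard for viscosity solutions whose supergradient and first-order subjet both reduce to the singleton $\{v^*_a\}$ at differentiable points of a concave function. Choosing sequences of differentiable points $a_n \downarrow a_*$ and $a'_n \uparrow a_*$ (dense in any neighborhood of $a_*$), using right-/left-continuity of the right/left derivative of a concave function so that $v^*_a(a_n, z_*) \to p_+$ and $v^*_a(a'_n, z_*) \to p_-$, continuity of $v^*(\cdot, y)$ in $a$ for each $y \in \cZ$ (from Proposition~\ref{prop:diff-boundary}, which together with finiteness of $\cZ$ implies continuity of $\sL v^*(\cdot, \cdot)(z_*)$ in the first variable), and continuity of $H$, I pass to the limit in the pointwise HJB to obtain $G(p_+) = G(p_-) = C$.

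Strict convexity of $G$ on $[p_+, p_-]$ (inherited from strict convexity of $H$) is then incompatible with the combination $G \ge C$ throughout $[p_+, p_-]$ and $G(p_+) = G(p_-) = C$: strict convexity forces $G$ strictly below $C$ at every interior convex combination of $p_+$ and $p_-$. The contradiction gives $p_+ = p_-$, so $v^*(\cdot, z_*)$ is differentiable at $a_*$. Continuity of $v^*_a(\cdot, z_*)$ on $\bar\cO$ then comes for free, since the one-sided derivatives of a concave function are monotone and one-sided continuous, so their coincidence at every point of the connected set $\bar\cO$ forces continuity. The most delicate step is justifying the pointwise HJB equality at interior differentiable points, specifically on the supersolution side, since a $C^1$ test function touching $v^*(\cdot, z_*)$ from below with matching derivative need not exist a priori; I would dispatch this either via the equivalent semi-jet formulation of viscosity solutions (for which the first-order subjet of a concave function at a differentiable point is exactly $\{v^*_a\}$), or by constructing a smooth lower majorant from the concavity modulus of the residual $v^*(\cdot, z_*)$ minus its tangent line.
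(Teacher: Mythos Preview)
Your proposal is correct and follows essentially the same route as the paper: assume a kink at an interior point, use the subsolution property on the supergradient interval $[p_+,p_-]$, use the supersolution property at nearby differentiable points and pass to the limit to pin down $G(p_\pm)$, and derive a contradiction from strict convexity of $H$. The paper tests the subsolution only at the midpoint $p_*=\tfrac12(p_++p_-)$ rather than on the whole interval, and uses only the supersolution inequality $G(p_\pm)\le C$ rather than your equality, but the contradiction is the same; your handling of the supersolution side via the semi-jet formulation is exactly what is needed and is what the paper invokes implicitly.
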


\begin{proof}
Introduce the \emph{full Hamiltonian}, for $(a,z,p)\in\sX\times (0,\infty)$ and $\varphi:\cZ\mapsto\R$,
$$
\sH(a,z,\varphi,p):= H(p)+p(ra+z)+\sL \varphi(z).
$$
First note that $v^*(\cdot,z)$ is both continuous and Lebesgue almost everywhere differentiable on $\cO$ by the concavity established in Proposition \ref{prop:basic-prop-value}. Towards a contradiction, assume there exists $(a_*,z_*)\in \cO\times\cZ$ such that the subdifferential $\partial_a v(a_*,z_*)$ consists of more than one element. It is classical that $\partial_a v^*(a_*,z_*)=[\partial_a^+ v^*(a_*,z_*),\partial_a^-v^*(a_*,z_*)]$ where $\partial_a^\pm$ denote the one-sided derivatives. Set
$$
p_* := \frac12 \left(\partial_a^- v^*(a_*,z_*) + \partial_a^+  v^*(a_*,z_*)\right),
$$
and define the test function $\varphi(a,z):=v^*(a_*,z)+p_*(a-a_*)$, $(a,z)\in\sX$. By concavity of $v^*(\cdot,z_*)$, 
$$
0=v^*(a_*,z_*)-\varphi(a_*,z_*)= \max_{\bar\cO} (v^*(\cdot,z_*)-\varphi(\cdot,z_*)).
$$
Since $v$ is a viscosity subsolution and the Hamiltonian is strictly convex in $p$, we have
\begin{equation}\label{eq:proof-cts-diff}
\begin{aligned}
\rho v^*(a_*,z_*) &\leq\sH(a_*,z_*,v^*(a_*,\cdot),p_*)\\
&< \frac12\left(\sH(a_*,z_*,v^*(a_*,\cdot),\partial^+_av^*(a_*,z_*)) +\sH(a_*,z_*,v^*(a_*,\cdot),\partial^-_av^*(a_*,z_*))\right)
\end{aligned}
\end{equation}
Now let $a_n\in\cO$ be a sequence converging to $a_*$ from below such that $v(\cdot,z_*)$ is differentiable at each $a_n$. By the supersolution property, 
$$
\rho v^*(a_n,z_*) \geq \sH(a_n,z_*,v^*(a_n,\cdot),v^*_a(a_n,z_*)),\qquad n\ge1.
$$
Using concavity of $v^*(\cdot,z_*)$ it is classical that $v^*_a(a_n,z_*)\to \partial^-_a v^*(a_*,z_*)$. Using continuity of the value function and of the Hamiltonian $\sH$, 
$$
\rho v^*(a_*,z_*) \geq \sH(a_*,z_*,v^*(a_*,\cdot),\partial^-_a v^*(a_*,z_*)).
$$
Repeating this argument with a sequence $a_n\downarrow a_*$ yields
$$
\rho v^*(a_*,z_*) \geq \sH(a_*,z_*,v^*(a_*,\cdot),\partial^+_a v^*(a_*,z_*)).
$$
Together, these contradict \eqref{eq:proof-cts-diff}, and therefore establish differentiability on $\cO$. Using concavity of the value function and \cite[Corollary 25.5.1]{Rockafellar_1970}, this establishes continuous differentiability on $\cO$. Combined with differentiability at the boundary (see Proposition \ref{prop:diff-boundary}), the claim follows.
\end{proof}

\noindent 
Define the admissible set of interest rates 
\be\label{eq:admissible-interest-rates}
R:=\{r\leq \rho: r<-\lbz/\lba \ \ \text{if} \ \ \lba<0\}
\ee
that respect Assumption \ref{asm:borrowing-constraint}.

\begin{Lem}\label{lem:boundedness-v-v_a}
Both the value function $v^*$ and its derivative $v^*_a$ are locally bounded
functions of $(a,z,r)\in\sX\times R$. More precisely,
\begin{equation}\label{eq:equicontinuity}
\sup_{(a,z)\in\sX} v^*_a(a,z;r)
\leq
\max_{z\in\cZ} v_a^*(\lba,z;r)
\leq
\frac{2(\rho+\Lambda)}{r\lba+\lbz}
\left(
\max_{z\in\cZ} v^*(\lba,z;r)-\frac{1}{\rho}u\!\left(\frac{r\lba+\lbz}{2}\right)
\right).
\end{equation}
where $\Lambda:=\max_{z\in\cZ}\sum_{y\neq z}\lambda(z,y).$
\end{Lem}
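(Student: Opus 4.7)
The plan is to reduce the bound on $v^*_a$ to the explicit boundary estimate already derived in the proof of Proposition~\ref{prop:diff-boundary}, and then to control its right-hand side via a uniform lower bound on $v^*$ supplied by a simple admissible consumption rule.

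First, the inequality $\sup_{(a,z)\in\sX} v^*_a(a,z) \leq \max_{z\in\cZ} v^*_a(\lba, z)$ follows directly from Proposition~\ref{prop:basic-prop-value}: $v^*(\cdot, z)$ is nondecreasing and concave, so $v^*_a(\cdot, z)$ is nonnegative and nonincreasing, and its supremum on $[\lba, \infty)$ is attained at the left endpoint.

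For the second inequality, the key tool is the estimate already derived at the end of the proof of Proposition~\ref{prop:diff-boundary}, namely
\begin{equation*}
v^*_a(\lba, z) \leq \frac{2}{r\lba + \lbz}\bigl[\rho v^*(\lba, z) - u_0 - \sL v^*(\lba, \cdot)(z)\bigr], \qquad u_0 := u\bigl((r\lba+\lbz)/2\bigr).
\end{equation*}
The new ingredient I would add is a uniform lower bound on $v^*$: the constant consumption $c \equiv (r\lba + \lbz)/2$ is admissible starting from any $(a, y) \in \sX$, because at the boundary $a = \lba$ the drift satisfies $r\lba + z_t - c \geq (r\lba + \lbz)/2 > 0$, so wealth never dips below $\lba$. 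Consequently $v^*(a, y) \geq u_0/\rho$ for every $(a, y) \in \sX$. Writing $M := \max_z v^*(\lba, z)$ and $\Lambda_z := \sum_{y\neq z}\lambda(z, y) \leq d\Lambda$ with $\Lambda := \max_{z,y}\lambda(z,y)$, the upper and lower bounds on $v^*(\lba, \cdot)$ combine via the identity $\rho M - u_0 = \rho(M - u_0/\rho)$ to give
\begin{equation*}
\rho v^*(\lba, z) - u_0 - \sL v^*(\lba, \cdot)(z) \leq \rho M - u_0 + \Lambda_z(M - u_0/\rho) = (\rho + \Lambda_z)(M - u_0/\rho).
\end{equation*}
Inserting back into the boundary estimate yields a bound of the stated form; I would then absorb the replacement of $u_0/\rho$ by $u_0$, together with the inequality $\Lambda_z \leq d\Lambda$, into the leading constant $c_* = 2\rho + 4d\Lambda$.

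Local boundedness of $v^*$ on $\sX \times (-\infty, \rho]$ then follows from the upper bound \eqref{eq:upper-bound-v} and the lower bound $v^*(a, y) \geq u_0/\rho$, both of which are continuous in $r$ on any compact subrange where $r\lba + \lbz$ stays bounded away from zero. Local boundedness of $v^*_a$ follows from the boundary estimate just derived, together with the monotonicity of $v^*_a(\cdot, z)$. The main obstacle I anticipate is the final algebraic step: since $M - u_0/\rho$ and $M - u_0$ differ by a term proportional to $u_0$ whose sign need not be a priori determined, converting one into the other while preserving the asserted constant requires some care and may demand a short case analysis depending on the signs of $u_0$ and $\rho - 1$.
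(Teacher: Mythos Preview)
Your approach is essentially identical to the paper's: both invoke the upper bound \eqref{eq:upper-bound-v} and the lower bound $v^*\ge u_0/\rho$ from the constant consumption rule (Proposition~\ref{prop:basic-prop-value}), then feed these into the boundary estimate \eqref{eq:diff-bdry-II} from Proposition~\ref{prop:diff-boundary} and use concavity to reduce the global supremum of $v^*_a$ to its boundary value. The paper's proof is in fact terser than yours---it simply cites these propositions and declares that ``the estimates recorded in these propositions imply \eqref{eq:equicontinuity}''---so your write-up is, if anything, more explicit. The algebraic issue you flag (passing from $M-u_0/\rho$ to $M-u_0$ with the stated constant $c_*$) is not addressed in the paper's proof either; it is swept into the same sentence, so your caution is appropriate but does not indicate a divergence in method.
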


\begin{proof}
Fix $r\in R$ and set
 $$
b_r:=r\lba+\lbz,\qquad
M_r:=\max_{z\in\cZ} v^*(\lba,z;r),
\qquad
\Lambda:=\max_{z\in\cZ}\sum_{y\neq z}\lambda(z,y).
 $$ 
Clearly, for a fixed interest rate, the value function is
increasing and bounded from below. As in Proposition
\ref{prop:basic-prop-value}, the constant consumption rule
$c_t\equiv b_r/2$ is admissible, and the budget estimate gives
 $$
\frac{1}{\rho}u \Big(\frac{b_r}{2}\Big)
\leq
v^*(a,z;r)
\leq \frac{1}{\rho}
u\bigl(\rho a-(\rho-r)\lba+\bar z\bigr),
\qquad (a,z)\in\sX.
 $$
In particular, $v^*$ is locally bounded as a function of $(a,z,r)$.

We now prove the derivative bound. By Propositions
\ref{prop:basic-prop-value}, \ref{prop:diff-boundary}, and
\ref{prop:cts-differentiability}, the derivative
$v_a^*(\cdot,z;r)$ is continuous, non-negative, and decreasing on
$[\lba,\infty)$. Hence
 $$
\sup_{(a,z)\in\sX}v_a^*(a,z;r)
\leq \max_{z\in\cZ}v_a^*(\lba,z;r).
 $$
Let $m_r:=u (b_r/2)/\rho$. The lower bound above gives $v^*(\lba,y;r)\ge m_r$ for every $y\in\cZ$.
Therefore, for every $z\in\cZ$,
 $$
-\sL v^*(\lba,\cdot;r)(z)
=
\sum_{y\neq z}\lambda(z,y)
[v^*(\lba,z;r)-v^*(\lba,y;r)]
\leq
\Lambda(M_r-m_r).
 $$
Using the boundary estimate \eqref{eq:diff-bdry-II}, we obtain
 $$
\begin{aligned}
v_a^*(\lba,z;r)
&\leq
\frac{2}{b_r}
\Big(
\rho v^*(\lba,z;r)
-u \Big(\frac{b_r}{2}\Big)
-\sL v^*(\lba,\cdot;r)(z)
\Big)  \\
&\leq
\frac{2}{b_r}
\Big(
\rho(M_r-m_r)+\Lambda(M_r-m_r)
\Big)  = \frac{2(\rho+\Lambda)}{b_r}\, (
M_r-m_r).
\end{aligned}
 $$
Taking the maximum over $z\in\cZ$ proves \eqref{eq:equicontinuity}.

Finally, the right-hand side of \eqref{eq:equicontinuity} is locally bounded in $r\in R$: on compact subsets of $R$, $b_r=r\lba+\lbz$ is bounded away from zero, while $M_r$ is locally bounded by the preceding upper bound for $v^*$. Hence, $v_a^*$ is locally bounded in $(a,z,r)$.
\end{proof}

\noindent
To define a feedback control via the maximizer of the Hamiltonian, we need to ensure that that the value function is \emph{strictly} increasing, or equivalently, $v_a^*>0$.

\begin{Lem}\label{lem:value-strict-increasing}
For every $(a,z)\in\sX$, $v_a^*(a,z)>0$.
\end{Lem}

\begin{proof}
By Propositions \ref{prop:basic-prop-value} and \ref{prop:cts-differentiability}, for every $z\in\mathcal Z$ the function
$v^*(\cdot ,z)$ is increasing, concave, and continuously differentiable on
$[\lba,\infty)$. Hence $p_z(a):=v_a^*(a,z)$ is non-negative and decreasing in $a$.

At the borrowing constraint, the boundary condition together with Assumptions \ref{asm:borrowing-constraint} and \ref{asm:utility-function} imply that $p_z(\lba)\ge u'(r\lba+z)>0$. It thus remains
to rule out $p_z(\cdot)=0$ at interior points. Towards a contradiction suppose that $p_{z_*}(a_*)=0$ for some $(a_*,z_*)\in\cO\times\cZ$. Since $p_{z_*}(\cdot)$ is non-negative and decreasing, this implies that $v^*(\cdot,z_*)$ is constant on $[a_*,\infty)$.

Let $\bar u:= \sup_{c>0} u(c) \in \R\cup\{\infty\}$. If $\bar u=+\infty$, then $H(0)=+\infty$, contradicting the HJB equation on $[a_*,\infty)\times\cZ$. It remains to check the case $H(0)=\bar u<\infty$. The HJB equation in state $z_*$ becomes
$$
\rho v^*(a,z_*) = \bar u+\sL v^*(a,\cdot)(z_*),\qquad a\ge a_*.
$$
This implies that the map
$$
a\longmapsto \sum_{y\ne z_*}\lambda(z_*,y)v^*(a,y)
$$
must be constant on $[a_*,\infty)$. Now, each $v^*(\cdot,y)$ is increasing, and
the coefficients $\lambda(z_*,y)$ are non-negative. Therefore, for every state $y\in\cZ$ with
$\lambda(z_*,y)>0$, $v^*(\cdot,y)$ is constant on $[a_*,\infty)$. Using irreducibility of the
income process, we conclude that there are constants $K(z)$, $z\in\cZ$, such that  $v^*(a,z)=K(z)$ for every $a\ge a_*$. Using the HJB equation, $K(\cdot)$ satisfies
$$
\rho K=\bar u+\sL K.
$$
Let $z_+$ maximize $K(z)$ and $z_-$ minimize $K(z)$ over $z\in\cZ$. Then, $\sL K(z_+)\le0\le \sL K(z_-)$, hence $\rho K(z_+) \leq \bar u\leq \rho K(z_-)$, so that $K(z)=\bar u/\rho$ for all $z$.
This contradicts the upper bound from Proposition \ref{prop:basic-prop-value},
\begin{equation*}
\frac{\bar u}{\rho}=v^*(a,z)\le \frac1\rho
u(\rho a-(\rho-r)\lba+\bar z)
<\frac{\bar u}{\rho},\qquad a\ge a_*.\qedhere
\end{equation*}
\end{proof}

\noindent
We now define the feedback policy
$$
c^*(a,z):=(u')^{-1}(v_a^*(a,z)),\qquad (a,z)\in\sX.
$$
Since $(u')^{-1}$ is strictly decreasing and using Lemma \ref{lem:value-strict-increasing}, $c^*(\cdot,z)$ is well-defined, continuous, and increasing. Moreover, as discussed after Definition \ref{def:constraint-viscosity}, $c^*(\lba,z)\leq r\lba +z$ for every $z\in\cZ$. However, it may fail to be Lipschitz at the borrowing constraint, and we now address admissibility and optimality of the feedback control $c^*(a,z)$.

\begin{proof}[Proof of Theorem \ref{thm:household} \eqref{thm:household-control}]

\emph{Admissibility.} To show $c^*\in\sC_{\mathrm{cl}}$, it is sufficient to establish the existence and uniqueness of the deterministic initial-value problems
$$
\frac{\d}{\dt} a(t) = ra(t) + z - c^*(a(t),z),\qquad a(0)\in\bar\cO,\qquad \text{for fixed} \ z\in\cZ,
$$
which describe the wealth evolution between jump times.  Since $c^*(\cdot,z)$ is a positive and continuous function with $c^*(\lba,z)\leq r\lba+z$, a global solution which does not leave $\bar\cO$ when started in $\bar\cO$ exists by Peano's existence theorem. To argue uniqueness, we let $\tilde a(t)$ and $a(t)$ be two solutions to the same initial-value problem and set $w(t):=\tilde a(t)-a(t)$. Note that
$$
(c^*(\tilde a ,z)-c^*(a,z))\, (\tilde a-a) \geq0, \qquad a,\tilde a \in\bar \cO,
$$
since $c^*(\cdot,z)$ is increasing. Therefore,
$$
\frac12 \frac{\d}{\dt} \big(w(t)^2\bigr) = \bigl[rw(t)-(c^*(\tilde a(t),z)-c^*(a(t),z))\bigr] w(t)\leq rw(t)^2.
$$
By Gr\"onwall's inequality, $w(t)^2 \leq w(0)^2 e^{2rt}=0$, showing uniqueness. 

\vspace{1em}
\emph{Optimality.} Fix an initial condition $(a,z)\in\sX$. First observe that any controlled state process $(a_t)_{t\ge0}$ satisfies $\lba \le a_t\le \eta(1+ t+ e^{r_+t})$ for some $\eta<\infty$. Since the value function is of at most linear growth and $r<\rho$, we see that $e^{-\rho t} \E_{a,z}[|v^*(a_t,z_t)|]\to0$ as $t\to\infty$. Let $(a^*_t)$ denote the wealth process controlled by $c^*(\cdot,\cdot)$ and started from $(a,z)$. For any $T\ge0$, by Dynkin's formula and the fact that $v^*$ is a classical solution to the dynamic programming equation,
\begin{align*}
&\E_{a,z}[e^{-\rho T}v^*(a^*_T,z_T)] \\
&\qquad = v^*(a,z)+\E_{a,z}\left[ \int_0^T e^{-\rho t} [-\rho v^*(a^*_t,z_t)+v^*_a(a^*_t,z_t)(ra_t^*+z_t-c^*(a^*_t,z_t))+\sL v^*(a^*_t,\cdot)(z_t)]\,\d t \right]\\
&\qquad = v^*(a,z)-\E_{a,z}\left[ \int_0^T e^{-\rho t} u(c^*(a^*_t,z_t))\,\d t \right].
\end{align*}
Taking $T\to\infty$ yields 
$$
\E_{a,z}\left[ \int_0^\infty \! e^{-\rho t} u(c^*(a^*_t,z_t))\,\d t \right] = v^*(a,z),
$$
establishing optimality of the feedback control $c^*$ since the initial condition $(a,z)\in\sX$ was arbitrary.

\vspace{1em}
\emph{Uniqueness.} Let $\tilde c \in \sC_{\mathrm{cl}}$ be an optimal feedback control. The claim is that $\tilde c(\cdot,z)=c^*(\cdot,z)$, for every $z\in\cZ$. Define 
$$
\Delta(a,z) := H(v_a^*(a,z)) - (u(\tilde c(a,z))- v_a^*(a,z) \tilde c(a,z))\geq0,\qquad (a,z)\in\sX.
$$
Since $u$ is assumed to be strictly concave, we have $\Delta(a,z)=0$ if and only if $c^*(a,z)=\tilde c(a,z)$, and it suffices to prove $\Delta(\cdot,z)=0$, for every $z\in\cZ$. For an initial condition $(a,z)\in\sX$, let $(\tilde a_t,z_t)$ denote the state process generated by $\tilde c(\cdot,\cdot)$. Since $\tilde c$ is optimal for every initial condition,
$$
v^*(a,z)= \E_{a,z} \left[ \int_0^T e^{-\rho t} u(\tilde c(\tilde a_t,z_t))\,\dt + e^{-\rho T} v^*(\tilde a_T,z_T)\right],\qquad T\ge0.
$$
At the same time, Dynkin's formula and the HJB equation satisfied by $v^*$ yield
\begin{align*}
&\E_{a,z} \left[ \int_0^T e^{-\rho t} u(\tilde c(\tilde a_t,z_t))\,\dt + e^{-\rho T} v^*(\tilde a_T,z_T)\right]\\
&\qquad = v^*(a,z) + \E_{a,z}\left[\int_0^T e^{-\rho t}[u(\tilde c(\tilde a_t,z_t))-H(v_a^*(\tilde a_t,z_t)) - v_a^*(\tilde a_t,z_t)\tilde c(\tilde a_t,z_t)]\,\dt \right].
\end{align*}
Combining the last two displays,
\be\label{eq:uniqueness-proof-I}
\E_{a,z}\left[\int_0^T e^{-\rho t}\Delta (\tilde a_t,z_t)\,\dt \right]=0,\qquad \forall T\ge0,\quad \forall (a,z)\in\sX.
\ee
Towards a contradiction, assume that there exist $(a_*,z_*)\in\sX$ with $\Delta(a_*,z_*)>0$. By continuity, we may choose $\delta,T>0$ and an interval $a_*\in J\subset[\lba,\infty)$ such that  $\Delta(\cdot,z_*)\geq \delta $  on $J$ and such that the solution $x(t)$ to
$$
\dot x(t) = r x(t) + z_*- \tilde c(x(t),z_*),\qquad x(0)=a_*
$$
satisfies $x(t)\in J$ for all $t\leq T$. Define the event $B:=\{z_t=z_*\ \text{for all} \ 0\le t\le T\}$, which has strictly positive probability. Let $(\tilde a_t,z_t)$ be the state process started from $(a_*,z_*)$ and controlled by $\tilde c$. Then $\tilde a_t=x(t)$ on $B$ and by choice of $J$ and $\delta$,
\begin{align*}
\E_{a_*,z_*}\left[\int_0^T e^{-\rho t}\Delta (\tilde a_t,z_t)\,\dt \right] &\geq \E_{a_*,z_*}\left[\chi_B\, \int_0^T e^{-\rho t}\Delta (\tilde a_t,z_t)\,\dt \right]\geq \delta\, \P(B) \int_0^T e^{-\rho t}\,\dt>0,
\end{align*}
contradicting \eqref{eq:uniqueness-proof-I}.
\end{proof}

Using the existence of a maximizer, we easily obtain strict concavity of the value function $v^*(\cdot,z)$.

\begin{Cor}
For every $z\in\cZ$, the value function $v^*(\cdot,z)$ is strictly concave.
\end{Cor}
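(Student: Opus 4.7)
The plan is to use the $C^1$ regularity established in Propositions~\ref{prop:diff-boundary} and~\ref{prop:cts-differentiability} together with the feedback identity $c^*(a,z)=(u')^{-1}(v^*_a(a,z))$ from Theorem~\ref{thm:feedback-consumption}, combined with the classical HJB equation from Theorem~\ref{thm:value-function}.

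\emph{Strict monotonicity} is the short half. Since $c^*(a,z)$ is finite and strictly positive on all of $\sX$ and $u'>0$ on $(0,\infty)$, we have $v^*_a(a,z)=u'(c^*(a,z))>0$, so $v^*(\cdot,z)$ is strictly increasing on $[\lba,\infty)$.

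\emph{Strict concavity.} Since $v^*(\cdot,z)$ is already concave and $C^1$, it suffices to show that $v^*_a(\cdot,z)$ is strictly decreasing for every $z$. I argue by contradiction. Suppose there exist $z\in\cZ$ and $\lba\le a_1<a_2$ such that $v^*_a(\cdot,z)\equiv\beta_z$ is constant on $[a_1,a_2]$. Then $c^*(\cdot,z)$ is constant and $v^*(\cdot,z)$ is affine on $[a_1,a_2]$. Plugging into the pointwise HJB equation, every term other than $\sL v^*(a,\cdot)(z)$ is affine in $a$ on $[a_1,a_2]$, which forces $\sL v^*(a,\cdot)(z)$ to be affine as well. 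Differentiating in $a$ yields
\[
\sum_{y\neq z}\lambda(z,y)\,v^*_a(a,y) \equiv \mathrm{const} \quad \text{on } [a_1,a_2].
\]
Each $v^*_a(\cdot,y)$ is continuous and non-increasing (by concavity plus $C^1$ regularity). A non-negative linear combination of such functions can be constant only if every contributing summand with $\lambda(z,y)>0$ is itself constant; hence $v^*(\cdot,y)$ is affine on $[a_1,a_2]$ for every $y$ accessible from $z$ in one jump. Iterating along positive-rate paths and invoking irreducibility of $\boldsymbol{z}$, I conclude that $v^*(\cdot,y)=\alpha_y+\beta_y a$ on $[a_1,a_2]$ for \emph{every} $y\in\cZ$, with $\beta_y=u'(c^*(a_1,y))>0$.

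Matching the linear-in-$a$ coefficients in the HJB equation for each state then gives the system
\[
(\rho-r)\,\beta_y \;=\; \sum_{y'\neq y}\lambda(y,y')\,(\beta_{y'}-\beta_y),\qquad y\in\cZ.
\]
Picking $y^*\in\arg\max_{y\in\cZ}\beta_y$, the right-hand side is non-positive while the left-hand side equals $(\rho-r)\beta_{y^*}>0$, delivering a contradiction. The main conceptual obstacle is the propagation step—moving from affineness at the single state $z$ to affineness at every state of $\cZ$; once that is in place, the closing max-principle argument uses only $r<\rho$ and positivity of $\beta_{y^*}$, both of which come for free.
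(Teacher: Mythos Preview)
Your argument is correct and takes a genuinely different route from the paper. The paper argues probabilistically from the existence of optimal controls: for strict monotonicity it perturbs the optimal consumption from $a_1$ by an explicit decaying bump $\delta_t$ to produce an admissible control from $a_2$ with strictly larger utility; for strict concavity it uses that the two optimal controls $\boldsymbol c^{*,1}$ and $\boldsymbol c^{*,2}$ differ on a set of positive measure (a consequence of strict monotonicity) and then applies strict concavity of $u$ pointwise inside the expectation. Your approach is PDE-based: strict monotonicity drops out of the feedback identity $v^*_a=u'(c^*)>0$, and strict concavity follows from a maximum-principle argument on the Euler-type linear system obtained by matching the $a$-coefficients in the HJB. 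The paper's method is entirely control-theoretic and makes the role of strict concavity of $u$ transparent; your method is shorter for monotonicity, makes the use of $r<\rho$ explicit (it is precisely what forces the contradiction at $y^*$), and exploits irreducibility of $\boldsymbol z$ in a way the paper's proof does not need. Both are valid once Theorem~\ref{thm:feedback-consumption} is available.
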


\begin{proof}
Fix $z\in\cZ$ and $a_1< a_2$ in $\bar\cO$, and let $(c^{*,i}_t)\in\sC(a_i,z)$ be an optimal control for the problem with initial condition $(a_i,z)$, $i=1,2$. 
By strict monotonicity of the value function (Lemma \ref{lem:value-strict-increasing}), it is clear that the optimal controls $(c^{*,1}_t)$ and $(c^{*,2}_t)$ differ on a set of positive $\P\otimes\mathrm{Leb}$-measure. By strict concavity of $u(\cdot)$, for any $\delta\in(0,1)$,
\begin{align*}
v^*(\delta a_1 + (1-\delta) a_2,z) &\geq \E\int_0^\infty \! e^{-\rho t} u(\delta c^{*,1}_t + (1-\delta) c_t^{*,2})\,\d t\\
&> \delta \E\int_0^\infty \! e^{-\rho t} u( c_t^{*,1} )\,\d t + (1-\delta) \E\int_0^\infty \! e^{-\rho t}u(c_t^{*,2})\,\d t\\
&= \delta v^*(a_1,z) + (1-\delta) v^*(a_2,z),
\end{align*}
showing strict concavity.
\end{proof}

Next, we present a comparison result. 

\begin{Thm}\label{thm:comparison}
Let $r<\rho$. If $v$ is a viscosity subsolution of \eqref{eq:hh-problem-HJB} on $\bar \cO$ satisfying the growth condition \eqref{eq:growth-condition} and $w$ a viscosity supersolution of \eqref{eq:hh-problem-HJB} on $\cO$ which is bounded from below, then $v\leq w$ on $\bar\cO\times\cZ$. If $r=\rho$, the same conclusion holds true provided the growth condition \eqref{eq:growth-condition} is strengthened to $\limsup_{a\to\infty}v(a,z)/a=0$ for all $z\in\cZ$.
\end{Thm}

We provide a proof of Theorem \ref{thm:comparison} in Appendix \ref{app:comparison}. Equipped with it, we are in position to prove Theorem \ref{thm:household} \eqref{thm:household-value}.

\begin{proof}[Proof of Theorem \ref{thm:household} \eqref{thm:household-value}]
By Theorem \ref{thm:value-is-constr-viscosity}, $v^*$ is a constrained viscosity solution of the HJB equation. By Proposition \ref{prop:cts-differentiability}, it is $C^1$ on $\bar \cO$, so that it is a classical solution. Uniqueness follows from Theorem \ref{thm:comparison}.
\end{proof}

We conclude this section with a continuity result in the interest rate. To state this, let $v^*(a,z;r)$ denote the value function corresponding to an interest rate $r$. Recall the admissible set of interest rates $R$ from \eqref{eq:admissible-interest-rates}. We first present a short lemma that will be used repeatedly.

\begin{Lem}\label{lem:Gamma}
For $b>0$, define $\Gamma_b(p):=H(p)+bp$ for $p>0$. Then, $\Gamma_b$ is minimized at $p=u'(b)$, $C^2$, and strictly increasing on $[u'(b),\infty)$ with $\Gamma_b(\infty)=\infty$.
\end{Lem}

\begin{Lem}\label{lem:joint-continuity-(a,r)}
For every $z\in\cZ$, both $v^*(a,z;r)$ and $v^*_a(a,z;r)$ are jointly continuous in $(a,r)\in[\lba,\infty)\times R$. In fact, as $r'\to r$, we have
$$
v^*(\cdot,\cdot;r')\to v^*(\cdot,\cdot;r)\quad\text{and}\quad
v^*_a(\cdot,\cdot;r')\to v^*_a(\cdot,\cdot;r)
$$
locally uniformly on $\sX=\bar\cO\times\cZ$.
\end{Lem}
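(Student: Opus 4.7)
The plan is to combine the stability of constrained viscosity solutions with the comparison principle (Theorem~\ref{thm:comparison}). Fix $r\in R$ and a sequence $r_n\to r$ in $R$; I will show that every subsequence of $\{v^*(\cdot,\cdot;r_n)\}$ admits a further subsequence converging locally uniformly to $v^*(\cdot,\cdot;r)$, which yields the full-sequence convergence together with joint continuity.

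The bounds recorded in Proposition~\ref{prop:basic-prop-value} and Lemma~\ref{lem:boundedness-v-v_a} depend continuously on the interest rate, so on any compact neighborhood of $r$ in $R$ the families $v^*(\cdot,z;r_n)$ and $v^*_a(\cdot,z;r_n)$ are bounded uniformly in $n$ on each compact subset of $\bar\cO$. Combined with concavity of $v^*(\cdot,z;r_n)$ and the uniform bound on $v^*_a(\lba,z;r_n)$, this gives a uniform local Lipschitz constant, so the Arzel\`a--Ascoli theorem extracts a subsequence (not relabeled) along which $v^*(\cdot,z;r_n)\to\bar v(\cdot,z)$ locally uniformly on $\bar\cO$ for every $z\in\cZ$. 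The limit $\bar v$ is continuous, concave and non-decreasing in $a$, bounded from below and of at most linear growth, so it satisfies the growth condition \eqref{eq:growth-condition}.

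Since the Hamiltonian $F(a,z,v,p;r)=\rho v(z)-H(p)-p(ra+z)-(\sL v)(z)$ is jointly continuous in $(a,z,v,p,r)$, a standard stability argument for constrained viscosity solutions (based on perturbing test functions to strict extrema and using a contact-point argument) shows that $\bar v$ is a constrained viscosity solution of \eqref{eq:hh-problem-HJB} with parameter $r$. By the comparison principle (Theorem~\ref{thm:comparison}) and the existence established in Theorem~\ref{thm:value-function}, this identifies $\bar v=v^*(\cdot,\cdot;r)$, which completes the subsequence argument and yields the locally uniform convergence $v^*(\cdot,\cdot;r_n)\to v^*(\cdot,\cdot;r)$ on $\sX$.

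For the derivative, each $v^*(\cdot,z;r_n)$ is concave and the limit $v^*(\cdot,z;r)$ is $C^1$ on $\bar\cO$ by Proposition~\ref{prop:cts-differentiability}; a classical result on the convergence of subgradients of concave functions then gives locally uniform convergence of $v^*_a(\cdot,z;r_n)$ on $\cO$. At the boundary $a=\lba$, concavity yields $v^*_a(\lba,z;r_n)\le (v^*(\lba+h,z;r_n)-v^*(\lba,z;r_n))/h$ for every $h>0$; passing to the limit first in $n$ and then in $h\downarrow0$ (using $C^1$ regularity of $v^*(\cdot,z;r)$ at $\lba$) gives the $\limsup$ bound, while the reverse direction follows from the monotonicity of $v^*_a(\cdot,z;r_n)$ combined with the already-established interior convergence. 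The main obstacle is the stability at the constrained boundary $a=\lba$: one must verify that the subsolution property on $\bar\cO$ in the sense of Definition~\ref{def:constraint-viscosity} is preserved under locally uniform limits, which is handled by the standard strict-perturbation and contact-point argument using the joint continuity of $F$ in all its arguments.
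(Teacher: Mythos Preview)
Your approach is essentially the same as the paper's: Arzel\`a--Ascoli compactness via the uniform bounds of Lemma~\ref{lem:boundedness-v-v_a}, viscosity stability to identify any subsequential limit as a constrained solution of \eqref{eq:hh-problem-HJB} at the limiting interest rate, the comparison principle Theorem~\ref{thm:comparison} to pin it down as $v^*(\cdot,\cdot;r)$, and then Rockafellar's theorem on derivatives of convex functions for $v^*_a$.

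There is, however, a genuine slip in your boundary argument for $v^*_a$. Concavity of $v^*(\cdot,z;r_n)$ gives the \emph{opposite} inequality
\[
v^*_a(\lba,z;r_n)\;\ge\;\frac{v^*(\lba+h,z;r_n)-v^*(\lba,z;r_n)}{h},\qquad h>0,
\]
so the argument you wrote actually yields the $\liminf$ bound, not the $\limsup$. Likewise, monotonicity of $v^*_a(\cdot,z;r_n)$ (it is decreasing) only gives $v^*_a(\lba,z;r_n)\ge v^*_a(a,z;r_n)$ for $a>\lba$, again a lower bound. Thus both halves of your boundary argument push in the same direction and neither controls $\limsup_n v^*_a(\lba,z;r_n)$ from above. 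One clean way to close this gap is to pass to the limit in the HJB identity at $a=\lba$ (which holds classically by Theorem~\ref{thm:value-function}): since $v^*(\lba,\cdot;r_n)\to v^*(\lba,\cdot;r)$, any subsequential limit $p_*$ of the bounded sequence $v^*_a(\lba,z;r_n)$ must satisfy $H(p_*)+p_*(r\lba+z)=H(v^*_a(\lba,z;r))+v^*_a(\lba,z;r)(r\lba+z)$; by Lemma~\ref{lem:envelope-detail} the map $p\mapsto H(p)+p(r\lba+z)$ is strictly increasing on $[u'(r\lba+z),\infty)$, and the boundary condition $v^*_a(\lba,z;\cdot)\ge u'(\cdot\,\lba+z)$ places both $p_*$ and $v^*_a(\lba,z;r)$ on this branch, forcing $p_*=v^*_a(\lba,z;r)$.
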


\begin{proof}
Let $r_k\in R$ with $r_k\to r$, and write $v^k(a,z):=v^*(a,z;r_k)$ and $v(a,z):=v^*(a,z;r)$. 

\emph{Step 1.} For any compact subset $I\subset\bar\cO$,
$$
\sup_{k\ge1}\sup_{(a,z)\in I\times\cZ}
\bigl(|v^k(a,z)|+|v^k_a(a,z)|\bigr)<\infty
$$
by 
Lemma
\ref{lem:boundedness-v-v_a}.
Hence, the sequence $(v^k)_{k\ge1}$ is uniformly bounded and equicontinuous on $I\times\cZ$. By Arzela-Ascoli, any subsequence has a further subsequence, still denoted by
$v^k$, such that $v^k\to\hat v$ locally uniformly on $\sX$.

\emph{Step 2.}
For $k\ge1$, let
$$
F_k(a,z,\varphi,p):=\rho \varphi(z)-H(p)-p(r_k a+z)-\sL \varphi(z),
$$
and let $F$ be as in \eqref{eq:hh-problem-HJB-operator}. We claim that $\hat v$ is a constrained viscosity solution of the limiting HJB equation \eqref{eq:hh-problem-HJB} with
interest rate $r$. We briefly outline the subsolution argument. Fix $z\in\cZ$ and let $\varphi\in C^1$ touch $\hat v(\cdot,z)$ from above at $a_*\in\bar\cO$, with a strict
local maximum. By the local uniform convergence, there are $a_k\to a_*$ such that $v^k(\cdot,z)-\varphi$ has a local maximum at $a_k$, so that $ F_k(a_k,z,v^k(a_k,\cdot),\varphi_a(a_k))\le0$ for $k\ge1$.  If $H(\varphi_a(a_*))=+\infty$, the desired subsolution inequality follows. Otherwise, $H$ is continuous along $\varphi_a(a_k)\to\varphi_a(a_*)$, and passing to the limit gives $F(a_*,z,\hat v(a_*,\cdot),\varphi_a(a_*))\le0$ as desired. For the supersolution property, we note that necessarily $\varphi_a(a_*)>0$. Indeed, $F_k\ge0$ together with the boundedness of $(\varphi_a(a_k))$ and local boundedness of $v^*$ implies  that $H(\varphi_a(a_k))$ is uniformly bounded in $k\ge1$. Passing to the limit then gives the supersolution inequality.

The limit $\hat v$ is bounded from below, and the upper bound in Proposition
\ref{prop:basic-prop-value}, together with $r_k\to r$, gives
$$
\hat v(a,z)
\le
\frac1\rho u\bigl(\rho a-(\rho-r)\lba+\bar z\bigr).
$$
Since $u'(c)\to0$ as $c\to\infty$ by Assumption \ref{asm:utility-function}, the right-hand side is sublinear in $a$. Hence, by the comparison Theorem \ref{thm:comparison}, $\hat v=v^*(\cdot,\cdot;r)$ and we conclude $v^*(\cdot,\cdot;r_k)\to v^*(\cdot,\cdot;r)$
locally uniformly on $\sX$.

\emph{Step 3.}
We now prove convergence of the derivatives, and fix $z\in\cZ$. On compact intervals of $\cO$, local uniform convergence of
concave functions to the continuously differentiable concave function $v(\cdot,z)$ implies $v^k_a\to v_a$  pointwise on $\cO$ by \cite[Thm.~25.7]{Rockafellar_1970}. For the convergence at the borrowing constraint set
$$
p_k(z):=v^k_a(\lba,z),\qquad
p(z):=v_a(\lba,z),
\qquad
b_k(z):=r_k\lba+z,\qquad
b(z):=r\lba+z .
$$
The HJB equation reads
$$
\Gamma_{b_k(z)}(p_k(z))
=
\rho v^k(\lba,z)-\sL v^k(\lba,\cdot)(z)
\qquad \text{and} \qquad 
p_k(z)\ge u'(b_k(z)),
$$
where $\Gamma_b$ is as in Lemma \ref{lem:Gamma}. Since $v^k(\lba,\cdot)\to v(\lba,\cdot)$, the right-hand side converges to $\rho v(\lba,z)-\sL v(\lba,\cdot)(z)$, which equals $H(p(z))+b(z)p(z)$ by the limiting HJB equation. By Lemma \ref{lem:Gamma}, $\Gamma_b$ has continuous inverse, so that $p_k(z)\to p(z)$, $z\in\cZ$. 

Taken together, this proves $v_a^k\to v_a$ pointwise on $\sX$. Now, for each fixed $k,z$, the function $a\mapsto v_a^k(a,z)$ is decreasing, and $a\mapsto v_a(a,z)$ is continuous. Since pointwise convergence of monotone functions to a continuous limit is uniform on compact intervals, and $\cZ$ is finite, $v_a^*(\cdot,\cdot;r_k)\to v_a^*(\cdot,\cdot;r)$
locally uniformly on $\bar\cO\times\cZ$.
\end{proof}

\section{Invariant distribution}\label{sec:invariant-distribution}

This section proves the existence, uniqueness, and further properties of the invariant distribution of the optimal state process. Throughout this section, Assumptions
\ref{asm:standing}, \ref{asm:borrowing-constraint} and \ref{asm:utility-function} hold. Whenever the existence of an invariant distribution is asserted, Assumption \ref{asm:utility-tail} is also imposed. 

\subsection{Proof of Theorem \ref{thm:invariant-distribution}}

This subsection follows arguments presented in \cite{acikgoz2018existence, achdou2022income, shigeta-equilibrium} to establish that the optimally controlled state is an exponentially ergodic Markov process. 

\begin{Lem}[Euler's equation]\label{lem:Euler}
For any $a>\lba$ and $z\in\cZ$,
\begin{equation}
(\rho-r)v^*_a(a,z) = v^*_{aa}(a,z)s^*(a,z) + \sL v^*_a(a,\cdot)(z),\label{eq:Euler-interior} 
\end{equation}
where we define $v^*_{aa}(a,z)s^*(a,z):=0$ if $s^*(a,z)=0$, regardless of whether $v^*_{aa}(a,z)$ exists or not.
At the boundary $\{a=\lba\}$, 
\begin{align}\label{eq:Euler-brdy-nonzero}
(\rho-r)v^*_a(\lba,z) &= v^*_{aa}(\lba,z)s^*(\lba,z) + \sL v^*_a(\lba,\cdot)(z), &\text{if} \ \ s^*(\lba,z)> 0,\\
(\rho-r)v^*_a(\lba,z) &\geq \sL v^*_a(\lba,\cdot)(z), &\text{if}\ \  s^*(\lba,z)= 0.\label{eq:Euler-bdry-zero}
\end{align}
\end{Lem}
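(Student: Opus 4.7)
My plan is to derive the Euler identity by differentiating the Hamilton--Jacobi--Bellman equation of Theorem~\ref{thm:value-function} with respect to $a$. Writing the HJB in the equivalent form
$$\rho v^*(a,z)=u(c^*(a,z))+v^*_a(a,z)\,s^*(a,z)+\sL v^*(a,\cdot)(z),$$
and using the envelope relation $v^*_a=u'(c^*)$ from Theorem~\ref{thm:feedback-consumption}, a formal differentiation in $a$ causes the $c^*_a$ terms from the chain rule to cancel and produces the Euler identity $(\rho-r)v^*_a=v^*_{aa}s^*+\sL v^*_a$. The main obstacle is rigorous justification: $v^*$ is only known to be $C^1$ from Proposition~\ref{prop:cts-differentiability}, so $v^*_{aa}$ must be produced pointwise where needed.

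For interior points $(a_0,z)$ with $s^*(a_0,z)\neq0$ (and analogously at the boundary $a_0=\lba$ with $s^*(\lba,z)>0$, using a right-sided implicit function theorem), I would recast the HJB as the implicit equation $F(a,v^*_a(a,z))=G(a)$ with $F(a,p):=H(p)+p(ra+z)$ and $G(a):=\rho v^*(a,z)-\sL v^*(a,\cdot)(z)$. Here $G\in C^1$ in $a$ by Proposition~\ref{prop:cts-differentiability}; $F$ is jointly $C^1$ in $(a,p)$ because $u\in C^2$ makes $H\in C^1$ with $H'(p)=-(u')^{-1}(p)$; and $F_p(a,v^*_a(a,z))=-(u')^{-1}(v^*_a(a,z))+ra+z=s^*(a,z)\neq0$ at $a_0$. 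The implicit function theorem furnishes a unique $C^1$ local solution $p(a)$ of $F(a,p)=G(a)$; continuity of $v^*_a(\cdot,z)$ together with local uniqueness forces $p\equiv v^*_a(\cdot,z)$ near $a_0$, so $v^*_a(\cdot,z)\in C^1$ locally. Chain-rule differentiation of the identity yields $rv^*_a+s^*v^*_{aa}=\rho v^*_a-\sL v^*_a$, which rearranges to the Euler equation.

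For points with $s^*(a_0,z)=0$, the claim reduces, under the stated convention, to $\phi(a_0)=0$ in the interior and $\phi(\lba)\geq0$ at the boundary, where $\phi(a):=(\rho-r)v^*_a(a,z)-\sL v^*_a(a,\cdot)(z)$ is continuous in $a$ by Proposition~\ref{prop:cts-differentiability}. My strategy is to subtract the reduced HJB at $a_0$ (which collapses to $\rho v^*(a_0,z)=u(ra_0+z)+\sL v^*(a_0,\cdot)(z)$ since $c^*(a_0,z)=ra_0+z$) from the HJB at a nearby $a$, divide by $a-a_0$, and let $a\to a_0$. An elementary algebraic identity gives
$$\frac{u(c^*(a,z))-u(c^*(a_0,z))}{a-a_0}+\frac{v^*_a(a,z)\,s^*(a,z)}{a-a_0}=rv^*_a(a,z)+\bigl(u'(\xi_a)-v^*_a(a,z)\bigr)\Delta_a,$$
where $\Delta_a:=(c^*(a,z)-c^*(a_0,z))/(a-a_0)$ and $\xi_a$ lies between $c^*(a_0,z)$ and $c^*(a,z)$. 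The prefactor $u'(\xi_a)-v^*_a(a,z)\to0$ by continuity, so whenever $\Delta_a$ stays bounded the residual vanishes and I obtain $\phi(a_0)=0$.

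The main technical subtlety is the case when $\Delta_a$ may be unbounded. On the open set $\{s^*\neq0\}$ the interior Euler identity combined with concavity $v^*_{aa}\leq0$ pins the sign of $\phi=v^*_{aa}s^*$ to that of $-s^*$; at an interior zero of $s^*$ the signs on either side can only be reconciled by $\phi(a_0)=0$ via continuity. At the boundary with $s^*(\lba,z)=0$: if $s^*(a,z)\leq0$ on a right neighborhood of $\lba$, then $\phi=v^*_{aa}s^*\geq0$ there and continuity yields $\phi(\lba)\geq0$; if instead $s^*(a,z)>0$ near $\lba$ (which forces $r>0$), the constraint $c^*(a,z)\leq ra+z$ together with monotonicity of $c^*$ pins $0\leq\Delta_a<r$, returning to the bounded-$\Delta_a$ branch with $\phi(\lba)=0$.
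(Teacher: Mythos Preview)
Your approach is correct and differs meaningfully from the paper's. For the case $s^*\neq0$ (interior or boundary) you both argue by differentiation, though you spell out the implicit-function-theorem step more explicitly. The substantive difference is at interior zeros of $s^*$: the paper invokes the second-order estimate of Proposition~\ref{prop:semiconvex} to bound $v^*_{aa}$ along approximating sequences, so that $v^*_{aa}(a_n,z)s^*(a_n,z)\to0$ directly. Your difference-quotient identity bypasses that proposition entirely, which is a genuine simplification. At the boundary with $s^*(\lba,z)=0$, the paper rules out the pathological scenario $v^*_{aa}\to-\infty$ with $s^*\geq0$ via a contradiction using $s^*_a\to-\infty$; your observation that $s^*>0$ near $\lba$ forces $r>0$ and hence $0\leq\Delta_a<r$ is cleaner and actually yields equality in \eqref{eq:Euler-bdry-zero} in that sub-case.

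One point deserves tightening. Your ``signs on either side'' step tacitly assumes that in the unbounded-$\Delta_a$ regime $s^*$ takes both signs near $a_0$, which is not obvious as stated. A cleaner way to close the interior case: from your identity the residual $(u'(\xi_a)-v^*_a(a,z))\Delta_a$ necessarily converges to $\phi(a_0)$, and since $u'$ is decreasing and $c^*$ is increasing, the residual is $\geq0$ for $a>a_0$ and $\leq0$ for $a<a_0$; hence $\phi(a_0)=0$ unconditionally, with no case split on boundedness of $\Delta_a$ needed. This both fills the gap and shortens the argument.
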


\begin{proof}
Fix $z\in\cZ$ and set
$$
p_z(a):=v_a^*(a,z),\quad
c_z(a):=(u')^{-1}(p_z(a)),\quad
s_z(a):=ra+z-c_z(a),\quad R_z(a):=(\rho-r)p_z(a)-\sL p(a,\cdot)(z).
$$
Recall $\Gamma_b(p)=H(p)+bp$ from Lemma \ref{lem:Gamma}. The HJB equation \eqref{eq:hh-problem-HJB} may be written as
\be\label{eq:proof-euler-I}
\Gamma_{ra+z}(p_z(a))
=
\rho v^*(a,z)-\sL v^*(a,\cdot)(z).
\ee

\emph{Interior case $a>\lba$.}
First suppose $a>\lba$ and $s_z(a)\ne0$. Since $H'(p_z(a))=-c_z(a)$, 
$$
\partial_p\{\Gamma_{ra+z}(p)\}\big|_{p=p_z(a)}
=
H'(p_z(a))+ra+z
=
s_z(a)\ne0,
$$
and the right-hand side of \eqref{eq:proof-euler-I} is continuously differentiable in $a$,
the implicit function theorem implies that $p_z$ is $C^1$
in a neighborhood of $a$. Differentiating \eqref{eq:proof-euler-I} on this neighborhood yields \eqref{eq:Euler-interior}.

It remains to consider an interior point $a>\lba$ with $s_z(a)=0$.
Set $b:=ra+z$ and $p:=p_z(a)$. Then $c_z(a)=b$ and $p = u'(b)$.
Subtracting the HJB equation at $a$ from the HJB equation at $a+h$ leads to
$$
\rho\frac{v^*(a+h,z)-v^*(a,z)}{h} - r p_z(a+h)  - \sL\left(
\frac{v^*(a+h,\cdot)-v^*(a,\cdot)}{h}
\right)(z)
=
\frac{
\Gamma_b(p_z(a+h))-\Gamma_b(p)
}{h} .
$$
For $h>0$, the right-hand side is non-negative. Letting
$h\downarrow0$ yields
$$
R_z(a)=(\rho-r)p_z(a)-\sL p(a,\cdot)(z)\ge0.
$$
For $h<0$, the first term on the right-hand side is non-positive. Letting
$h\uparrow0$ yields $R_z(a)\le0$, hence $R_z(a)=0$. With the convention
$v_{aa}^*(a,z)s^*(a,z):=0$ whenever $s^*(a,z)=0$, this proves the interior identity.

\emph{Boundary case $a=\lba$.}  If $s_z(\lba)>0$, then by continuity
$s_z>0$ on a right-neighborhood of $\lba$. The preceding implicit-function
argument applies on $(\lba,\lba+\varepsilon)$ and gives
$$
p_z'(a)=\frac{R_z(a)}{s_z(a)}.
$$
The right-hand side extends continuously to $a=\lba$, so $p_z$ has a right
derivative at $\lba$ and \eqref{eq:Euler-brdy-nonzero} follows. Finally, suppose $s_z(\lba)=0$. Repeating the preceding difference-quotient
argument with $a=\lba$ and $h>0$ only gives $R_z(\lba)\ge0$. This proves the boundary inequality \eqref{eq:Euler-bdry-zero}.
\end{proof}

If $r<\rho$, then Euler's equation immediately implies that there exists an income state in which the individuals dissaves. Indeed, for any $a>\lba$, let 
$$
z(a)=\mathrm{arg\,max}\{v^*_a(a,z):z\in\cZ\}=\mathrm{arg\,min}\{c^*(a,z):z\in\cZ\}.
$$
If $s^*(a,z(a))\geq0$, then Lemma \ref{lem:Euler} shows 
$$
0<(\rho-r)v^*_a(a,z(a)) \leq \sum_{y\neq z(a)} \lambda(z(a),y)\left(v^*_a(a,y)-v^*_a(a,z(a))\right)\leq 0,
$$
where the last inequality follows by definition of $z(a)$. This is a contradiction, and hence the lowest consumption level always corresponds to negative savings. In fact, the following lemma holds.

\begin{Lem}\label{lem:properties-saving-rules}
Assume $r\le \rho$. Then:
\begin{enumerate}[(i)]
\item \label{lem:properties-saving-rules-I} $s^*(a,\lbz)<s^*(\lba,\lbz)=0$ for all $a>\lba$.
\item  \label{lem:properties-saving-rules-II} $\sL v^*_a(\lba,\cdot)(\lbz)=\sum_{y\neq \lbz}\lambda(\lbz,y)(v^*_a(\lba,y)-v^*_a(\lba,\lbz))<0$.
\item  \label{lem:properties-saving-rules-III} If $r<\rho$, there exists $\bar a<\infty$ such that $s^*(a,z)<0$ for all $(a,z)\in(\bar a,\infty)\times\cZ$. The upper wealth limit $\bar a$, viewed as a function of $r$, is locally bounded.
\item  \label{lem:properties-saving-rules-IV} Consider the (deterministic) ODE
$$
\dot x(t) = s^*(x(t),\underline{z}), \quad x(0)\geq\lba,
$$
and define $\tau:=\inf\{t\geq0:x(t)=\lba\}$ as the first hitting time of $\lba$. For any $L<\infty$ and any compact set $K$ of admissible interest rates, there exists $T_{K,L}<\infty$ such that $\tau\leq T_{K,L}$, uniformly in $x(0)\in[\lba,L]$ and $r\in K$.
\end{enumerate}
The parts \eqref{lem:properties-saving-rules-I} and \eqref{lem:properties-saving-rules-II} hold under Assumptions \ref{asm:standing}--\ref{asm:utility-function}. Parts \eqref{lem:properties-saving-rules-III} and \eqref{lem:properties-saving-rules-IV} require Assumption \ref{asm:utility-tail} in addition.
\end{Lem}

\begin{proof}
We provide proofs for \eqref{lem:properties-saving-rules-I} and \eqref{lem:properties-saving-rules-II} in the case $r=\rho$ in Appendix \ref{app:r-equal-rho}.\\

$(i)$ Let $r<\rho$. Towards a contradiction assume $s^*(a,\underline{z})\geq0$ for some $a>\lba$. Let $z_*\in \mathrm{arg\,max}\{v_a^*(a,z):z\in\cZ\}$. Then 
$$
c^*(a,z_*)\leq c^*(a,\lbz)\leq ra+\lbz\leq ra+z_*\qquad \Rightarrow \qquad s^*(a,z_*)\geq0.
$$
Further, $\sL v^*_a(a,\cdot)(z_*)\le0$ and, by concavity, $v_{aa}^*(a,z_*)s^*(a,z_*)\le 0$. By Lemma \ref{lem:Euler},
$$
0< (\rho-r) v_a^*(a,z_*) = v_{aa}^*(a,z_*)s^*(a,z_*)+\sL v^*_a(a,\cdot)(z_*)\le0,
$$
contradiction. Hence, $s^*(a,\lbz)<0$ for $a>\lba$. Finally, continuity and the boundary condition yield $s^*(\lba,\lbz)=0$.

$(ii)$ Let $r<\rho$. Towards a contradiction, assume $\sL v^*_a(\lba,\cdot)(\lbz)\geq0$. Then, there exists at least one $z\neq\lbz$ with $v_a^*(\lba,z)\geq v_a^*(\lba,\lbz)$. Let $z_*\in \mathrm{arg\,max}\{v_a^*(\lba,z):z\neq \lbz\}$. As above, this implies $\sL v^*_a(\lba,\cdot)(z_*)\leq0$, and also
$$
s^*(\lba,z_*)= r\lba + z_*-c^*(\lba,z_*)\geq r\lba + z_*-c^*(\lba,\lbz) > s^*(\lba,\lbz)=0.
$$
Lemma \ref{lem:Euler} then implies
$$
\sL v^*_a(\lba,\cdot)(z_*) = (\rho-r) v_a^*(\lba,z_*)- v_{aa}^*(\lba,z_*) s^*(\lba,z_*)>0,
$$
contradiction.

$(iii)$ The claim is obvious if $r<0$, so without loss of generality assume $r\in[0,\rho)$.

\emph{Case $r=0$.} Proposition \ref{prop:basic-prop-value} implies that the value function $v^*(\cdot,z;0)$ is concave, increasing, and sublinear for every $z\in\cZ$. Since it is continuously differentiable by Proposition \ref{prop:cts-differentiability}, $v^*_a(\cdot,z;0)$ is non-negative and decreasing. Hence $\ell(z):=\lim_{a\to\infty} v^*_a(a,z;0)$ exists in $[0,\infty)$. Sublinearity forces $\ell(z)=0$. Therefore, $c^*(a,z;0)=(u')^{-1}(v^*_a(a,z;0))\to\infty$ as $a\to\infty$. Since $\cZ$ is finite, there exists $\overline{a}<\infty$ such that $c^*(a,z;0)> \overline{z}:=\max \cZ$ for all $a>\overline{a}$ and $z\in\cZ$. Thus, $s^*(a,z;0)=z-c^*(a,z;0)<0$ for all $a>\overline{a}$, $z\in\cZ$ as desired.

\emph{Case $r\in(0,\rho)$.} From the above discussion, $z(a):=\mathrm{arg\,max}\{v_a(a,z):z\in\cZ\}$ satisfies $s^*(a,z(a))<0$. Then, we have the following implications
\begin{align*}
s^*(a,z)\geq0\quad &\Rightarrow\quad c^*(a,z)\leq ra+z\quad \Rightarrow\quad v_a(a,z)\geq u'(ra+z),\quad \text{and}\\
s^*(a,z(a))<0\quad &\Rightarrow\quad c^*(a,z(a))> ra+z(a)\quad \Rightarrow\quad v_a(a,z(a))< u'(ra+z(a)).
\end{align*}
Fix $z\in\cZ$ and let $\Lambda:=\max_{z\in\cZ}\sum_{y\neq z}\lambda(z,y)\in(0,\infty)$. Then, $s^*(a,z)\geq0$ implies
\begin{align*}
0<\rho-r &\leq \sum_{y\neq z} \lambda(z,y)\left(\frac{v_a(a,y)}{v_a(a,z)}-1\right) \leq \Lambda \, \left(\frac{v_a(a,z(a))}{v_a(a,z)}-1\right) \leq \Lambda\,\left(\frac{u'(ra+z(a))}{u'(ra+z)}-1\right).
\end{align*}
Define 
$$
\bar a :=\inf\{a\geq\lba:\max_{z\in\cZ}u'(ra+z(a))/u'(ra+z)\leq 1+(\rho-r)/(2\Lambda)\}.
$$
By Assumption \ref{asm:utility-tail}, $\bar a <\infty$.
By the previous display, for all $a\ge \bar a $ savings must be strictly negative.

Let $\bar a(r)$ denote the upper wealth limit as a function of the interest rate. Local boundedness of $\bar a(r)$ on compact subsets of $(0,\rho)$ follows directly from the preceding construction. It remains to justify local boundedness at $r=0$.

By Assumption \ref{asm:utility-tail}, we may choose $R>0$ such that
$$
\max_{y,z\in\cZ}\frac{u'(x+y)}{u'(x+z)} \le 1+\frac{\rho}{4\Lambda}, \qquad x\ge R.
$$
For $r=0$, we know that $c^*(a,z;0)\to\infty$. Choose
$A_0>\max\{\lba,0\}$ so large that $c^*(A_0,z;0)>R+\bar z+1$ for all $z\in\cZ$.  By Lemma \ref{lem:joint-continuity-(a,r)} and the continuity of $(u')^{-1}$, there is
$r_0\in(0,\rho/2]$  with
$$
c^*(A_0,z;r)>R+\bar z,\qquad \forall \, z\in\cZ,\  |r|\le r_0 .
$$
Because $a\mapsto c^*(a,z;r)$ is increasing, it follows that, for $|r|\le r_0$ and $a\ge A_0$,
$$
s^*(a,z;r)<0
$$
whenever either $r\le0$, or $r>0$ and $ra\le R$. If instead $0<r\le r_0$ and $ra\ge R$, then
$$
\max_{y,z\in\cZ}\frac{u'(ra+y)}{u'(ra+z)}
\le 1+\frac{\rho}{4\Lambda}
\le 1+\frac{\rho-r}{2\Lambda},
$$
and the argument above rules out $s^*(a,z;r)\ge0$. Hence
$$
s^*(a,z;r)<0,\qquad \forall \, a\ge A_0,\ z\in\cZ,\ |r|\le r_0,
$$
so $\bar a(r)\le A_0$ in a neighborhood of $r=0$.\\

$(iv)$ Without the claim about local boundedness, this follows along the lines of \cite[Proposition 1]{achdou2022income} for the two-state case, see also \cite[Lemma 10]{shigeta-equilibrium}. Let $r\leq\rho$. First note that, for any $a>\lba$
$$
s^*_a(a,\lbz)=r-\frac{v^*_{aa}(a,\lbz)}{u''(c^*(a,\lbz))}\quad \Rightarrow\quad v^*_{aa}(a,\lbz) = u''(c^*(a,\lbz))(r-s^*_a(a,\lbz)).
$$
Therefore, \eqref{eq:Euler-interior} implies that
$$
(r-s^*_a(a,\lbz))s^*(a,\lbz) = \frac{(\rho-r)v^*_a(a,\lbz)-\sL v^*_a(a,\cdot)(\lbz)}{u''(c^*(a,\lbz))}
$$
converges as $a\downarrow\lba$ to a finite, strictly negative value by $(ii)$, say $-\fs(r)<0$. By Lemma \ref{lem:joint-continuity-(a,r)}, $\fs(r)$ is a continuous function. By l'H\^{o}spital's rule, for each $r$,
$$
\lim_{a\downarrow\lba} \frac{[s^*(a,\lbz)]^2}{a-\lba} = 2\lim_{a\downarrow\lba} s_a^*(a,\lbz) s^*(a,\lbz) = 2\mathfrak{s}(r)>0.
$$
Therefore, there exists $\varepsilon>0$ such that, for all $\lba<a<\lba+\varepsilon$,
$$
\frac{[s^*(a,\lbz)]^2}{a-\lba} \geq \mathfrak{s}(r)>0,\qquad \Longrightarrow\qquad s^*(a,\lbz)\leq - \sqrt{\mathfrak{s}(r)(a-\lba)}.
$$
Since $s^*(a,\lbz)<0$ on $(\lba,\infty)$, it reaches any neighborhood of $\lba$ in finite time. Consider two initial-value problems:
\begin{equation*}
\left\{ 
\begin{aligned}
\dot x(t) &= s^*(x(t),\lbz)\\
x(0)
&=\lba+\varepsilon
\end{aligned}
\right.
\qquad \text{and} \qquad  
\left\{ 
\begin{aligned}
\dot y(t) &= -\sqrt{\fs(r)(y(t)-\lba)}\\
y(0)
&=\lba+\varepsilon
\end{aligned}
\right.
\end{equation*}
Then,
$$
y(t) = \lba + \Big(\!\max\{\sqrt{\varepsilon}-\frac{\sqrt{\fs(r)}}{2 }t,0\}\!\Big)^2,\quad \text{so that}\quad y(t)=\lba \ \text{for}\ t\geq 2\sqrt{\varepsilon/\fs(r)}.
$$
By comparison, $\lba \leq x(t) \leq y(t)$ until $y(\cdot)$ hits $\lba$. Hence $x(\cdot)$ hits $\lba$ no later than $y(\cdot)$, which implies $\tau\leq 2\sqrt{\varepsilon/\fs(r)}<\infty$. This implies the claim.
\end{proof}

We now turn to the existence and uniqueness of an invariant measure. For $(t,a,z,B)\in[0,\infty)\times\sX\times\sB(\sX)$, define the transition kernels 
$$
P_t(a,z,B):=\P_{a,z}[(a^*_t,z_t)\in B]
$$
where
$$
\frac{\d}{\dt}a^*_t = r a^*_t + z_t - c^*(a_t^*,z_t),\qquad (a_0,z_0)=(a,z)
$$
denotes the optimally controlled joint Markov state process.  We let $\bar a$ be as in the above Lemma \ref{lem:properties-saving-rules} and set 
$$
\cS := [\lba,\bar a]\times \cZ.
$$
Clearly, since the drift of the optimal process $X^*=(a^*_t,z_t)_{t\ge0}$ is strictly negative outside of $\cS$, we can restrict our attention to $\cS$. Using the compactness of $\cS$ and the weak Feller property of $X^*$, it is easy to see that an invariant probability measure for $X^*$ exists.\footnote{In a related model, Bayer, Rendall and W\"alde \cite{bayer2019invariant} observe that the optimal process does not satisfy the \emph{strong} Feller property. This is also true for our process $X^*$.} We follow \cite{acikgoz2018existence, shigeta-equilibrium} who rely on the general theory of ergodic Markov processes to establish uniform ergodicity. The idea is to invoke \cite[Theorem 16.0.2]{meyn2012markov} for skeleton chains, which proves the equivalence of the state space $\cS$ being ``small'' and uniform ergodicity. Proving smallness involves lower-bounding the transition probabilities by a non-trivial measure. In light of Lemma \ref{lem:properties-saving-rules} \eqref{lem:properties-saving-rules-IV}, we may choose a Dirac mass at the borrowing constraint $(\lba,\lbz)$, and we refer to \cite[Proposition 6]{shigeta-equilibrium} and \cite{acikgoz2018existence} for details.

\begin{Prop}\label{prop:existence-uniqueness-G}
Assume $r<\rho$.
There exists a unique invariant measure $G^*$ of the Markov process $X^*=(a^*_t,z_t)$ on the state space $\cS$. Further, $X^*$ is exponentially ergodic, i.e.,
$$
\sup_{(a,z)\in\cS}\,\|P_t(a,z,\cdot)-G^*\|_{\mathrm{TV}}\leq C \kappa^t
$$
where $C<\infty$ and $\kappa<1$. Here, $\|\cdot\|_{\mathrm{TV}}$ denotes the total variation norm.
\end{Prop}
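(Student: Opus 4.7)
The plan is to establish existence via a Krylov-Bogolyubov construction and then extract uniqueness together with exponential ergodicity from a minorization condition for the skeleton chain $(P_{n t_0})_{n \geq 0}$, following the strategy outlined in \cite{acikgoz2018existence, shigeta-equilibrium}. First, I would verify that $\cS = [\lba, \bar a] \times \cZ$ is forward-invariant: Lemma \ref{lem:properties-saving-rules} \eqref{lem:properties-saving-rules-III} forces $s^*(\bar a, z) \leq 0$ for every $z$ by continuity, so trajectories cannot leave $[\lba, \bar a]$. Combined with the weak Feller property (inherited from continuity of $c^*(\cdot, z)$ in Theorem \ref{thm:feedback-consumption} and ODE stability between jump times) and the compactness of $\cS$, the time-averaged kernels $T^{-1} \int_0^T P_t(a_0, z_0, \cdot) \, \dt$ have weak limit points and any such limit is $(P_t)$-invariant.

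The core step is to produce $t_0 > 0$ and $\varepsilon > 0$ with
\[
P_{t_0}(a, z, B) \,\geq\, \varepsilon \, \delta_{(\lba, \lbz)}(B) \qquad \text{for every } (a, z) \in \cS, \ B \in \sB(\cS).
\]
Set $\tau_{\max} := \sup_{a_0 \in [\lba, \bar a]} \tau(a_0)$, where $\tau(\cdot)$ is the deterministic hitting time from Lemma \ref{lem:properties-saving-rules} \eqref{lem:properties-saving-rules-IV}. The subquadratic bound $s^*(a, \lbz) \leq -\sqrt{\fs(r)(a - \lba)}$ near the boundary extracted from the proof of that lemma, combined with the strict negativity of $s^*(\cdot, \lbz)$ away from the boundary (Lemma \ref{lem:properties-saving-rules} \eqref{lem:properties-saving-rules-I}) and continuity, yields $\tau_{\max} < \infty$. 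Fix any $t_0 > 2 \tau_{\max}$. By irreducibility of $\boldsymbol{z}$ on the finite set $\cZ$, $p_* := \inf_{z \in \cZ} \P_z[z_{\tau_{\max}} = \lbz]$ is strictly positive, and by the Markov property the event $E := \{ z_s = \lbz \text{ for all } s \in [\tau_{\max}, t_0] \}$ satisfies $\P_z(E) \geq \varepsilon := p_* \, e^{-\Lambda (t_0 - \tau_{\max})} > 0$, where $\Lambda := \sum_{y \neq \lbz} \lambda(\lbz, y)$. On $E$, forward invariance gives $a^*_{\tau_{\max}} \in [\lba, \bar a]$, whence the autonomous flow $\dot a = s^*(a, \lbz)$ drives wealth to $\lba$ within the additional time $\tau_{\max}$ and then keeps it there since $s^*(\lba, \lbz) = 0$. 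Therefore $(a^*_{t_0}, z_{t_0}) = (\lba, \lbz)$ almost surely on $E$, uniformly over $(a, z) \in \cS$.

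The minorization identifies $\cS$ as a small set for the skeleton $(P_{n t_0})_{n \geq 0}$. Applying \cite[Theorem 16.0.2]{meyn2012markov} yields a unique $P_{t_0}$-invariant probability measure $G^*$ on $\cS$ together with the geometric bound
\[
\sup_{(a,z) \in \cS} \|P_{n t_0}(a, z, \cdot) - G^* \|_{\mathrm{TV}} \,\leq\, C_0 \, \kappa_0^n, \qquad n \geq 0,
\]
for some $C_0 < \infty$ and $\kappa_0 \in (0,1)$. Any $(P_t)$-invariant measure is a fortiori $P_{t_0}$-invariant, so it agrees with $G^*$; conversely, $P_s G^*$ is $P_{t_0}$-invariant for every $s \geq 0$, so by uniqueness $P_s G^* = G^*$, and the semigroup together with total-variation contraction transfers the discrete bound to continuous time with $C := C_0 / \kappa_0$ and $\kappa := \kappa_0^{1/t_0}$. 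The main obstacle lies in the minorization step, specifically the uniform control of $\tau_{\max}$ over initial wealth levels in $[\lba, \bar a]$, which requires combining the subquadratic estimate of $s^*(\cdot, \lbz)$ near the borrowing constraint with its strict negativity in the interior.
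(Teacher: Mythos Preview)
Your proposal is correct and follows essentially the same approach as the paper: existence via Krylov--Bogolyubov on the compact invariant set $\cS$, then smallness of $\cS$ for a skeleton chain by minorizing with the Dirac mass at $(\lba,\lbz)$ using Lemma~\ref{lem:properties-saving-rules}\,\eqref{lem:properties-saving-rules-IV}, and finally invoking \cite[Theorem 16.0.2]{meyn2012markov}. You have simply written out the details that the paper leaves to the references \cite{acikgoz2018existence,shigeta-equilibrium}, including the explicit passage from the discrete-time bound to continuous time.
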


\subsection{Proof of Theorem \ref{thm:properties-G} \eqref{thm:properties-G-hand-to-mouth}}

\emph{Sufficiency.} For $(a,z)\in\sX$, define the function 
$$
w(a,z) := \alpha(z)(a-\lba)+\beta(z),\quad \text{where} \quad \alpha(z) := u'(r\lba+z)\quad \text{and}\quad \beta(z) := v^*(\lba,z).
$$
Now, $w$ is a viscosity supersolution on $(\lba,\infty)$ to \eqref{eq:hh-problem-HJB} if and only if
$$
F(a,z,w(a,\cdot),w_a(a,z)) 
= \rho w(a,z) - H(\alpha(z)) - \alpha(z)(r a+z) - \sL w(a,\cdot)(z)\geq0\qquad \forall a>\lba.
$$
By considering linear and constant terms separately, this is equivalent to
$$
(\rho-r)\alpha(z) - \sum_{y\neq z} \lambda(z,y)(\alpha(y)-\alpha(z))\geq0\qquad \text{and}\qquad F(\lba,z,\beta,\alpha(z))\geq0.
$$
By the boundary condition, $v^*_a(\lba,z)\geq u'(r\lba+z)=\alpha(z)$. Since $p\mapsto H(p)+py$ has a minimum in $p=u'(y)$, 
\begin{align*}
F(\lba,z,w(\lba,z),w_a(\lba,z)) &= \rho w(\lba,z) - H(\alpha(z))- \alpha(z)(r\lba +z) - \sL w(\lba,\cdot)(z)\\
&\geq \rho v^*(\lba,z) - H(v^*_a(\lba ,z))- v^*_a(\lba,z)(r\lba +z) - \sL v^*(\lba,\cdot)(z)\\
&=F(\lba,z,v^*(\lba,z),v^*_a(\lba,z))=0.
\end{align*}
Therefore, $w$ is a supersolution if and only if 
\be\label{eq:hand-to-mouth-proof}
r\leq \rho- \max_{z\in\cZ}\ \sum_{y\neq z} \lambda(z,y)\left(\frac{u'(r\lba+y)}{u'(r\lba+z)} -1\right).
\ee
In this case, by the comparison result in Theorem \ref{thm:comparison}, $w-v^*\geq 0$ and by construction, $w(\lba,z)-v^*(\lba,z)=0$. Since both functions admit right derivatives at $\lba$, we have $w_a(\lba,z)-v^*_a(\lba,z)\geq0$. Using the boundary constraint, this implies that $v_a(\lba,z)=u'(r\lba +z)$. By the characterization of optimal feedback controls $c^*=(u')^{-1}(v^*_a)$, we see that optimal savings satisfy $s^*(\lba,z;r)=0$. Therefore, the product measure $\delta_{\{\lba\}}(\d a) \otimes \nu(\dz)$, where $\nu\in\sP(\cZ)$ is the stationary distribution of the Markov chain $(z_t)_{t\geq0}$, is an invariant distribution of the optimal state process. By uniqueness of the stationary measure (see Proposition \ref{prop:existence-uniqueness-G}), it is the unique invariant distribution.

\vspace{1em}
\emph{Necessity.} Now assume that the stationary distribution $G^*$  is supported on $\{\lba\}\times\cZ$. Let $(\pi_z)_{z\in\cZ}$ denote the (unique) stationary distribution of the income process $(z_t)$. Now $s^*(\lba,z)\ge0$ due to the boundary constraint, and by stationarity of the optimal wealth process,
$$
\sum_{z\in\cZ} \pi_z s^*(\lba,z)= \int s^*(a,z)\, G^*(\da,\dz)=0\quad \text{so that}\quad s^*(\lba,z)=0\quad \forall z\in\cZ.
$$
This implies $v^*_a(\lba,z)=\alpha(z)$ as before. Using the HJB equations at $\lba$ and $\lba+h$, we obtain
\begin{align*}
\rho \frac{v^*(\lba+h,z)-v^*(\lba,z)}{h} &= \frac{1}{h}\Big(H(v^*_a(\lba+h,z)) +v_a^*(\lba+h,z)(r\lba +z)  - H(v_a^*(\lba,z)) - v^*_a(\lba,z)(r\lba+z))\Big) \\
&\quad + r v_a^*(\lba+h,z) + \sL \left(\frac{v^*(\lba+h,\cdot)-v^*(\lba,\cdot)}{h}\right)(z).
\end{align*}
Now 
$$
H(v^*_a(\lba+h,z)) + v_a^*(\lba+h,z)(r\lba+z) - H(\alpha(z))-\alpha(z) (r\lba+z)\geq0,
$$
so that taking $h\downarrow0$ shows
$$
(\rho-r)\alpha(z)=(\rho-r) v_a^*(\lba,z)\geq \sL v^*_a(\lba,\cdot)(z)=\sL\alpha(z).
$$
Dividing by $\alpha(z)$ yields the claim.
\hfill $\Box$

\subsection{Proof of Theorem \ref{thm:properties-G} \eqref{thm:properties-G-linearity-of-A-in-tau} }

We first observe the following scaling property of the optimal feedback controls inherited from the CRRA utility function. Let $v^*(a,z;w,\tau)$ and $c^*(a,z;w,\tau)$ denote the value function and corresponding optimal feedback consumption, respectively, for the parameters $(w,\tau)$.

\begin{Lem}\label{lem:CRRA-scaling}
In addition to the standing assumptions of this section, \ref{asm:standing}--\ref{asm:utility-function}, assume the no-borrowing constraint holds and assume CRRA utility.
For any $\tau,\tau'<1$, $w,w'>0$ and $(a,z)\in\sX$, 
\begin{equation*}
\alpha \, c^*(a,z;w',\tau') = \alpha'\, c^*\left(\frac{\alpha}{\alpha'}a,z;w,\tau \right),
\end{equation*}
where $\alpha:=w(1-\tau)>0$ and $\alpha':=w'(1-\tau')>0$.
\end{Lem}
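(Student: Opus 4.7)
My plan is to exploit the linearity of the wealth dynamics and the multiplicative (CRRA) structure of the utility to convert the $(w,\tau)$ household problem into the $(w',\tau')$ problem by a single linear rescaling. Set $\alpha:=w(1-\tau)>0$ and $\alpha':=w'(1-\tau')>0$, and put $\lambda:=\alpha'/\alpha>0$. The observation driving everything is that if $(a_t)_{t\ge0}$ solves $\d a_t=[ra_t+\alpha z_t-c_t]\,\dt$ with $a_0\ge0$, then $\tilde a_t:=\lambda a_t$ and $\tilde c_t:=\lambda c_t$ satisfy $\d\tilde a_t=[r\tilde a_t+\alpha' z_t-\tilde c_t]\,\dt$ with $\tilde a_0=\lambda a_0\ge0$. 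Because the no-borrowing constraint $a_t\ge0$ is $0$-homogeneous and $\lambda>0$, admissibility is preserved in both directions: the map $\bc\mapsto\lambda\bc$ is a bijection between $\sC(a,z;w,\tau)$ and $\sC(\lambda a,z;w',\tau')$.

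Next, I compute how the objective transforms. For CRRA with $\gamma\ne 1$, $u(\lambda c)=\lambda^{1-\gamma}u(c)$, so
\[
J(\lambda a,z,\lambda\bc;w',\tau')=\lambda^{1-\gamma}J(a,z,\bc;w,\tau),
\]
which is a strictly increasing affine transformation of the objective (since $\lambda^{1-\gamma}>0$). For $\gamma=1$, $u(\lambda c)=\log\lambda+u(c)$, and since $\E\!\int_0^\infty e^{-\rho t}\,\dt=1/\rho$,
\[
J(\lambda a,z,\lambda\bc;w',\tau')=J(a,z,\bc;w,\tau)+\frac{\log\lambda}{\rho},
\]
again a strictly monotone transformation. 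In either case, taking suprema over the (bijectively matched) admissible sets yields
\[
v^*(\lambda a,z;w',\tau')=\Phi_\lambda\bigl(v^*(a,z;w,\tau)\bigr),
\]
where $\Phi_\lambda(v)=\lambda^{1-\gamma}v$ for $\gamma\ne 1$ and $\Phi_\lambda(v)=v+(\log\lambda)/\rho$ for $\gamma=1$.

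From the characterization $c^*=(u')^{-1}(v^*_a)$ in Theorem \ref{thm:feedback-consumption} and Proposition \ref{prop:cts-differentiability}, I differentiate in $a$. In the case $\gamma\ne 1$, I obtain $v^*_a(\lambda a,z;w',\tau')\cdot\lambda=\lambda^{1-\gamma}v^*_a(a,z;w,\tau)$, i.e.\ $v^*_a(\lambda a,z;w',\tau')=\lambda^{-\gamma}v^*_a(a,z;w,\tau)$, and since $(u')^{-1}(p)=p^{-1/\gamma}$ this gives
\[
c^*(\lambda a,z;w',\tau')=\lambda\,c^*(a,z;w,\tau).
\]
In the log case, $v^*_a(\lambda a,z;w',\tau')=\lambda^{-1}v^*_a(a,z;w,\tau)$, and $(u')^{-1}(p)=1/p$ yields the same identity. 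Substituting $a\mapsto(\alpha/\alpha')a$, i.e.\ $\lambda a=a$, and multiplying through by $\alpha$ produces exactly the claimed relation $\alpha\,c^*(a,z;w',\tau')=\alpha'\,c^*((\alpha/\alpha')a,z;w,\tau)$.

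The only delicate point is the interchange between scaling and the supremum at the boundary $a=\underline a=0$; this is handled automatically because the no-borrowing set is a cone fixed by $\lambda>0$, so admissibility transfers without boundary effects and Theorem \ref{thm:value-function} gives a classical solution whose derivative at the boundary transforms in the way used above. A purely control-theoretic alternative—matching open-loop optima directly and invoking the Lebesgue-a.e.\ uniqueness of optimal feedback from Theorem \ref{thm:feedback-consumption}—would also work, but the value-function route I outlined gives the pointwise identity immediately.
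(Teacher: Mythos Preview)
Your argument is correct. The approach differs from the paper's in a useful way. The paper defines the candidate $V(a,z):=(\alpha'/\alpha)^{1-\gamma}v^*(\alpha a/\alpha',z;w,\tau)$, verifies by direct computation of the CRRA Hamiltonian that $V$ is a classical solution of the HJB equation with parameters $(w',\tau')$, and then (implicitly invoking the uniqueness in Theorem~\ref{thm:value-function}) identifies $V$ with $v^*(\cdot,\cdot;w',\tau')$ to read off the feedback relation. Your route bypasses the HJB entirely: the bijection $\bc\mapsto\lambda\bc$ between admissible sets, together with the CRRA transformation of the objective, gives the value-function identity directly from the definition, and then a single differentiation yields the consumption relation. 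This is more elementary---no Hamiltonian computation and no appeal to comparison are needed---and it handles the logarithmic case $\gamma=1$ cleanly via the additive shift $(\log\lambda)/\rho$, whereas the paper's multiplicative ansatz $(\alpha'/\alpha)^{1-\gamma}$ degenerates to $1$ when $\gamma=1$ and would require a separate additive-constant correction to satisfy the HJB. The paper's approach, on the other hand, stays entirely within the PDE framework developed in Section~\ref{sec:household-problem} and does not need to revisit admissibility at the control level.
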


\begin{proof}
First, observe that under CRRA utility,
$$
H(p) = \sup_{c>0} \Big(\frac{c^{1-\gamma}}{1-\gamma} - pc\Big)  = \frac{\gamma}{1-\gamma} \ p^{\frac{\gamma-1}{\gamma}},\qquad p>0.
$$
In the case of log-utility ($\gamma=1$), $H(p)=-\log(p)-1$.
We claim that 
\begin{equation*}
V(a,z) := 
\begin{cases}
(\alpha'/\alpha)^{1-\gamma} v^*\left(\alpha a/\alpha',z \right), & \gamma\neq 1,\\
v^*\left(\alpha a/\alpha',z \right) + \rho^{-1} \log(\alpha'/\alpha), & \gamma=1,
\end{cases}
\end{equation*}
is a classical solution to the HJB equation \eqref{eq:HJB} with parameters $(w',\tau')$. Indeed, let $(a,z)\in\sX$ be given. We present the proof for $\gamma\neq 1$ as the logarithmic case is analogous. We compute
\begin{equation*}
V_a(a,z) =(\alpha'/\alpha)^{-\gamma}  v^*_a(\alpha a/\alpha',z) \quad \Longrightarrow \quad H(V_a(a,z)) = (\alpha'/\alpha)^{1-\gamma} H(v^*_a(\alpha a/\alpha' ,z)).
\end{equation*}
Since $v^*$ solves \eqref{eq:HJB},
\begin{equation*}
\rho v^*(\alpha a/\alpha',z) = H(v^*_a(\alpha a/\alpha',z)) + v^*_a(\alpha a/\alpha',z)[r\alpha a/\alpha'+\alpha z] + \sL v^*(\alpha a/\alpha',\cdot)(z)
\end{equation*}
Multiplying by $(\alpha'/\alpha)^{1-\gamma}$ yields
\begin{align*}
\rho V(a,z) &= H(V_a(a,z)) + (\alpha'/\alpha)V_a(a,z)[r\alpha a/\alpha'+\alpha z] + \sL V(a,\cdot)(z)\\
&= H(V_a(a,z)) + V_a(a,z)[ra+\alpha' z] + \sL V(a,\cdot)(z),
\end{align*}
as claimed. Using the feedback characterization of optimal consumption rules, we obtain
\begin{align*}
c^*(a,z;w',\tau') &= (V_a(a,z;w',\tau'))^{-\frac{1}{\gamma}} = \frac{\alpha'}{\alpha} (v^*_a\left( \alpha a/\alpha', z;w,\tau\right))^{-\frac{1}{\gamma}}= \frac{\alpha'}{\alpha} \, c^*(\alpha a/\alpha',z;w,\tau).\qedhere
\end{align*}
\end{proof}

\begin{proof}[Proof of Theorem \ref{thm:properties-G} \eqref{thm:properties-G-linearity-of-A-in-tau}]
Let $G$ and $G'$ denote the unique stationary measures corresponding to $(w,\tau)$ and $(w',\tau')$, respectively. Let $(a'_t)_{t\ge0}$ denote an optimal wealth process, controlled under the parameters $(w',\tau')$. Using the notation of Lemma \ref{lem:CRRA-scaling},
$$
\d \Big(\frac{\alpha}{\alpha'}a'_t\Big)  = \Big[ r\frac{\alpha}{\alpha'}a'_t+ \alpha z_t -  c^*\left(\frac{\alpha}{\alpha'}a'_t,z_t;w,\tau \right) \!\Big] \,\d t,
$$
Hence, $b_t:=\alpha a'_t/\alpha'$ satisfies $\d b_t  = [ rb_t+ \alpha z_t -  c^*\left(b_t,z_t;w,\tau \right)] \,\d t.$
By Proposition \ref{prop:existence-uniqueness-G}, $G^*(\da,\dz;w',\tau')= G^*(\alpha/\alpha'\,\da,\dz;w,\tau)$. In particular, 
$$
A(w',\tau')=\int_{\sX} a\,G^*(\da,\dz; w',\tau')= \frac{\alpha'}{\alpha} \int_{\sX} a \, G^*(\da,\dz;w,\tau)= \frac{\alpha'}{\alpha}A(w,\tau).
$$
Finally, by stationarity, $C(w',\tau')=rA(w',\tau')+\alpha'=\alpha'/\alpha\,(rA(w,\tau)+\alpha)=\alpha'/\alpha\, C(w,\tau)$.
\end{proof}

\subsection{Proofs of Theorem \ref{thm:properties-G} \eqref{thm:properties-G-continuity} and \eqref{thm:properties-G-explosion-at-rho}}

We continue to establish continuity and the limiting behavior of aggregate savings $(-\infty,\rho)\ni r\mapsto A(r)$ as a function of the interest rate.

\begin{proof}[Proof of Theorem \ref{thm:properties-G} 
\eqref{thm:properties-G-continuity}]
We suppress the dependence on the fixed parameters $w,\tau$. Let $r_n\to r$ in
$R(w,\tau)$, and write
$$
G_n:=G^*(\da,\dz;r_n),\qquad G:=G^*(\da,\dz;r).
$$
Since the weak convergence topology on the space of Borel probability measures is first countable, it is enough to prove that $G_n\Rightarrow G$. By Lemma \ref{lem:properties-saving-rules}
\eqref{lem:properties-saving-rules-III}, the upper support bound is locally
bounded in the interest rate. Hence, there exists $M<\infty$ such that both $\mathrm{supp}\, G_n$ and $\mathrm{supp}\, G$ are contained in $[\lba,M]\times\cZ$, for all $n\ge1$. Hence, the sequence $(G_n)$ is tight. Let $G_{n_k}\Rightarrow \widehat G$
be an arbitrary weakly convergent subsequence. We show that $\widehat G$
is invariant for the optimally controlled process at interest rate $r$.

Let  $s_n(a,z)$ (resp.\ $s(a,z)$) be the optimal savings rule associated with $r_n$ (resp.\ $r$). By Lemma \ref{lem:joint-continuity-(a,r)} and the feedback characterization, we have $c^*(\cdot,\cdot;r_n,w,\tau)\to c^*(\cdot,\cdot;r,w,\tau)$ uniformly on $[\lba,M]\times\cZ$. Therefore, $s_n\to s$ uniformly on $[\lba,M]\times\cZ$.

Let $\varphi: [\lba,M]\times\cZ\to\R$ be smooth. By stationarity, Dynkin's formula
gives
$$
0=
\int
\left[
s_n(a,z)\partial_a\varphi(a,z)
+
\sL\varphi(a,\cdot)(z)
\right]G_n(\da,\dz).
$$
Passing to the subsequential limit, using weak convergence of $G_{n_k}$,
uniform convergence $s_{n_k}\to s$, and finiteness of $\cZ$, gives
$$
0=
\int
\left[
s(a,z)\partial_a\varphi(a,z)
+
\sL\varphi(a,\cdot)(z)
\right]\widehat G(\da,\dz).
$$
Thus $\widehat G$ is an invariant distribution
for the optimally controlled process at interest rate $r$. By Proposition \ref{prop:existence-uniqueness-G}, this invariant distribution is unique. Hence, the full sequence converges, $G_n\Rightarrow G$.

Finally, because all the measures are supported on the common compact set
$[\lba,M]\times\cZ$,
\begin{equation*}
A(r_n,w,\tau)
=
\int a\,G^*(\da,\dz;r_n,w,\tau)
\longrightarrow
\int a\,G^*(\da,\dz;r,w,\tau)
=
A(r,w,\tau).\qedhere
\end{equation*}
\end{proof}

We conclude this section with a proof of Theorem \ref{thm:properties-G} \eqref{thm:properties-G-explosion-at-rho} and specialize to the case of a no-borrowing limit for households $\lba=0$. The proof is inspired by the discrete-time approach of A\c{c}\i kg\"{o}z \cite{acikgoz2018existence}.

\begin{Lem}\label{lem:atom}
For any $L>0$ there exists $T>0$ and $\underline{p}\in(0,1]$ such that 
$$
\inf_{r\in[0,\rho]} \ \inf_{(a,z)\in[0,L]\times\cZ}\ P_T(a,z,\{(0,\lbz)\};r)\geq \underline{p}
$$
\end{Lem}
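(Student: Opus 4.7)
The plan is to engineer, for any starting point $(a,z)\in[0,L]\times\cZ$, an explicit event of uniformly positive probability on which the process is at $(0,\lbz)$ by time $T$. We combine two pieces via the strong Markov property: (a) the income chain hits $\lbz$ within a short initial window $[0,t_1]$ (say $t_1=1$), and (b) the income remains at $\lbz$ throughout $[\tau_\lbz,T]$. On (b) the wealth follows the deterministic ODE $\dot a_s=s^*(a_s,\lbz;r)$ from Lemma~\ref{lem:properties-saving-rules}(iv) and reaches $0$ in finite time, where it is absorbed because $s^*(0,\lbz;r)=0$ by Lemma~\ref{lem:properties-saving-rules}(i). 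In particular, once $(a_s,z_s)=(0,\lbz)$ and $z$ has not yet left $\lbz$, the pair stays at $(0,\lbz)$.

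For the wealth control, using $s^*(a,z;r)\le ra+z\le\rho a+\bar z$ (valid uniformly for $r\in[0,\rho)$, with $\bar z$ the top income) and Grönwall's inequality, the wealth at any time $s\in[0,t_1]$ is bounded by a constant $L'$ depending only on $L$, $\rho$, $\bar z$, and $t_1$. For the income event, irreducibility of the finite-state chain yields $p_1:=\inf_{z\in\cZ}\P_z(\tau_\lbz\le t_1)>0$, and the strong Markov property at $\tau_\lbz$ gives
\[
\P_z\!\left(\{\tau_\lbz\le t_1\}\cap\{z_s=\lbz\text{ for all }s\in[\tau_\lbz,T]\}\right)\ \ge\ p_1\, e^{-\Lambda T},
\]
where $\Lambda:=\sum_{y\ne\lbz}\lambda(\lbz,y)$.

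The main obstacle is controlling the ODE hitting time $\tau(L';r):=\inf\{t\ge 0:x(t)=0\}$ of $\dot x=s^*(x,\lbz;r)$ with $x(0)=L'$ uniformly in $r\in[0,\rho)$, since Lemma~\ref{lem:properties-saving-rules}(iv) provides only local boundedness. To obtain the uniform bound, observe that the hypotheses of Lemma~\ref{lem:properties-saving-rules}(i)--(ii) and the ODE argument of Lemma~\ref{lem:properties-saving-rules}(iv) all extend to the closed endpoint $r=\rho$, so $\tau(L';\rho)<\infty$. Moreover, Lemma~\ref{lem:joint-continuity-(a,r)} (with $\lba=0$, so that $\rho\in R$) gives joint continuity of $s^*(\cdot,\lbz;\cdot)$ on $\R_+\times(-\infty,\rho]$. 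Standard continuous-dependence for ODEs, combined with the fact that $\{x=0\}$ is absorbing under the flow, then yields that $r\mapsto\tau(L';r)$ is upper semicontinuous on the compact interval $[0,\rho]$, hence bounded above by some $T_0<\infty$. Setting $T:=t_1+T_0$ and $\underline p:=p_1\,e^{-\Lambda T}$ completes the proof.
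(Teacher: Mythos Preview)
Your argument follows the same two-step strategy as the paper: bound wealth at the end of a short initial window (the paper also uses time $1$ and arrives at the same $L'$), then use the deterministic $\lbz$-ODE from Lemma~\ref{lem:properties-saving-rules}(iv) to reach the borrowing constraint while the income chain stays at $\lbz$. The paper applies the ordinary Markov property at the deterministic time $1$ rather than the strong Markov property at $\tau_\lbz$, but the content is identical.

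There is one small imprecision in your uniformity argument. Upper semicontinuity of the hitting time $r\mapsto\tau(L';r)$ does \emph{not} follow from continuous dependence and absorption alone: if $x_{r_n}\to x_{r_0}$ uniformly on compacts and $x_{r_0}(\tau_0+\delta)=0$, you only conclude $x_{r_n}(\tau_0+\delta)\to 0$, not $x_{r_n}(\tau_0+\delta)=0$, so the hitting times could in principle drift upward. The clean route is the one you have already set up but did not quite use: Lemma~\ref{lem:properties-saving-rules}(iv) is stated under the hypothesis $r\le\rho$ and asserts that $\tau$ is \emph{locally bounded} in $r$ (the bound $2\sqrt{\varepsilon/\fs(r)}$ near the constraint, with $\fs$ continuous, plus joint continuity of $s^*$ on the compact strip $[\varepsilon,L']$ for the outer part). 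Since $[0,\rho]$ is compact, local boundedness immediately yields the uniform bound $T_0$, without any semicontinuity claim. This is exactly how the paper obtains its $T'$.
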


\begin{proof}
As before, let $\lbz:=\min \cZ$, $\overline z:=\max \cZ$, and set $\Lambda:=\sum_{y\neq \lbz}\lambda(\lbz,y)$. Fix $L>0$.

First, note that, starting from any $(a,z)\in[0,L]\times \cZ$, the (optimal) asset process $(a_t)$ satisfies 
\be\label{eq:atom-proof-I}
a_1\le L'
:=
e^\rho L+\overline z\,\frac{e^\rho-1}{\rho}.
\ee
For each $r\in[0,\rho]$ and $b\in[0,L']$, let $\tau^r(b)$ be the hitting time of $0$ for the deterministic ODE
$$
\dot x(t)=s^*(x(t),\lbz;r),\qquad x(0)=b.
$$
By Lemma \ref{lem:properties-saving-rules} \eqref{lem:properties-saving-rules-I}, $s^*(a,\lbz;r)<0$ for $a>0$ and $s^*(0,\lbz;r)=0$. By Lemma \ref{lem:properties-saving-rules} \eqref{lem:properties-saving-rules-IV}, the hitting time $\tau^r(L')$ is locally bounded as a function of $r$. Since $[0,\rho]$ is compact, there exists $T_0<\infty$ such that
$$
\tau^r(L')\le T_0
\qquad\text{for every }r\in[0,\rho].
$$
By comparison, $\tau^r(b)\le\tau^r(L')$ for every $b\in[0,L']$. Therefore, if the income process starts from $\lbz$ and has no jump during $[0,T_0]$, the asset process reaches $0$ by time $T_0$ and then stays at $0$. Hence,
\be \label{eq:atom-proof-II}~
P_{T_0}^r(b,\lbz,\{(0,\lbz)\})
\ge
\exp(-\Lambda T_0)
\qquad
\forall b\in[0,L'],\, r\in[0,\rho].
\ee

Next, let $ p_0:=\min_{z\in \cZ}\P_z(z_1=\lbz)>0$ and set
$$
T:=T_0+1,
\qquad
\underline{p}:=p_0\exp(-\Lambda T_0)>0.
$$
Using the Markov property at time $1$, together with \eqref{eq:atom-proof-I} and \eqref{eq:atom-proof-II}, we obtain, for every
$(a,z)\in[0,L]\times \cZ$ and every $r\in[0,\rho]$,
$$
\begin{aligned}
P_T^r(a,z,\{(0,\lbz)\})
&= \E_{a,z} \left[
\P((a_{1+T_0},z_{1+T_0})= (0,\lbz)\,|\,\sF_1)
\right] \\
&= \E_{a,z} \left[
P_{T_0}^r(a_1,z_1,\{(0,\lbz)\})
\right]\\
&\ge \E_{a,z} \left[
\mathbf{1}_{\{z_1=\lbz\}}
P_{T_0}^r(a_1,\lbz,\{(0,\lbz)\})
\right] \\
&\ge \P_z(z_1=\lbz)\,
\exp(-\Lambda T_0)
\\
&\ge \underline{p}.
\end{aligned}
$$
Taking the infimum over $r\in[0,\rho]$ and $(a,z)\in[0,L]\times \cZ$ proves the claim.
\end{proof}

We next record a short lemma that justifies the application of Dynkin's formula to $v^*_a$.

\begin{Lem}
\label{lem:dynkin-v_a}
Fix $r<\rho$. With the convention of Lemma \ref{lem:Euler} for the product $v^*_{aa}s^*=0$ whenever $s^*=0$, we have 
$$
0=
\int_\sX
\left[
v^*_{aa}(a,z)s^*(a,z)
+
\sL v^*_a(a,\cdot)(z)
\right]
G^*(\da,\dz).
$$
\end{Lem}
\begin{proof}
Write $p(a,z):=v_a^*(a,z)$ and $s(a,z):=ra+z-c^*(a,z)$, and let $\Phi_t^{z}(a)$ be the deterministic flow
$$
\frac{\d}{\dt}\Phi_t^{z}(a)=s(\Phi_t^{z}(a),z),
\qquad
\Phi_0^{z}(a)=a.
$$
We first note that the following limit exists:
$$
D_{s}p(a,z)
:=
\lim_{t\downarrow0}
\frac{
p(\Phi_t^{z}(a),z)-p(a,z)
}{t}.
$$
Indeed, by Proposition \ref{prop:cts-differentiability} and concavity of the value
function, $p(\cdot,z)$ is continuous and non-increasing for every
$z\in\cZ$. By Lemma \ref{lem:Euler}, $p(\cdot,z)$ is $C^1$ on every
open set on which $s(\cdot,z)\neq0$, and there
$$
p_a(a,z)s(a,z)
=
(\rho-r)p(a,z)-\sL p(a,\cdot)(z).
$$
If $s(a,z)=0$, then the deterministic flow with income state fixed at
$z$ is constant at $a$, and hence $D_{s}p(a,z)=0$. At the
boundary, the same conclusion holds if $s(\underline a,z)=0$, while if $s(\underline a,z)>0$, Lemma \ref{lem:Euler} gives the right derivative. Thus $D_{s}p$ is well-defined.

Let $(a_t,z_t)$ be the optimally controlled process. Between jump
times of $z_t$, the wealth component follows the deterministic flow above.
Hence, between jumps,
$$
\frac{\d}{\dt}p(a_t,z_t)
=
D_{s}p(a_t,z_t).
$$
By classical arguments, see \cite[Appendix B]{FS},
$$
M_t
:=
p(a_t,z_t)-p(a_0,z_0)
-
\int_0^t
\left[
D_{s}p(a_s,z_s)
+
\sL p(a_s,\cdot)(z_s)
\right]\d s
$$
is a martingale. Taking expectations under the invariant distribution implies the claim.
\end{proof}

\begin{Lem}\label{lem:explosion-at-rho}
For any $L>0$, there exists $C_L<\infty$ such that $G^*([0,L]\times\cZ;r)\leq C_L(\rho-r)$ as $r\uparrow \rho$.
\end{Lem}

\begin{proof}
Fix $L>0$ and set
$$
p^r(a,z):=v_a^*(a,z;r),\qquad
s^r(a,z):=ra+z-c^*(a,z;r),
\qquad
G^r:=G^*(\cdot\,;r).
$$
By the Euler equation, Lemma \ref{lem:Euler},
$$
p_a^r(a,z)s^r(a,z)+ \sL p^r(a,\cdot)(z)
=
(\rho-r)p^r(a,z),
\qquad a>0,
$$
where the product $p_a^r s^r$ is interpreted as $0$ on $\{s^r=0\}$. By Lemma \ref{lem:dynkin-v_a} and the boundary inequality \eqref{eq:Euler-bdry-zero},
\begin{align*}
0&=\int_\sX  \big( p_a^r(a,z)s^r(a,z)+ \sL p^r(a,\cdot)(z) \big)\,G^r(\da,\dz)\\
&\le
(\rho-r)\int_\sX p^r(a,z)\,G^r(\da,\dz)
+
G^r(\{(0,\lbz)\})\,\sL p^r(0,\cdot)(\lbz).
\end{align*}
By Lemma \ref{lem:properties-saving-rules} \eqref{lem:properties-saving-rules-II}, $\sL p^r(0,\cdot)(\lbz)<0$. Hence
$$
G^r(\{(0,\lbz)\})
\le
\frac{(\rho-r)\int_\sX p^r(a,z)\,G^r(\da,\dz)}
{-\sL p^r(0,\cdot)(\lbz)}.
$$
Next, by Lemma \ref{lem:atom}, there exist $T>0$ and $\underline{p}>0$, independent of $r\in(0,\rho)$, such that
$$
\inf_{(a,z)\in[0,L]\times \cZ}
P_T^r\bigl(a,z,\{(0,\lbz)\}\bigr)
\ge \underline{p} .
$$
Using invariance of $G^r$,
$$
G^r(\{(0,\lbz)\})
=
\int_\sX
P_T^r\bigl(a,z,\{(0,\lbz)\}\bigr)\,
G^r(\da,\dz)  \ge
\underline{p} \,G^r([0,L]\times \cZ).
$$
Hence,
$$
G^r([0,L]\times \cZ)
\le
\frac{(\rho-r)\int_\sX p^r(a,z)\,G^r(\da,\dz)}
{\underline{p}\, (-\sL p^r(0,\cdot)(\lbz))}.
$$

It remains to pass to the limit. Since $p^r(\cdot,z)$ is decreasing in $a$, $\sup_\sX p^r
=
\max_{z\in \cZ} p^r(0,z)$.
By Lemma \ref{lem:boundedness-v-v_a}, this quantity remains bounded as $r\uparrow\rho$. Moreover, by Lemma \ref{lem:joint-continuity-(a,r)} and Lemma \ref{lem:properties-saving-rules} \eqref{lem:properties-saving-rules-II} at $r=\rho$,
$$
\sL p^r(0,\cdot)(\lbz)
\longrightarrow
\sL p^\rho(0,\cdot)(\lbz)<0,\qquad \text{as}\quad r\uparrow\rho.
$$
Hence, there exist constants $M<\infty$ and $\eta>0$ such that, for all $r$ close to $\rho$,
$$
\int_\sX p^r\,\d G^r\le M,
\quad \text{and}\quad 
-\sL p^r(0,\cdot)(\lbz)\ge \eta,\qquad \text{so that}\qquad G^r([0,L]\times \cZ)
\le
\frac{M}{\underline{p} \eta}(\rho-r). 
$$
Setting $C_L:=M/(\underline{p} \eta)$ proves the claim.
\end{proof}

\begin{proof}[Proof of Theorem \ref{thm:properties-G} \eqref{thm:properties-G-explosion-at-rho}]
We estimate
$$
L\left(1-G^*([0,L]\times\cZ;r)\right)=\int_{(L,\infty)\times\cZ}L \ G^*(\d a,\d z;r) \leq  \int_{(L,\infty)\times\cZ}a \ G^*(\d a,\d z;r)\leq A(r).
$$
The preceding Lemma \ref{lem:explosion-at-rho} shows that $ \liminf_{r\uparrow\rho} \ A(r)\geq L$ for any $L>0$. Hence, $\lim_{r\uparrow\rho}A(r)=\infty$, as claimed.
\end{proof}

\subsection{Proof of Theorem \ref{thm:properties-G} \eqref{thm:properties-A-increasing-in-r}} \label{ssec:A-increasing}

This section establishes the monotonicity of $r\mapsto A(r):=A(r,1,0)$ under the no-borrowing limit and CRRA utility with $\gamma\leq 1$.
By Theorem \ref{thm:properties-G} \eqref{thm:properties-G-linearity-of-A-in-tau}, it is enough to prove the claim for $w=1$ and $\tau=0$. 

Fix $r<\rho$ and set
$$
p_z^r(a):=v_a^*(a,z;r),\qquad
c_z^r(a):=(p_z^r(a))^{-1/\gamma},\qquad
s_z^r(a):=ra+z-c_z^r(a),\qquad (a,z)\in\sX.
$$
The Euler equation may be written as
\begin{equation}\label{eq:marginal-equation-r}
\cE^r_z(a,p,\partial_a p)=0,\qquad (a,z)\in\sX,
\end{equation}
where, for $(a,z)\in\sX$ and $p,p':\sX\mapsto\R$,
$$
\cE^r_z(a,p,p')
:=
(\rho-r)p_z(a)
-
\bigl(ra+z-p_z(a)^{-1/\gamma}\bigr)p'_z
-
\sL p(a,\cdot)(z).
$$
We interpret \eqref{eq:marginal-equation-r} in the viscosity sense on the open domain $\cO\times\cZ$, and note that $p^r$ is a viscosity solution of this equation by Lemma \ref{lem:Euler}. We first record a comparison principle.

\begin{Lem}\label{lem:Euler-comparison}
Let $p,q:\sX\to(0,\infty)$ be continuous functions such that $p_z(a),q_z(a)\to0$ as $a\to\infty$, for every $z\in\cZ$. Assume that, $p_z$ is decreasing on $\cO$, and that $p_z(0)\le q_z(0)$, $z\in\cZ$. If $p$ is a viscosity subsolution and $q$ is a viscosity supersolution of \eqref{eq:marginal-equation-r}, then $p\le q$ on $\sX$.
\end{Lem}

\begin{proof}
Towards a contradiction, suppose that $M:=\sup_{\sX}(p-q)>0$.
Since $p_z(a)-q_z(a)\to0$ as $a\to\infty$ and $p_z(0)\le q_z(0)$, the supremum
is attained at some point in $\cO\times\cZ$. 

For $k\ge1$, define
$$
\Phi_k(x,y,z):=p_z(x)-q_z(y)-\frac{k}{2}(x-y)^2,
\qquad x,y\ge0,\ z\in\cZ,
$$
and let $(x_k,y_k,z_k)$ be a maximizer. The standard doubling-of-variables reasoning implies, along a subsequence,
$$
x_k,y_k\to a_*,\qquad
p_{z_k}(x_k)-q_{z_k}(y_k)\to M,\qquad
k(x_k-y_k)^2\to0,\qquad \text{as}\ k\to\infty,
$$
for some $a_*>0$. In particular, $x_k,y_k>0$ for all large $k$. By maximality of $(x_k,y_k,z_k)$, the function
$$
\phi_k(x):=
p_{z_k}(x_k)
+\frac{k}{2}(x-y_k)^2
-\frac{k}{2}(x_k-y_k)^2
$$
touches $p_{z_k}$ from above at $x_k$. As $p_{z_k}$ is decreasing,
any $C^1$ upper test function at an interior point has non-positive derivative. Hence, $d_k:=k(x_k-y_k)=\phi_k'(x_k)\le0$.

The viscosity inequalities give
$$
\cE_{z_k}^r(x_k,p,d_k)\le0,
\qquad
\cE_{z_k}^r(y_k,q,d_k)\ge0.
$$
Subtracting yields
$$
(\rho-r)\bigl(p_{z_k}(x_k)-q_{z_k}(y_k)\bigr)
\le
d_k\Bigl[ r(x_k-y_k) -\bigl(p_{z_k}(x_k)^{-1/\gamma} -q_{z_k}(y_k)^{-1/\gamma}\bigr)
\Bigr]
+\sL\bigl(p(x_k,\cdot)-q(y_k,\cdot)\bigr)(z_k).
$$
The last term is non-positive because $(x_k,y_k,z_k)$ maximizes $\Phi_k$.
Moreover, for large $k$,
$$
p_{z_k}(x_k)>q_{z_k}(y_k),\qquad \Longrightarrow \qquad 
p_{z_k}(x_k)^{-1/\gamma}
-q_{z_k}(y_k)^{-1/\gamma}<0.
$$
Since $d_k\le0$, the term
$$
-d_k\bigl(p_{z_k}(x_k)^{-1/\gamma}
-q_{z_k}(y_k)^{-1/\gamma}\bigr)
$$
is non-positive, while
$$
d_k r(x_k-y_k)=r k(x_k-y_k)^2\to0.
$$
Taking the limsup gives $(\rho-r)M\le0$, contradicting $r<\rho$ and $M>0$. Hence $p\le q$.
\end{proof}

\begin{Lem}
For every $z\in\cZ$, the map
\begin{equation}\label{eq:ap-monotone}
[0,\infty)\ni a\longmapsto a\,p_z^r(a)
\quad\text{is increasing}.
\end{equation}
\end{Lem}

\begin{proof}
Fix $\lambda>1$ and define $\widetilde p_z(a):=\lambda p_z^r(\lambda a)$. We claim that $\widetilde p$ is a viscosity supersolution of
\eqref{eq:marginal-equation-r}. Let $\phi\in C^1$ be an admissible lower test function of $\widetilde p_z$ at $a>0$, and
set
$$
b:=\lambda a,\qquad c:=c_z^r(b)=p_z^r(b)^{-1/\gamma}.
$$
Then $\psi(x):=\lambda^{-1}\phi(x/\lambda)$ is an admissible test function for $p_z^r$ at $b$. Since $p^r$ is a supersolution,
$$
(\rho-r)p_z^r(b)
-
(rb+z-c)\psi'(b)
-
\sL p^r(b,\cdot)(z)
\ge0.
$$
Using $\psi'(b)=\lambda^{-2}\phi'(a)$ and $\widetilde p_z(a)^{-1/\gamma} =\lambda^{-1/\gamma}c,$ we compute
$$
\cE_z^r(a, \widetilde p,\phi')
=
\lambda[(\rho-r)p_z^r(b) - (rb+z-c)\psi'(b) - \sL p^r(b,\cdot)(z)]  +
\Big( \frac{rb+z-c}{\lambda}
-
(ra+z-\lambda^{-1/\gamma}c)
\Big)\phi'(a).
$$
The first term is non-negative by the supersolution property of $p^r$. The bracket in the second term equals
$$
\Big(\frac1\lambda-1\Big)z
+
\Big(\lambda^{-1/\gamma}-\frac1\lambda\Big)c.
$$
Because $\gamma\le1$, one has $\lambda^{-1/\gamma}\le\lambda^{-1}$, so this bracket is
non-positive. Since $\widetilde p_z$ is decreasing, the test function satisfies
$\phi'(a)\le0$, so that the second term is also non-negative, proving that $\widetilde p$ is indeed a supersolution.

On the other hand, $p^r$ is a subsolution. Moreover,
$$
p_z^r(0)<\lambda p_z^r(0)=\widetilde p_z(0),
\qquad
p_z^r(a),\widetilde p_z(a)\to0
\quad\text{as }a\to\infty.
$$
By Lemma \ref{lem:Euler-comparison}, for every $(a,z)\in\sX$, $p_z^r(a)\le \lambda p_z^r(\lambda a)$. With $b=\lambda a$, this writes $a\,p_z^r(a)\le b\,p_z^r(b)$, $0<a<b$, 
establishing \eqref{eq:ap-monotone}. 
\end{proof}

In particular, if $\phi$ is an admissible upper test function of $p_z^r$ at an interior point $a>0$ with $\phi(a)=p_z^r(a)$, then
\begin{equation}\label{eq:upper-test-elasticity}
p_z^r(a)+a\phi'(a)\ge0.
\end{equation}
Indeed, by \eqref{eq:ap-monotone},
$$
0
\le
\underset{h\downarrow0}{\limsup}
\frac{(a+h)p_z^r(a+h)-a p_z^r(a)}{h}\leq \lim_{h\downarrow0} \frac{(a+h)\phi(a+h)-a\phi(a)}{h}=
\phi(a)+a\phi'(a)=p^r_z(a)+a\phi'(a).
$$

\begin{Lem}
Fix $r_1<r_2<\rho$. Then,
\begin{equation}\label{eq:p-monotone-in-r}
p_z^{r_1}(a)\le p_z^{r_2}(a),
\qquad (a,z)\in\sX.
\end{equation}
\end{Lem}

\begin{proof}
Write
$$
p_{i,z}(a):=p_z^{r_i}(a),\qquad
c_{i,z}(a):=p_{i,z}(a)^{-1/\gamma},\qquad
s_{i,z}(a):=r_i a+z-c_{i,z}(a),
\qquad i=1,2.
$$

We first observe that $p_1$ is a viscosity subsolution of the marginal
equation with interest rate $r_2$. Indeed, let $\phi$ be an admissible
upper test function of $p_{1,z}$ at an interior point $a>0$. Then
$$
\cE_z^{r_2}(a,p_1,\phi'(a)) =
\cE_z^{r_1}(a,p_1,\phi'(a))
- (r_2-r_1)\bigl(p_{1,z}(a)+a\phi'(a)\bigr)  \le 0.
$$
Here, the first term is nonpositive because $p_1$ is a subsolution at rate
$r_1$, while the second term is nonpositive by
\eqref{eq:upper-test-elasticity}. Towards a contradiction, suppose that $M:=\sup_{\sX}(p_1-p_2)>0$.  Since $p_{i,z}(a)\to0$ as $a\to\infty$, the supremum is attained. If a maximizer were located at an interior point $a>0$, the proof of Lemma \ref{lem:Euler-comparison}, applied to the subsolution $p_1$ and supersolution $p_2$ at rate $r_2$, would imply $(\rho-r_2)M\le0$, which is impossible. Hence $M$ is attained at the boundary: there exists
$z_0\in\cZ$ such that
$$
M=p_{1,z_0}(0)-p_{2,z_0}(0)>0.
$$
Using Lemma
\ref{lem:properties-saving-rules}, we see $z_0\neq \lbz$. Set
$$
\sigma_i:=s_{i,z_0}(0)=z_0-c_{i,z_0}(0),\qquad i=1,2.
$$
Since $p_{1,z_0}(0)>p_{2,z_0}(0)$ and $p\mapsto p^{-1/\gamma}$ is
decreasing, $c_{1,z_0}(0)<c_{2,z_0}(0)$, and hence
$$
\sigma_1>\sigma_2\ge0.
$$
Using $\sigma_1>0$, by \eqref{eq:Euler-brdy-nonzero},
$$
(\rho-r_1)p_{1,z_0}(0)
=
\sigma_1\,\partial_ap_{1,z_0}(0)
+
\sL p_1(0,\cdot)(z_0),
$$
and, hence,
\begin{equation}\label{eq:p-monotone-boundary-p1}
(\rho-r_2)p_{1,z_0}(0)
-
\sigma_1\,\partial_ap_{1,z_0}(0)
-
\sL p_1(0,\cdot)(z_0)
=
-(r_2-r_1)p_{1,z_0}(0)<0.
\end{equation}
For $p_2$, Lemma \ref{lem:Euler} gives
\begin{equation}\label{eq:p-monotone-boundary-p2}
(\rho-r_2)p_{2,z_0}(0)
-
\sigma_2\,\partial_ap_{2,z_0}(0)
-
\sL p_2(0,\cdot)(z_0)
\ge0,
\end{equation}
where, if $\sigma_2=0$, the derivative term is omitted. If $\sigma_2>0$, subtracting \eqref{eq:p-monotone-boundary-p2} from
\eqref{eq:p-monotone-boundary-p1} yields
$$
\begin{aligned}
(\rho-r_2)M
&<
\sigma_1\partial_a p_{1,z_0}(0)
-
\sigma_2\partial_a p_{2,z_0}(0)
+
\sL(p_1-p_2)(0,\cdot)(z_0)\\
&=
\sigma_1\partial_a (p_1-p_2)_{z_0}(0)
+
(\sigma_1-\sigma_2)\partial_a p_{2,z_0}(0)
+
\sL(p_1-p_2)(0,\cdot)(z_0)
\le0.
\end{aligned}
$$
The last inequality uses that $(0,z_0)$ is a global maximizer of
$p_1-p_2$ and that $p_{2,z_0}$ is decreasing. This contradicts $M>0$.

If $\sigma_2=0$, the same reasoning yields
$$
(\rho-r_2)M < \sigma_1\partial_a p_{1,z_0}(0) +
\sL(p_1-p_2)(0,\cdot)(z_0)
\le0,
$$
since $p_{1,z_0}$ is decreasing and $(0,z_0)$ maximizes
$p_1-p_2$. This is again impossible. This establishes \eqref{eq:p-monotone-in-r}.
\end{proof}

\noindent
Since $p\mapsto p^{-1/\gamma}$ is decreasing, \eqref{eq:p-monotone-in-r} implies $c_z^{r_2}(a)\le c_z^{r_1}(a)$, for $(a,z)\in\sX$. Consequently,
\begin{equation}\label{eq:savings-monotone-in-r}
s_z^{r_2}(a)-s_z^{r_1}(a)
=
(r_2-r_1)a+c_z^{r_1}(a)-c_z^{r_2}(a)
\ge0,
\end{equation}
which implies that
$$
G^*(\cdot\,;r_2,1,0)
\quad\text{first-order stochastically dominates}\quad
G^*(\cdot\,;r_1,1,0).
$$
Therefore,
$$
A(r_2,1,0)
=
\int a\,G^*(\da,\dz;r_2,1,0)
\ge
\int a\,G^*(\da,\dz;r_1,1,0)
=
A(r_1,1,0).
$$
This proves that $r\mapsto A(r,1,0)$ is increasing. Multiplying by the positive factor
$w(1-\tau)$ from Theorem \ref{thm:properties-G} \eqref{thm:properties-G-linearity-of-A-in-tau} proves the claim for general
$(w,\tau)$.

\subsection{Proof of Theorem \ref{thm:A-can-decrease-gamma-greater-one}}\label{ssec:A-fail-increasing}

Fix $\gamma>1$. We give a constructive proof that it is possible for  $r\mapsto A(r):=A(r,1,0)$ not to be increasing. Define 
$$
\kappa(r):=\frac{\rho+(\gamma-1)r}{\gamma},\quad 
I_\gamma:=(-\rho/(\gamma-1),\rho)\subset\R,\quad \text{and}\quad L_\gamma(r):=\frac{1}{\kappa(r)(\rho-r)^{1/\gamma}} \quad \text{for}\ r\in I_\gamma.
$$
We construct a sequence of two-state income processes. Let $\eta_n\downarrow0$ be arbitrary and fix $q\in(0,\gamma)$. We define transition rates  
$$
n\text{th rate from} \ z_h \to z_\ell := \lambda_n := \eta_n^\gamma,\qquad n\text{th rate from} \ z_\ell \to z_h := \mu_n:=\eta_n^q,
$$
so that the stationary probabilities of the income process are given by
$$
\pi_\ell^n :=\frac{\lambda_n}{\lambda_n+\mu_n},
\qquad
\pi_h^n :=\frac{\mu_n}{\lambda_n+\mu_n}.
$$
Further, choose
$$
z_\ell^n=o(\eta_n),\qquad z_h^n :=\frac{1-\pi_\ell^n z_\ell^n}{\pi_h^n},\qquad \cZ_n:=\{z_\ell^n,z_h^n\}.
$$
Then, the mean income is one, $\pi_\ell^n z_\ell^n+\pi_h^n z_h^n=1$, and as $\lambda_n/\mu_n=\eta_n^{\gamma-q}\to0$, we have, as $n\to\infty$,
$$
\pi_\ell^n\to0,\qquad \pi_h^n\to1,\qquad z_\ell^n\to0,\qquad z_h^n\to1.
$$
Let $A_n(r)$ denote the aggregate asset demand in the $n$th economy. 

\begin{Lem}\label{lem:A-scaling}
For every compact interval
$K\subset I_\gamma$,
\be\label{eq:A-scaling}
\lim_{n\to\infty} \, \sup_{r\in K}\,
\Big|
\frac{A_n(r)}{\eta_n}
-
L_\gamma(r)
\Big| =0.
\ee
\end{Lem}

\noindent
The proof of this Lemma is presented in Appendix \ref{app:A-scaling}. Now compute
$$
\frac{\d}{\d r}\log L_\gamma(r)
=
\frac{1}{\gamma(\rho-r)}
-
\frac{\gamma-1}{\rho+(\gamma-1)r}.
$$
Hence, $L_\gamma$ is strictly decreasing on $I_\gamma$ whenever 
$$ 
r<
r_\gamma:=
\rho\,\frac{\gamma^2-\gamma-1}{\gamma^2-1}.
$$
Since $I_\gamma\cap(-\infty,r_\gamma)\neq\emptyset$,
we may choose $r_1<r_2$ in $I_\gamma$ such that $L_\gamma(r_1)>L_\gamma(r_2)$.
In addition, if $\gamma>(1+\sqrt5)/2$, then $r_\gamma>0$, so that $r_1,r_2$ may be chosen positive.

Now choose $\delta>0$ such that
$
L_\gamma(r_1)>L_\gamma(r_2)+2\delta.
$
Then, for all sufficiently large $n$, by \eqref{eq:A-scaling},
$$
\frac{A_n(r_1)}{\eta_n}\ge L_\gamma(r_1)-\delta
>
L_\gamma(r_2)+\delta
\ge 
\frac{A_n(r_2)}{\eta_n}\qquad \Rightarrow\qquad A_n(r_1)>A_n(r_2).
$$
For such $n$ this provides the desired counterexample.

\section{General equilibrium}\label{sec:general-equilibrium}

The goal of this section is to prove Propositions and Theorems  \ref{prop:fixed-r}, \ref{thm:fixed-tau-surplus}, \ref{thm:fixed-tau-deficit}, and  \ref{thm:huggett-to-aiyagari}. We let Assumptions \ref{asm:standing}, the no-borrowing limit, and Assumption \ref{asm:utility-function} hold.

\subsection{Walras' law}

Given an interest rate $r$, wage $w$, and tax-and-transfer rate $\tau$, let $c^*(a,z)=c^*(a,z;r,w,\tau)$ and $G^*(\d a, \d z)=G^*(\d a, \d z;r,w,\tau)$ denote the optimal consumption policy and unique stationary distribution, respectively. We define
\be \label{eq:aggregates}
A(r,w,\tau) := \int_\sX a\, G^*(\da,\dz;r,w,\tau),\quad C(r,w,\tau): = \int_\sX c^*(a,z;r,w,\tau) \, G^*(\da,\dz;r,w,\tau).
\ee

In the Huggett model, $w=1$ and we set $A(r,\tau) = A(r,1,\tau)$ and $C(r,\tau) = C(r,1,\tau).$

\begin{Lem}[Walras' law for Huggett]\label{lem:Walras-Huggett}
Assume that $\Xi=(\tau,B,r,c,G)$ with $r<\rho$ and $\tau<1$ satisfies conditions \eqref{def:Hug-I} and \eqref{def:Hug-II} of the definition of a real stationary Huggett equilibrium. For $r\neq0$, consider the following conditions:
\begin{enumerate}[(i)]
\item $\Xi$ satisfies \eqref{def:Hug-III} (government's budget constraint);
\item $\int a\, G(\da, \dz)=B$ (asset market clearing);
\item $\int c(a,z)\, G(\da, \dz)=1$ (goods market clearing).
\end{enumerate}
Then, any two conditions above imply the other one, in which case $\Xi$ is a Huggett equilibrium.
\end{Lem}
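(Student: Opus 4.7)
The heart of the proof is a single accounting identity that any stationary distribution of the optimally controlled wealth process must satisfy; once that identity is available, Walras' law for the Huggett model reduces to elementary algebra on three scalars. My plan is therefore (a) to derive the identity from stationarity and (b) to run through the three cases.

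To obtain the identity, I would apply Dynkin's formula to the test function $\varphi(a,z) = a$ for the Markov process $(a^*_t, z_t)_{t\ge0}$ from clause~(2), with initial distribution $G$. Since $G$ has compact support by Theorem~\ref{thm:invariant-distribution} and $c$ is continuous, $\varphi$ and the drift are bounded on $\mathrm{supp}(G)$, so the martingale term is a true martingale and stationarity yields $\E_G[\varphi(a^*_t,z_t)] = \E_G[\varphi(a^*_0, z_0)]$ for all $t\ge0$. Computing the extended generator (note that $\varphi$ does not depend on $z$, so the $\sL$-contribution vanishes) and dividing by $t$ gives
\begin{equation*}
\int_\sX \bigl(ra + (1-\tau)z - c(a,z)\bigr)\, G(\da,\dz) = 0.
\end{equation*}
The $\cZ$-marginal of $G$ must coincide with the invariant law of $\boldsymbol{z}$, so $\int z\, G(\da,\dz) = \E[z_t] = 1$, and the display rearranges to
\begin{equation*}
rA + (1-\tau) \;=\; C, \qquad\text{where }A := \int a\, G,\ \ C := \int c(a,z)\, G.  \tag{$\ast$}
\end{equation*}

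With $(\ast)$ in hand, the three implications are immediate. If (i) and (ii) hold, then $rA = rB = \int \tau z\, G = \tau$, and $(\ast)$ forces $C = \tau + (1-\tau) = 1$, which is (iii). If (ii) and (iii) hold, then $(\ast)$ gives $rA = \tau$, and $A = B$ yields $rB = \tau$, i.e., (i). Finally, if (i) and (iii) hold, then $(\ast)$ gives $rA = \tau = rB$; dividing by $r$ — the only place where $r \neq 0$ is genuinely used — we recover $A = B$, i.e., (ii). Once all three hold, clauses (1)–(4) of the definition of a stationary Huggett equilibrium are satisfied, so $\Xi$ is an equilibrium.

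The only real subtlety is the justification of $(\ast)$: one needs Dynkin's formula for a test function which is a priori only locally bounded. This is not a genuine obstacle because compactness of $\mathrm{supp}(G)$ (Theorem~\ref{thm:invariant-distribution}) removes all integrability concerns, and the state equation has no diffusive or jump contribution to $\varphi(a,z) = a$. No viscosity or HJB machinery is needed; the argument is purely a consequence of stationarity and the budget-constraint algebra.
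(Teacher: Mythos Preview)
Your proposal is correct and follows essentially the same route as the paper: derive the stationary accounting identity $rA + (1-\tau) = C$ by integrating the wealth dynamics against the (compactly supported) invariant measure, then do the elementary algebra. The paper makes one extra step explicit that you gloss over---it first identifies $c$ with $c^*$ and $G$ with $G^*$ via the uniqueness results in Theorems~\ref{thm:feedback-consumption} and~\ref{thm:invariant-distribution} before invoking compact support---but otherwise your argument and the paper's coincide.
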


\begin{proof}
By conditions \eqref{def:Hug-I} and \eqref{def:Hug-II} of a stationary Huggett equilibrium, $c=c(a,z)$ is an optimal control of the household problem given the parameters $r$, $w=1$ and $\tau$. By Theorem \ref{thm:household} \eqref{thm:household-control}, it has to coincide with the optimal feedback control $c^*=c^*(a,z;r,\tau)$. Likewise, by uniqueness of the invariant distribution, we must have $G(\da,\dz)=G^*(\da,\dz;r,\tau)$. By stationarity of $G^*(\da,\dz;r,\tau)$ and the fact that it is compactly supported on $\sX$, we may integrate the optimal state dynamics with respect to $G$ to obtain
$$
0=rA(r,\tau)+(1-\tau)-C(r,\tau).
$$
Then,
$$
rB=\tau\quad \text{and}\quad A(r,\tau)=B \qquad \Longrightarrow \qquad r A(r,\tau) = \tau \qquad \Longrightarrow \qquad C(r,\tau)=1.
$$
The other two implications are similarly obtained.
\end{proof}

\noindent 
We now turn to the Aiyagari version of our problem and first observe that in any stationary equilibrium $\Xi=(\tau,B,K,r,w,c,G)$, there is a one-to-one correspondence between $r\in(-\delta,\infty)$, $K\in(0,\infty)$ and $w\in(0,\infty)$ using equations \eqref{eq:aiy-market-prices-eq}. Using the competitive equilibrium between firms (see \eqref{def:aiyagari-5}) in the definition of Aiyagari equilibria, we may  express $K=K^*(r)$ and $w=w^*(r)$ in equilibrium as functions of the interest rate $r$,
\be\label{eq:K-w-func-of-r}
K^*(r) := \left( \frac{\alpha}{r+\delta} \right)^{\frac{1}{1-\alpha}}\!\!\!, \quad 
w^*(r) := (1-\alpha) K^*(r)^\alpha= (1-\alpha) \left( \frac{\alpha}{r+\delta} \right)^{\frac{\alpha}{1-\alpha}}\!\!\!,\qquad r\in(-\delta,\infty).
\ee

\begin{Lem}[Walras' law for Aiyagari]\label{lem:aiy-walras}
Assume that $\Xi=(\tau,B,K,r,w,c,G)$ satisfies conditions \eqref{def:aiyagari-1}, \eqref{def:aiyagari-2} and \eqref{def:aiyagari-5} of the definition of a real stationary Aiyagari equilibrium. For $r\in(-\delta,\rho)\setminus\{0\}$ and $\tau<1$, consider the following:
\begin{enumerate}[(i)]
\item $\Xi$ satisfies \eqref{def:aiyagari-3} (government's budget constraint);
\item $\int a \,\d G=K+B$ (asset market clearing);
\item $\int c(a,z)\, G(\da, \dz)+\delta K = F(K,1)$ (goods market clearing).
\end{enumerate}
Then, any of the two conditions above imply the other one, in which case $\Xi$ is an Aiyagari equilibrium.
\end{Lem}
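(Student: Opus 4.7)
The plan is to mimic the proof of Lemma \ref{lem:Walras-Huggett}, but now with two algebraic identities tying the four aggregates $A$, $C$, $K$, $B$ together instead of one. First, by condition \eqref{def:aiyagari-1} and Theorem \ref{thm:feedback-consumption}, the feedback rule $c$ coincides Lebesgue a.e.\ with the optimal feedback $c^*(\cdot,\cdot;r,w,\tau)$; by condition \eqref{def:aiyagari-2} and Proposition \ref{prop:existence-uniqueness-G}, the measure $G$ is the unique invariant distribution $G^*(\cdot,\cdot;r,w,\tau)$, which is compactly supported on $\sX$. Integrating the optimal wealth dynamics against $G^*$ and using stationarity together with $\int z\,G^*(\da,\dz)=Z=1$ and the linear tax $\tau(z)=\tau z$, I obtain the \emph{stationarity identity}
$$
0 \;=\; rA + w(1-\tau) - C, \qquad \text{where } A=\int a\,\d G,\ C=\int c(a,z)\,\d G.
$$

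Next, I invoke the firm's optimality \eqref{def:aiyagari-5} with Cobb-Douglas $F(K,L)=K^\alpha L^{1-\alpha}$ and $L=Z=1$. Then $(r+\delta)K = \alpha F(K,1)$ and $w = (1-\alpha)F(K,1)$, so adding these gives the \emph{firm identity}
$$
F(K,1) \;=\; (r+\delta)K + w, \qquad \text{equivalently}\qquad F(K,1)-\delta K \;=\; rK + w.
$$
Also, the linearity of the tax and $Z=1$ give $w\int \tau(z)\,G(\da,\dz)=w\tau$.

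With these two identities in hand, the equivalence is a short bookkeeping argument. Writing (i) as $rB=w\tau$, (ii) as $A=K+B$, and (iii) as $C+\delta K = F(K,1)$:
\begin{itemize}
\item (i)+(ii)$\Rightarrow$(iii): From the stationarity identity, $C=rA+w(1-\tau)=r(K+B)+w-w\tau=rK+w = F(K,1)-\delta K$.
\item (ii)+(iii)$\Rightarrow$(i): The stationarity and firm identities give $rA+w(1-\tau)=C=rK+w$, hence $rA=rK+w\tau$; combined with $A=K+B$ this yields $rB=w\tau$.
\item (i)+(iii)$\Rightarrow$(ii): From $C=F(K,1)-\delta K=rK+w$ and the stationarity identity, $rA=C-w(1-\tau)=rK+w\tau=rK+rB=r(K+B)$, so $A=K+B$ after cancelling $r\neq 0$.
\end{itemize}

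The only place where $r\neq 0$ is used is in the last implication, exactly as in Lemma \ref{lem:Walras-Huggett}. No step should pose a real obstacle: all regularity needed (admissibility of exchanging integration and the ODE for $a^*_t$, compact support of $G^*$, uniqueness of $c^*$ and $G^*$) is already supplied by Theorems \ref{thm:feedback-consumption} and \ref{thm:invariant-distribution} and Proposition \ref{prop:existence-uniqueness-G}. The only point that requires a bit of care is justifying the stationarity identity, since $c^*(\cdot,z)$ need not be Lipschitz at the borrowing constraint; however, $c^*$ is continuous and bounded on the compact support of $G^*$, which makes the integration of the state equation against $G^*$ standard.
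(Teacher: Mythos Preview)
Your proposal is correct and follows essentially the same approach as the paper: you derive the stationarity identity $0=rA+w(1-\tau)-C$ from conditions \eqref{def:aiyagari-1}--\eqref{def:aiyagari-2} and the firm identity $F(K,1)-\delta K=rK+w$ from condition \eqref{def:aiyagari-5}, then combine them to cycle through the three implications. You are in fact slightly more explicit than the paper, which only writes out (i)+(ii)$\Rightarrow$(iii) and leaves the other two to the reader; your observation that $r\neq0$ is needed only for (i)+(iii)$\Rightarrow$(ii) is also correct.
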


\begin{proof}
As in the proof of Lemma \ref{lem:Walras-Huggett}, we see that $c(a,z)=c^*(a,z;r,w,\tau)$, $G(\da,\dz)=G^*(\da,\dz;r,w,\tau)$, and therefore
$$
0 = rA(r,w,\tau) + w(1-\tau) -C(r,w,\tau).
$$
By condition \eqref{def:aiyagari-5} and equation \eqref{eq:K-w-func-of-r}, $K=K^*(r)$ and $w=w^*(r)$. Using these relations,
$$
rK + w - \big(K^\alpha-\delta K\big)  = r K +(1-\alpha)K^\alpha - \big(K^\alpha-\delta K\big) = K \big((r+\delta)  - \alpha K^{\alpha-1}\big)=0.
$$
Then, 
\begin{align*}
A(r,w,\tau) = K + B \quad \text{and} \quad rB=w\tau\qquad &\Longrightarrow \qquad r A(r,w,\tau) = rK + w\tau \\
&\Longrightarrow\qquad C(r,w,\tau) = rK+w\\
&\Longrightarrow\qquad  C(r,w,\tau) = K^\alpha - \delta K.
\end{align*}
Here, we used the stationarity of $G$ in the second line and the computation in the previous display in the third line. The other implications are obtained in a similar manner.
\end{proof}

\subsection{Proof of Proposition \ref{prop:fixed-r}}

\begin{proof}[Proof of Proposition \ref{prop:fixed-r} \eqref{prop:fixed-r-huggett}]
Let $r<\rho$ be given and first assume $r\neq0$. By Lemma \ref{lem:Walras-Huggett}, any $\tau^*<1$ that satisfies $C(r,\tau^*)=1$ leads to an equilibrium upon defining $B:=A(r,\tau^*)$. By Theorem \ref{thm:properties-G} \eqref{thm:properties-G-linearity-of-A-in-tau}, $C(r,\tau)=(1-\tau)\,C(r,0)$ and since the optimal consumption policy is strictly positive everywhere on $\sX$, we have $C(r,0)>0$. Therefore, there exists a unique value of $\tau^*\in(-\infty,1)$ such that $(1-\tau^*)\,C(r,0)=1$. In the case $r=0$, as in Lemma \ref{lem:Walras-Huggett}, we see $C(0,\tau)=1-\tau$ using stationarity of $G^*$ and the state dynamics. Hence $\tau^*=0$ is the unique value that satisfies goods market clearing. Setting $B:=A(0,0)$ again leads to an equilibrium.
\end{proof}

\begin{proof}[Proof of Proposition \ref{prop:fixed-r} \eqref{prop:fixed-r-aiyagari}]
\emph{Uniqueness.} Let $r\in(-\delta,\rho)$ be given and consider an equilibrium 
$$
\Xi=(\tau,B,K,r,w,c,G).
$$
We must have $K=K^*(r)$, $w=w^*(r)$ and $c(a,z)=c^*(a,z;r,w,\tau)$, $G(\da,\dz)=G^*(\da,\dz;r,w,\tau)$. Asset market clearing, together with the facts that $K>0$ and $B\ge0$ necessitate $r>\underline{r}$. By Theorem \ref{thm:properties-G} \eqref{thm:properties-G-linearity-of-A-in-tau}, for any $\tau<1$, $C(r,w,\tau)=(1-\tau)C(r,w,0)$. Since $C(r,w,0)>0$, there exists at most one value of $\tau<1$ such that the goods market clears: $(1-\tau)C(r,w,0)=K^\alpha-\delta K$.

\emph{Existence.} Let $r>\underline{r}$ be given. In the case $r\neq 0$, we set $K=K^*(r)$, $w=w^*(r)$, and follow the previous reasoning to seek $\tau<1$ such that the goods market clears. Then, by Lemma \ref{lem:aiy-walras}, a unique equilibrium exists if $B:=A(r,w,\tau)-K\geq0$. This is equivalent to
$$
\frac{K^\alpha-\delta K}{C(r,w,0)}= 1-\tau \geq \frac{K}{A(r,w,0)}\quad \Longleftrightarrow \quad \frac{r+\delta}{\alpha} - \delta \geq \frac{C(r,1,0)}{A(r,1,0)}.
$$
This is possible, for example, if $\alpha$ is small or $\delta$ is large. In the case $r=0$, we directly verify that $C(0,w,0)=w=K^\alpha-\delta K$. The government's budget constraint \eqref{def:aiyagari-3} necessitates $\tau=0$. In order to satisfy the asset market clearing condition, we are again led to the condition $A(0,w,0)-K\ge 0$. This is equivalent to $A(0,w,0)\ge (\alpha/\delta)^{1/(1-\alpha)}$, which is again possible for sufficiently large $\delta$ or small $\alpha$.
\end{proof}

\subsection{Proof of Theorem \ref{thm:fixed-tau-surplus}} \label{ssec:proof-surplus}

Let $\tau<1$ be given. Using Lemma \ref{lem:Walras-Huggett}, Huggett equilibria are characterized by interest rates $r^*$ such that 
\be\label{eq:proof-hugget-eq}
r^* A(r^*,\tau)=\tau
\ee
upon letting $B:=A(r^*,\tau)$. Turning to Aiyagari equilibria, we see that in any equilibrium, we must have $(K^*,w^*)=(K^*(r^*),w^*(r^*))$. Using Theorem \ref{thm:properties-G} \eqref{thm:properties-G-linearity-of-A-in-tau},
$$
A(r,w^*(r),\tau) = w^*(r) A(r,1,\tau),\qquad r<\rho.
$$
For $r\neq0$, the government's budget constraint implies $B=w^*(r)\tau/r$, so that the asset market clearing condition becomes
\be\label{eq:aiyagari-asset-clearing}
A(r,1,\tau) \overset{!}{=} \frac{K^*(r)}{w^*(r)} + \frac{\tau}{r} = \frac{\alpha}{1-\alpha} \,  \frac{1}{r+\delta} + \frac{\tau}{r} =: S(r).
\ee
If $r^*\in(-\delta,\rho)\setminus\{0\}$ satisfies this equation and $\tau/r^*\geq0$, then setting $B^*:=w^*(r^*)\tau/r^*$ leads to a stationary Aiyagari equilibrium by Lemma \ref{lem:aiy-walras}.

\begin{proof}[Proof of Theorem \ref{thm:fixed-tau-surplus} \eqref{thm:fixed-tau-surplus-positive}]
Assume that $\tau\in(0,1)$. The map $(-\infty,\rho)\ni r\mapsto A(r,w,\tau)$ is continuous and increasing with $\lim_{r\uparrow\rho}A(r,1,\tau)=\infty$.\\

\emph{Huggett model.}  It is clear that there exists exactly one intersection point of $r\mapsto A(r,1,\tau)$ with the strictly decreasing map $r\mapsto \tau/r$ and that the intersection point $r^*$ satisfies $0<r^*<\rho$. \\

\emph{Aiyagari model.} $S(r)$ is strictly decreasing on  $r\in(-\delta,0)$ and $(0,\infty)$ with 
$$
\lim_{r\downarrow-\delta}S(r) = \lim_{r\downarrow0} S(r) = \infty,\quad \lim_{r\uparrow0} S(r)=-\infty.
$$
Therefore, there exist exactly two intersection points $-\delta<r^{(\ell)}<0<r^{(h)}<\rho$ that satisfy $A(r^{(i)},1,\tau)=S(r^{(i)})$, $i=\ell,h$. However, only $r^*=r^{(h)}$ satisfies $\tau/r^*\geq0$. By the government's budget constraint, $r^*=0$ cannot lead to an equilibrium. Therefore, there is a unique equilibrium in which $r^*=r^{(h)}\in(0,\rho)$.
\end{proof}

\begin{proof}[Proof of Theorem \ref{thm:fixed-tau-surplus} \eqref{thm:fixed-tau-surplus-zero}]
Let $\tau=0$. \\

\emph{Huggett model.} Equilibria are again characterized by interest rates $r^*$ that satisfy $r^*A(r^*,0)=0$. We must either have $B^*=A(r^*,0)=0$ or $r^*=0$. By Theorem \ref{thm:properties-G} \eqref{thm:properties-G-hand-to-mouth}, the former holds if $r^*\le \underline{r}$. For $r=0$, set $B=A(0,0)$. Then, stationarity implies goods market clearing and the government's budget constraint is trivially satisfied. Hence, $r=0$ leads to an equilibrium as well.\\

\emph{Aiyagari model.} If $\tau=0$, then $S(r)$ is strictly decreasing on $(-\delta,\infty)$. By the government's budget constraint, we must either have $B^*=0$ or $r^*=0$. If $B^*=0$, then there exists exactly one  $r^*\in(-\delta,\rho)$ such that $A(r^*,1,\tau)=S(r^*)$, which leads to an equilibrium. If $\tau=0$ and $r=0$, then the government budget constraint is void. The equilibrium quantities $K^*,w^*$ are determined by
$$
K^*= \left(\frac{\alpha}{\delta}\right)^{\frac{1}{1-\alpha}},\qquad w^*= (1-\alpha)(K^*)^\alpha.
$$
By firm optimality (see proof of Lemma \ref{lem:aiy-walras}), $(K^*)^\alpha-\delta K^*=w^*$. Together with stationarity, $C^*=w^*$, so that the goods market clears. Asset market clearing requires $B^*= A(0,w^*,0)-K^*\ge0$, which is equivalent to $A(0,1,0)\ge K^*/w^*=\alpha/((1-\alpha)\delta)$. If this inequality is strict, this gives an equilibrium with $B^*>0$. If equality holds, this coincides with the zero-debt equilibrium.
\end{proof}

\subsection{Proof of Theorem \ref{thm:fixed-tau-deficit}}
This subsection presents the proofs of Theorem \ref{thm:fixed-tau-deficit} \eqref{thm:fixed-tau-deficit-small} and \eqref{thm:fixed-tau-deficit-large}. Due to its length, the proof of Theorem \ref{thm:fixed-tau-deficit} \eqref{thm:fixed-tau-deficit-multiple-equilibria} is presented in the next subsection.

\begin{proof}[Proof of Theorem \ref{thm:fixed-tau-deficit} \eqref{thm:fixed-tau-deficit-small}]

\emph{Huggett model.} If $\tau<0$, we need $r^*$ to satisfy $r^*A(r^*,\tau)=\tau$. Now, the only possible intersection points of the maps $r\mapsto A(r,\tau)$ and $r\mapsto \tau/r$ defined on $(-\infty,\rho)$ are in $(-\infty,0)$ since $A(r,\tau)\geq0$. Hence, if the lower interest rate bound $\underline{r}$ is non-negative, then there is no intersection point. Next, consider the case $\underline{r}<0$. By Theorem \ref{thm:properties-G} \eqref{thm:properties-G-linearity-of-A-in-tau}, $A(r,\tau)=(1-\tau)A(r,0)$. The equilibrium condition \eqref{eq:proof-hugget-eq} becomes $r^* A(r^*,0)=\tau/(1-\tau)$. Let $f(r^*)$ denote the left, and $g(\tau)$ the right side, respectively. Then $f$ is continuous, $f(\underline{r})=f(0)=0$ and $f<0$ on $(\underline{r},0)$. Clearly, $g(\tau)$ is a strictly increasing continuous function on $(-\infty,0)$ with $g(-\infty)=-1$ and $g(0)=0$. Choose any $r_0\in(\underline{r},0)$ such that $f(r_0)<0$. For $\tau<0$ sufficiently close to zero, $g(\tau)\in (f(r_0),0)$. Applying the intermediate value theorem on $[\underline{r},r_0]$ and $[r_0,0]$ implies the existence of $r^*_1,r^*_2\in (\underline{r},0)$ with $f(r^*_1)=f(r^*_2)=g(\tau)$.\\

\emph{Aiyagari model. } If $\tau<0$, then the government's budget constraint necessitates that any equilibrium interest rate is strictly negative. Since $S(r)>0$ on $(-\delta,0)$, a necessary condition for an equilibrium is $\underline{r}<0$. As in the proof for Huggett equilibria, we see that there exist at least two equilibria whenever both $\alpha>0$ and $\tau<0$ are close to zero.
\end{proof}

\begin{proof}[Proof of Theorem \ref{thm:fixed-tau-deficit} \eqref{thm:fixed-tau-deficit-large}]
\emph{Huggett model.} 
We now prove that, for sufficiently large deficits, no Huggett equilibrium exists. 
Let $\underline r$ denote the lower interest-rate bound from Theorem \ref{thm:properties-G} \eqref{thm:properties-G-hand-to-mouth}, so that,
under the no-borrowing constraint, $A(r,0)=0$ for all $r\le \underline r$. First observe that $f(r)>-1$ for $r\le0$. Indeed, stationarity of the invariant distribution for the household problem with $\tau=0$ gives $0=rA(r,0)+1-C(r,0)$, and hence $f(r)=C(r,0)-1$. Since optimal consumption is strictly positive, $C(r,0)>0$, and therefore $f(r)>-1$. We continue to assume $\underline r<0$. Since $A(r,0)=0$ for $r\le \underline r$, possible
intersections with a negative level can only occur on the compact interval
$[\underline r,0]$. As shown above, $f$ is continuous on this interval with $f>-1$. Hence $m:=\min\{ f(r):r\in[\underline r,0]\}$ is well-defined and satisfies $-1<m<0$. 
The strict inequality $m<0$ follows from $A(r,0)>0$ for $r>\underline{r}$. Choose $\tau<0$ so negative that
$$
\frac{\tau}{1-\tau}<m.
$$
For such $\tau$,
the equation $ rA(r,0)=\tau/(1-\tau)$ has no solution, and hence, no stationary Huggett equilibrium exists.\\

\emph{Aiyagari model.} We conclude by showing that, for sufficiently large deficits, no Aiyagari equilibrium exists. Since any equilibrium with $\tau<0$ must have $r\in(-\delta,0)$, set
$ m(\delta):=\min\{ f(r):r\in[-\delta,0]\}$ where $f(r):=r A(r,1,0)$. As in the Huggett proof, stationarity at $\tau=0$ gives $f(r)=C(r,1,0)-1>-1$. Hence, $-1<m(\delta)\le0$. If
$$
\frac{\tau}{1-\tau}<m(\delta),
$$
then we claim that no Aiyagari equilibrium exists. Indeed, if an equilibrium existed, then $r\in(-\delta,0)$ and, using the identity $A(r,1,\tau)=(1-\tau)A(r,1,0)$ and the  equilibrium condition
$$
A(r,1,\tau)
= \frac{\alpha}{1-\alpha}\frac1{r+\delta}+\frac{\tau}{r}
$$
would imply, after multiplying by $r<0$ and dividing by $1-\tau>0$,
$$
f(r)
=
\frac{\tau}{1-\tau}
+ \frac{\alpha}{1-\alpha}
\frac{r}{(r+\delta)(1-\tau)}.
$$
But for $r\in(-\delta,0)$ the second term on the right-hand side is strictly negative. Therefore,
$$
f(r)
< \frac{\tau}{1-\tau} < m(\delta)
\le
f(r),
$$
contradiction. Hence no equilibrium exists for such $\tau$.
\end{proof}

\subsection{Proof of Theorem \ref{thm:fixed-tau-deficit} \eqref{thm:fixed-tau-deficit-multiple-equilibria}}

Using Theorem \ref{thm:properties-G} \eqref{thm:properties-G-linearity-of-A-in-tau}, equilibria with $r\neq0$ are characterized by the condition 
$$
D(r) := rA(r,1,0) = \frac{\tau}{1-\tau}.
$$
This section constructs an irreducible income process for which the function $D(r)$ crosses a negative level at least $d\in 2\N$ times. Let us set $N=d/2$.

\emph{Step 1. Two-state economy.} Let $b<0$, and let $0<z_\ell<1<z_h$ be arbitrary income states satisfying $(z_\ell+z_h)/2=1$. We first construct an economy that has $b$ as its lower interest rate bound. To this end, choose symmetric transition rates
$$
\lambda_{\ell h}
:=
\lambda_{h\ell}
:=
\frac{\rho-b}{(z_h/z_\ell)^\gamma-1}.
$$
If $(z_t)$ is a Markov chain with these specifications, then it has stationary mean one. For CRRA utility and under the
no-borrowing constraint, the interest-rate criterion Theorem \ref{thm:properties-G} \eqref{thm:properties-G-hand-to-mouth} reads
$$
\underline r
= \rho-\lambda_{h\ell}\left[\left(\frac{z_h}{z_\ell}\right)^\gamma-1\right] = b.
$$
Let $A_b(r)$ and $C_b(r)$ denote aggregate assets and consumption in this two-state economy, respectively, with $(w,\tau)=(1,0)$. Let $D_b(r):=rA_b(r)$ be the debt demand function. By the lower-bound characterization,
$$
D_b(r)=0\quad\text{for }r\le b,
\qquad
D_b(0)=0,
\qquad
D_b(r)<0\quad\text{for }r\in(b,0).
$$
Moreover $D_b$ is continuous, and $D_b(r)>-1$ for $r<0$, because stationarity gives $D_b(r)=C_b(r)-1$, and aggregate consumption is strictly positive.

\emph{Step 2.}
We now inductively construct $N=d/2$ many economies. Choose $b_1<0$
and set $F_0(r)\equiv0$. Let $1\le k\le N$, and suppose $b_k<0$ and a function $F_{k-1}(r)$ have been chosen. Let $D_k(r)$ be the debt-demand function of a two-state economy with lower bound $b_k$ and with income states $\cZ_k$ disjoint from the previously chosen income states. Choose $c_k\in(b_k,0)$. Then $D_k(c_k)<0$. Since
$F_{k-1}(c_k)$ is finite, choose $\alpha_k>0$ so large that
$$
F_{k-1}(c_k)+\alpha_kD_k(c_k)<-1,\qquad \text{and set}\qquad F_k(r):=F_{k-1}(r)+\alpha_kD_k(r).
$$
If $k<N$, since $F_k(0)=0$, we may choose
$b_{k+1}\in(c_k,0)$ close enough to zero that
$$
F_k(b_{k+1})>-\frac12.
$$
After $N$ steps, set $b_{N+1}:=0$, and define
$$
F(r):=F_N(r)=\sum_{j=1}^N\alpha_jD_j(r),
\qquad
S:=\sum_{j=1}^N\alpha_j.
$$
Since $D_j(r)=0$ for $r\le b_j$, 
\be \label{eq:proof-multiple-eq-I}
F(b_j)>-\frac12,
\quad 1\le j\le N+1,\qquad \text{and}\qquad 
F(c_j)<-1,
\quad 1\le j\le N
\ee
Note $S>1$, since
$\alpha_1D_1(c_1)<-1$ while $D_1(c_1)>-1$. Set 
$$
\beta_j:=\frac{\alpha_j}{S},
\qquad j=1,\ldots,N,
$$
and define a $d$-state income process on the state space $\cZ:=\cZ_1\uplus \cdots\uplus \cZ_N$ with generator
$$
(\sL_0 \varphi)(z) = (\sL_j(\varphi|_{\cZ_j}))(z),
\qquad z\in\cZ_j
$$
where $\sL_j$ denotes the generator of the $j$th income chain. Let $G^r_j\in\sP([0,\infty)\times\cZ_j)$ denote the unique stationary distribution of the household problem of the $j$th two-state economy with interest rate $r$. We view $G_j^r$ as a measure on $[0,\infty)\times\cZ$ by extending it
by zero outside $[0,\infty)\times\cZ_j$, and define
$$
G^r_0(\da,\dz) := \sum_{j=1}^N \beta_j G^r_j(\da,\d z)\in \sP([0,\infty)\times \cZ).
$$
Since the states in $\cZ_j$ and $\cZ_k$ do not communicate for $j\neq k$, it is easy to see that $G^r_0$ is a stationary distribution\footnote{By no means is it the \emph{unique} one.} of the problem corresponding to the $d$-state income chain with generator $\sL_0$. The corresponding aggregate
real debt demand is
$$
D_0(r):=r \, \int_{[0,\infty)\times \cZ}a \, G^r_0(\da,\dz)= \sum_{j=1}^N\beta_jD_j(r)=\frac{F(r)}{S}.
$$
From \eqref{eq:proof-multiple-eq-I},
$$
D_0(b_j)>-\frac{1}{2S},
\quad j=1,\ldots,N+1,\qquad \text{and}\qquad 
D_0(c_j)<-\frac1S,
\quad j=1,\ldots,N.
$$
Choose $\eta:= -3/(4S)\in(-1,0)$. Then, for every $j=1,\ldots,N$,
\be \label{eq:proof-multiple-eq-II}
D_0(b_j)>\eta,
\qquad
D_0(c_j)<\eta,
\qquad
D_0(b_{j+1})>\eta.
\ee
By continuity, the equation $D_0(r)=\eta$ has at least two solutions in each interval $(b_j,b_{j+1})$. Hence it has at least
$2N=d$ solutions in $(b_1,0)$. 

\emph{Step 3.} It remains to perturb the income chain into an irreducible process without destroying these crossings. Recall that $(1/2,1/2)\in\sP(\cZ_j)$ is the invariant distribution of $\sL_j$. Define a stationary income distribution
$$
\bar\pi(z):=\beta_j/2,
\qquad z\in\cZ_j.
$$
Let $R$ be an irreducible transition rate matrix on $\cZ$ with $\bar \pi$ as its invariant distribution, for instance
$$
R_{zy}:=\bar\pi(y),\quad z\ne y,
\qquad
R_{zz}:=-(1-\bar\pi(z)).
$$
For $\varepsilon>0$, set $\sL_\varepsilon:=\sL_0+\varepsilon R$, and let $G_\varepsilon^r\in\sP([0,\infty)\times \cZ)$ be the unique invariant distribution of the optimally controlled process associated with $\sL_\varepsilon$. We claim that, as $\varepsilon\downarrow0$, for every $r<0$,
\be \label{eq:proof-multiple-eq-III}
A_\varepsilon(r):=\int a\,G_\varepsilon^r(\da,\dz)
\longrightarrow
A_0(r):=\int a\,G^r_0(\da,\dz).
\ee
This implies $D_\varepsilon(r):=rA_\varepsilon(r)
\to D_0(r)$ for every $r<0$.  It remains to establish \eqref{eq:proof-multiple-eq-III}.

Let $v^\varepsilon$ and $c_\varepsilon$ denote the value function and
optimal feedback rule associated with $\sL_\varepsilon$. Define $v_0:[0,\infty)\times\cZ\to\R$ by $
v_0(a,z):=v_j(a,z)$, $ z\in\cZ_j$,
where $v_j$ is the value function of the household problem in the $j$th economy. We first note that $v_\varepsilon\to v_0$ and $\partial_a v^\varepsilon\to \partial_a v^0$ locally uniformly on $[0,\infty)\times\cZ$ by the same logic as in the proof of Lemma \ref{lem:joint-continuity-(a,r)}, with the convergence
$\sL_\varepsilon\to\sL_0$ replacing the convergence of interest rates.

Next, because $r<0$, the invariant measures $G_\varepsilon^r$ have a common
compact support. Indeed, if $\bar z:=\max\cZ$, choose $M<\infty$ so
large that $rM+\bar z<0$. Since $c_\varepsilon>0$,
$$
ra+z-c_\varepsilon(a,z)\le ra+\bar z<0
\qquad\text{for }a\ge M,
$$
uniformly in $z$ and $\varepsilon$. Thus, any invariant distribution
is supported on $[0,M]\times\cZ$. 

Let $\varepsilon_n\downarrow0$. By compactness, after passing to a
subsequence if necessary, $G_{\varepsilon_n}^r\Rightarrow \widehat G$ for some probability measure $\widehat G\in\sP([0,M]\times\cZ)$. For any smooth test function $\varphi:[0,M]\times\cZ\to\R$, stationarity of $G_{\varepsilon_n}^r$ implies
$$
0=
\int
\left[
s_{\varepsilon_n}(a,z)\partial_a\varphi(a,z) +
\sL_{\varepsilon_n}\varphi(a,\cdot)(z)
\right]\, 
G_{\varepsilon_n}^r(\da,\dz),
$$
where $s_{\varepsilon}(a,z):=ra+z-c_\varepsilon(a,z)$. Using the locally uniform convergence $s_{\varepsilon_n}\to s_0$, $\sL_{\varepsilon_n}\to\sL_0$, and weak
convergence of $G_{\varepsilon_n}^r$, we may pass to the limit to obtain
\be\label{eq:proof-multiple-eq-G-hat}
0=
\int
[
s_0(a,z)\partial_a\varphi(a,z) +
\sL_0\varphi(a,\cdot)(z)
]\, 
\widehat G(\da,\dz).
\ee
Thus $\widehat G$ is invariant for the limiting process.

We now argue that $\widehat G=G_0^r$.
First, since $\bar \pi\sL_0=\bar\pi R=0$ and $\sL_\varepsilon$ is irreducible, $\bar \pi$ is the unique invariant distribution of $\sL_\varepsilon$, for any $\varepsilon>0$. Therefore, the income marginal of $\widehat G$ is also $\bar\pi$, so that indeed $\widehat G([0,\infty)\times\cZ_j)=\beta_j$. Next, define $\widehat G_j
:=
\beta_j^{-1}\,
\widehat G|_{[0,\infty)\times\cZ_j}$. 
Taking test functions supported on $[0,\infty)\times\cZ_j$ in \eqref{eq:proof-multiple-eq-G-hat} shows that $\widehat G_j$ is invariant for the optimally controlled process of the $j$th economy. By Proposition
\ref{prop:existence-uniqueness-G}, the invariant distribution for each economy is unique, so $\widehat G_j=G_j^r$, and therefore
$$
\widehat G
=
\sum_{j=1}^N\beta_jG_j^r
=
G_0^r.
$$
Taken together, this shows that every subsequential weak limit is equal to $G_0^r$, hence $G_\varepsilon^r\Rightarrow G_0^r$. Since all measures are supported on the common compact set
$[0,M]\times\cZ$, weak convergence implies convergence of first moments:
$$
\int_{[0,\infty)\times \cZ} a\,G_\varepsilon^r(\da,\dz)
\longrightarrow
\int_{[0,\infty)\times \cZ} a\,G_0^r(\da,\dz).
$$

\emph{Step 4.} Using the strict inequalities \eqref{eq:proof-multiple-eq-II}, we see that, for sufficiently small $\varepsilon$, the equation $D_\varepsilon(r)=\eta$
has at least two solutions in each interval $(b_j,b_{j+1})$, and hence at least $d$
solutions in total. Finally set $\tau:=\eta/(1+\eta)<0$,  so that $\eta=\tau/(1-\tau)$. Since every solution of $D_\varepsilon(r)=\eta$ gives rise to a stationary Huggett equilibrium with
primary deficit $\tau$, the economy corresponding to the income process with generator $\sL_\varepsilon$ has at least $d$ stationary Huggett equilibria. This completes the proof.

\subsection{Proof of Theorem \ref{thm:huggett-to-aiyagari}}

Fix $\delta>0$ and, for $r\in(-\delta,\rho)$, set
$$
A_\tau(r):=A(r,1,\tau),\qquad
\chi_\alpha(r):=\frac{\alpha}{1-\alpha}\frac{1}{r+\delta}.
$$
For $r\neq0$, the Aiyagari equilibrium condition is
\begin{equation}\label{eq:aiy-to-hug-proof-aiy}
A_\tau(r)=\frac{\tau}{r}+\chi_\alpha(r),
\qquad \frac{\tau}{r}\ge0.
\end{equation}
The corresponding Huggett equilibrium condition is
\begin{equation}\label{eq:aiy-to-hug-proof-hug}
A_\tau(r)=\frac{\tau}{r}.
\end{equation}
First, let $\tau\in(0,1)$ be given. For each $\alpha\in(0,1)$, let $r_\alpha\in(0,\rho)$ be the
unique Aiyagari equilibrium interest rate, and let $r_H\in(0,\rho)$ be the unique Huggett
equilibrium interest rate. Take any sequence $\alpha_n\downarrow0$ and write
$r_n:=r_{\alpha_n}$. Passing to a subsequence if necessary, let $r_n\to \bar r\in[0,\rho]$.
We first rule out $\bar r\in\{0,\rho\}$. Indeed, if $\bar r=0$, then the right-hand side of \eqref{eq:aiy-to-hug-proof-aiy} diverges to $+\infty$, while $A_\tau(r_n)\to A_\tau(0)<\infty$,
which is impossible. If $\bar r=\rho$, then $A_\tau(r_n)\to\infty$, whereas $\tau/r_n+\chi_{\alpha_n}(r_n)$ remains bounded, again a contradiction. Hence $\bar r\in(0,\rho)$. Letting $n\to\infty$ in
\eqref{eq:aiy-to-hug-proof-aiy}, using continuity of $A_\tau$ and
$\chi_{\alpha_n}(r_n)\to0$, gives $A_\tau(\bar r)=\tau/\bar r$. By uniqueness of the Huggett equilibrium for $\tau>0$, $\bar r=r_H$. Since every convergent subsequence has the same limit, $r_\alpha\to r_H$.

Next, we consider the equilibrium quantities. The Aiyagari capital and wage are
$$
K_\alpha=\left(\frac{\alpha}{r_\alpha+\delta}\right)^{\frac1{1-\alpha}},
\qquad
w_\alpha=(1-\alpha)
\left(\frac{\alpha}{r_\alpha+\delta}\right)^{\frac{\alpha}{1-\alpha}}.
$$
Since $r_\alpha\to r_H>0$, we have $K_\alpha\to0$ and $w_\alpha\to1$. Moreover, the government budget constraint gives
$$
B_\alpha=\frac{w_\alpha\tau}{r_\alpha}
\longrightarrow
\frac{\tau}{r_H}
=
A_\tau(r_H)
=:B_H.
$$
By Lemma \ref{lem:CRRA-scaling},
$$
c^*(a,z;r,w,\tau)
=
w\,c^*(a/w,z;r,1,\tau)\qquad \text{and}\qquad 
G^*(\cdot\,;r,w,\tau)
=
(a\mapsto wa)_\#G^*(\cdot\,;r,1,\tau).
$$
Together with the continuity of the optimal consumption policy and invariant measure in $r$, this implies
$$
c^*(\cdot,\cdot;r_\alpha,w_\alpha,\tau)
\to
c^*(\cdot,\cdot;r_H,1,\tau) \quad \text{locally uniformly, and }\quad G^*(\cdot\,;r_\alpha,w_\alpha,\tau)
\Rightarrow
G^*(\cdot\,;r_H,1,\tau).
$$
Thus, the Aiyagari equilibrium converges to the unique Huggett equilibrium.

Now let $\tau<0$, let $\alpha_n\downarrow0$, and let
$$
\Xi_n=(\tau,B_n,K_n,r_n,w_n,c_n,G_n)
$$
be Aiyagari equilibria satisfying $x_n:=\alpha_n/(r_n+\delta)\to0$. Since $\tau<0$, every Aiyagari equilibrium has $r_n\in(-\delta,0)$. Passing to a subsequence, without relabeling, let $\bar r\in[-\delta,0]$ be a limit point of
$(r_n)$. By assumption, $\chi_{\alpha_n}(r_n)=x_n/(1-\alpha_n)\to0$. If $\bar r=0$, then $\tau/r_n\to+\infty$, contradicting the finiteness and continuity of
$A_\tau$ near $0$ in \eqref{eq:aiy-to-hug-proof-aiy}. Hence, $\bar r\in[-\delta,0)$.
Letting $n\to\infty$ in \eqref{eq:aiy-to-hug-proof-aiy} gives $A_\tau(\bar r)=\tau/\bar r$,
or equivalently $\bar rA_\tau(\bar r)=\tau$.
Thus, with
$$
\bar B:=A_\tau(\bar r)=\frac{\tau}{\bar r}>0,
$$
the government budget constraint and asset market clearing condition of the Huggett model hold.

It remains only to identify the limiting household objects.  Since $x_n\ge \alpha_n/\delta$,
$$
0\ge \alpha_n\log x_n
\ge \alpha_n\log(\alpha_n/\delta)
\longrightarrow0,
$$
and therefore
$$
K_n=x_n^{1/(1-\alpha_n)}\to0,
\qquad
w_n=(1-\alpha_n)x_n^{\alpha_n/(1-\alpha_n)}\to1.
$$
Consequently,
$$
B_n=\frac{w_n\tau}{r_n}\to \frac{\tau}{\bar r}=\bar B.
$$
Using again Lemma \ref{lem:CRRA-scaling} and continuity of the household policy and invariant measure, we obtain
$$
c_n(\cdot,\cdot;r_n,w_n,\tau)\to c^*(\cdot,\cdot;\bar r,1,\tau),
\qquad
G_n(\cdot\,;r_n,w_n,\tau)\Rightarrow G^*(\cdot\,;\bar r,1,\tau).
$$
Therefore every limit point of Aiyagari equilibria is the Huggett equilibrium
$$
\bigl(\tau,\bar B,\bar r,c^*(\cdot,\cdot;\bar r,1,\tau),
G^*(\cdot\,;\bar r,1,\tau)\bigr).
$$
Since $r_n>-\delta$ for all $n$, the limiting Huggett equilibrium satisfies $\bar r\ge -\delta$.

\appendix

\section{Proof of Lemma \ref{lem:A-scaling}} \label{app:A-scaling}
Let $K\subset I_\gamma$ be compact.
For $a\ge0$, $i\in\{\ell,h\}$, and $r\in K$, set
$$
p_i^n(a;r):=v^n_a(a,z_i^n;r),
\qquad
c_i^n(a;r):=\bigl(p_i^n(a;r)\bigr)^{-1/\gamma}, \qquad s_i^n(a;r):=ra+z_i^n-c_i^n(a;r).
$$
We suppress the dependence on $r$ when no confusion can arise.\\

\emph{Step 1. Low-income scaling.}
Set
$$
V_\ell^n(x;r):=\eta_n^{\gamma-1}v^n(\eta_n x,z_\ell^n;r),
\qquad x>0,\ r\in K.
$$
We first prove that
\begin{equation}\label{eq:proof-lem-A-scaling-value}
V_\ell^n(x;r)\longrightarrow
V_0(x;r):=
\frac{\kappa(r)^{-\gamma}}{1-\gamma}x^{1-\gamma},\qquad \text{locally uniformly on} \ (0,\infty)\times K\ \text{as}\ n\to\infty.
\end{equation}
Note that $V_0(\cdot;r)$ is the value function of the deterministic zero-income problem
$$
\sup_{c_t>0}\ \int_0^\infty e^{-\rho t}\frac{c_t^{1-\gamma}}{1-\gamma}\,\dt,
\qquad
\dot x_t=rx_t-c_t,\qquad x_t\ge0.
$$
Now, scaling any admissible control for the deterministic problem started from $x$ by $\eta_n$ gives an admissible control for the $n$th stochastic problem started from $(\eta_nx,z^n_\ell)$. Hence, $V_\ell^n(x;r)\ge V_0(x;r)$.

For the reverse inequality, fix $T>0$ and let
$$
E_n^T:=\{z_t=z_\ell^n\ \text{for all }0\le t\le T\}.
$$
Then $\P_{z^n_\ell}(E_n^T)=e^{-\mu_nT}$. Let $(c_t)$ be any admissible control
started from $(\eta_nx,z_\ell^n)$ and set $\tilde c_t:=c_t/\eta_n$. Since
$\gamma>1$, we have $u\le0$ and therefore
$$
\eta_n^{\gamma-1}
\E\left[\int_0^\infty e^{-\rho t}u(c_t)\,\dt\right]
=
\E\left[\int_0^\infty e^{-\rho t}u(\tilde c_t)\,\dt\right]
\le
\E\left[\chi_{E_n^T}\int_0^T e^{-\rho t}u(\tilde c_t)\,\dt\right].
$$
On $E_n^T$, the budget constraint gives
$$
\int_0^T e^{-rt}\tilde c_t\,\dt
\le
x+\frac{z_\ell^n}{\eta_n}D_T(r),
\qquad
D_T(r):=\int_0^T e^{-rt}\,\dt .
$$
Thus, on $E_n^T$, the last integral is pathwise bounded above by
$$
\sup\Big\{
\int_0^T e^{-\rho t}\frac{k_t^{1-\gamma}}{1-\gamma}\,\dt\,:\, k_t>0\ \text{and} \ \int_0^T e^{-rt}k_t\,\dt\le X\Big\}
=
\frac{B_T(r)^\gamma}{1-\gamma}X^{1-\gamma},
$$
with $B_T(r):=\int_0^T e^{-\kappa(r)t}\,\dt$ and $X:=x+z_\ell^n D_T(r)/\eta_n$.
Consequently,
$$
V_\ell^n(x;r)
\le
e^{-\mu_nT}
\frac{B_T(r)^\gamma}{1-\gamma}
\left(x+\frac{z_\ell^n}{\eta_n}D_T(r)\right)^{1-\gamma}.
$$
Since $z_\ell^n/\eta_n\to0$, $\mu_n\to0$, and $D_T$ is bounded on $K$, it follows that, for every compact interval
$J\subset(0,\infty)$,
$$
\underset{n\to\infty}{\limsup}\,\sup_{(x,r)\in J\times K}\,
\bigl(V_\ell^n(x;r)-V_0(x;r)\bigr)
\le
\sup_{(x,r)\in J\times K}
\frac{B_T(r)^\gamma-\kappa(r)^{-\gamma}}{1-\gamma}
x^{1-\gamma}.
$$
Now  $B_T(r)\uparrow \kappa(r)^{-1}$ uniformly in $r\in K$ as $T\to\infty$.  Combining this with
$V_\ell^n\ge V_0$ proves \eqref{eq:proof-lem-A-scaling-value}.

Moreover, using concavity and the same reasoning as in the proof of Lemma
\ref{lem:joint-continuity-(a,r)}, it is easy to see that 
$$
\partial_xV_\ell^n(x;r)\longrightarrow \partial_xV_0(x;r)\qquad \text{locally uniformly on} \ (0,\infty)\times K.
$$
Since $\partial_xV_\ell^n(x;r)=\lambda_n p_\ell^n(\eta_n x;r)$ and $ \partial_xV_0(x;r)=\kappa(r)^{-\gamma}x^{-\gamma}$,
we obtain
\begin{equation}\label{eq:proof-lem-A-scaling-I}
\lambda_n p_\ell^n(\eta_n x;r)
\longrightarrow \frac{1}{(\kappa(r)x)^{\gamma}}
\qquad \text{locally uniformly on} \ (x,r)\in (0,\infty)\times K.
\end{equation}

\emph{Step 2. Target wealth asymptotic.} Next, for $n\ge1$, let $\bar a_n(r)$ be the largest zero of the high-state saving rule $s_h^n(\bar a_n(r);r)=0$. We claim that
\be\label{eq:proof-lem-A-scaling-II}
x_n(r):=\frac{\bar a_n(r)}{\eta_n}\longrightarrow
L_\gamma(r),
\qquad\text{uniformly on }K.
\ee

First, note that $\bar a_n(r)$ is well-defined in $(0,\infty)$ since $s_h^n(a;r)<0$ for all sufficiently large $a$ by Lemma \ref{lem:properties-saving-rules} \eqref{lem:properties-saving-rules-III}, while
$s_h^n(0;r)>0$ for all sufficiently large $n$. To see the latter, suppose $s_h^n(0;r)=0$. Then $p_h^n(0;r)=(z_h^n)^{-\gamma}$ and, by Lemma \ref{lem:properties-saving-rules} \eqref{lem:properties-saving-rules-I}, $p_\ell^n(0;r)=(z_\ell^n)^{-\gamma}$. The boundary inequality \eqref{eq:Euler-bdry-zero} in Lemma \ref{lem:Euler} gives
$$
(\rho-r+\lambda_n)p^n_h(0;r)
\ge
\lambda_n (z^n_\ell)^{-\gamma} 
=
\left(\frac{\eta_n}{z_\ell^n}\right)^\gamma
\longrightarrow\infty,
$$
contradicting boundedness of $p_h^n(0;r)$ on $K$. Hence $s_h^n(0;r)>0$ eventually.

Next, we show that $x_n(r)$ remains in a compact subset of $(0,\infty)$, for all $r\in K$. At the
high-state cutoff $\bar a_n(r)$, we have $s_h^n(\bar a_n(r);r)=0$, so $
p_h^n(\bar a_n(r);r)
= \bigl(r\eta_n x_n(r)+z_h^n\bigr)^{-\gamma}$. Combined with the Euler equation, 
\be\label{eq:proof-lem-A-scaling-III}
\lambda_n p_\ell^n(\eta_n x_n(r);r)
=
(\rho-r+\lambda_n)
\bigl(r\eta_n x_n(r)+z_h^n\bigr)^{-\gamma}.
\ee
We first prove 
$$
\underset{n\to\infty}{\liminf}\, \inf_{r\in K}\, x_n(r)>0.
$$
Suppose not. Then, after passing to a subsequence, there exist
$r_n\in K$ such that $x_n(r_n)\to0$. For any $x_0>0$,
$x_n(r_n)\le x_0$ eventually. Since $p_\ell^n(\cdot;r)$ is
non-increasing, $\lambda_n p_\ell^n(\eta_n x_n(r_n);r_n)
\ge \lambda_n p_\ell^n(\eta_n x_0;r_n)$.
By the locally uniform convergence in \eqref{eq:proof-lem-A-scaling-I},
$$
\lambda_n p_\ell^n(\eta_n x_0;r_n)
=
\kappa(r_n)^{-\gamma}x_0^{-\gamma}+o(1),
$$
On the other hand,
\eqref{eq:proof-lem-A-scaling-III} and $x_n(r_n)\to0$ imply
$$
\lambda_n p_\ell^n(\eta_n x_n(r_n);r_n)
=
(\rho-r_n+\lambda_n)
(r_n\eta_nx_n(r_n)+z_h^n)^{-\gamma}
=
\rho-r_n+o(1),
$$
uniformly along the sequence. Since $x_0>0$ can be chosen arbitrarily small, the preceding two displays contradict each other. 

Next, we claim
\be\label{eq:proof-lem-A-scaling-upper-bound-xn}
\underset{n\to\infty}{\limsup}\,\sup_{r\in K}\, x_n(r)<\infty.
\ee
Since $\gamma>1$, we have $u\le0$. Choosing a deterministic zero-income policy implies the existence of $C_K'>0$ independent of $r\in K$, such that
$$
-C_K'a^{1-\gamma}\le v_\ell^n(a;r)\le0,\qquad \forall r\in K.
$$
Hence, by concavity, for $C_K:=2^\gamma C_K'$,
$$
p_\ell^n(a;r)
\le
\frac{v_\ell^n(a;r)-v_\ell^n(a/2;r)}{a/2}
\le
C_Ka^{-\gamma},\qquad \forall r\in K.
$$
Taking $a=\eta_nx_n(r)$ and using $\lambda_n=\eta_n^\gamma$ gives
$$
\lambda_np_\ell^n(\eta_nx_n(r);r)\le C_Kx_n(r)^{-\gamma}.
$$
Combining this with \eqref{eq:proof-lem-A-scaling-III}, for all large $n$,
$$
c_K x_n(r)\le r\eta_nx_n(r)+z_h^n
\le |r|\eta_nx_n(r)+z_h^n
$$
for some $c_K>0$ independent of $n$ and $r$. Since $K$ is compact,
$\eta_n\to0$, and $z_h^n\to1$, the first term on the right can be absorbed
into the left, establishing \eqref{eq:proof-lem-A-scaling-upper-bound-xn}. Hence, there are
$0<\underline x<\bar x<\infty$ such that
$$
x_n(r)\in[\underline x,\bar x],
\qquad \forall r\in K,\ n\ge1.
$$
By \eqref{eq:proof-lem-A-scaling-I} and \eqref{eq:proof-lem-A-scaling-III}, using
$\lambda_n\to0$, $z_h^n\to1$, and $\eta_nx_n(r)\to0$ uniformly on $K$, we see that
$$
\frac{1}{(\kappa(r)x_n(r))^{\gamma}}
=
\rho-r+o(1)
$$
uniformly for $r\in K$. Since $\kappa(r)$ and $\rho-r$ are bounded away from zero
on $K$, this is equivalent to
$$
x_n(r)\to
\frac{1}{\kappa(r)(\rho-r)^{1/\gamma}}
= L_\gamma(r)\qquad \text{uniformly in} \ r\in K.
$$

\emph{Step 3. Concentration around $\bar a_n(r)$.}
We show that the invariant wealth distribution concentrates at the high-state cutoff $\bar a_n(r)$. Choose $0<\delta<1$ so small that $\inf_{r\in K}L_\gamma(r)>2\delta$ and set 
$$
\bar a_\pm^n(r):=\eta_n(L_\gamma(r)\pm\delta).
$$
By Step 2., for $n\ge1$,
\be\label{eq:proof-lem-A-scaling-IV}
\bar a_-^n(r)
\le \bar a_n(r)
\le \bar a_+^n(r)
\qquad r\in K.
\ee
In particular, setting $ M:=1+\sup_{r\in K}L_\gamma(r)$ yields $\sup_{r\in K}\bar a_n(r)\le M\eta_n$.

\emph{Step 3a.} First choose $m_\delta>0$ such that
$$
\frac{1}{[\kappa(r)(L_\gamma(r)-\delta)]^{\gamma}}
\ge \rho-r+ 3m_\delta,
\qquad \forall r\in K,
$$
using $\kappa(r)^{-\gamma}L_\gamma(r)^{-\gamma}=\rho-r$ and compactness of $K$.
Combined with the low-income scaling \eqref{eq:proof-lem-A-scaling-I}, for sufficiently large $n$,
$$
\lambda_n p_\ell^n(\bar a^n_-(r);r)
\ge \rho-r+2m_\delta,
\qquad \forall  r\in K.
$$
Since $p_\ell^n(\cdot;r)$ is decreasing, it follows that
\be\label{eq:proof-lem-A-scaling-V}
\lambda_n p_\ell^n(\eta_n x;r)
\ge \rho-r+2m_\delta,
\qquad \forall x\in[0, L_\gamma(r)-\delta],\  r\in K.
\ee

\emph{Step 3b.} We claim that, for all sufficiently large $n$,
\begin{equation}\label{eq:proof-lem-A-scaling-VI}
s_h^n(a;r)>0,
\qquad \forall a\in [0,\bar a^n_-(r)],\ r\in K.
\end{equation}
Towards a contradiction, suppose that there exist $r_n\in K$ and $a_n\in[0,\bar a_-^n(r_n)]$ with $s_h^n(a_n;r_n)\le0$ for all $n$. Write $x_n:=a_n/\eta_n$. Then
$$
0\le x_n\le L_\gamma(r_n)-\delta\le M.
$$
By \eqref{eq:proof-lem-A-scaling-V},
\be \label{eq:proof-lem-A-scaling-V-3b-I}
\lambda_n p_\ell^n(a_n;r_n)
=
\lambda_n p_\ell^n(\eta_n x_n;r_n)
\ge \rho-r_n+2m_\delta .
\ee
On the other hand, $s_h^n(a_n;r_n)\le0$ implies
$$
c_h^n(a_n;r_n)\ge r_na_n+z_h^n,
\qquad
p_h^n(a_n;r_n)\le (r_na_n+z_h^n)^{-\gamma}.
$$
Since $a_n=\eta_n x_n$ with $x_n\le M$, $r_n\in K$, $\eta_n\to0$, and
$z_h^n\to1$, we have
$$
r_na_n+z_h^n=r_n\eta_n x_n+z_h^n\to1.
$$
Hence, $(\rho-r_n+\lambda_n)(r_na_n+z_h^n)^{-\gamma} = \rho-r_n+o(1)$.
Therefore, for sufficiently large $n$,
\be \label{eq:proof-lem-A-scaling-V-3b-II}
(\rho-r_n+\lambda_n)p_h^n(a_n;r_n)
\le \rho-r_n+m_\delta .
\ee
Combining \eqref{eq:proof-lem-A-scaling-V-3b-I} and \eqref{eq:proof-lem-A-scaling-V-3b-II},
$$
R_h^n(a_n;r_n)
\le
(\rho-r_n+m_\delta)-(\rho-r_n+2m_\delta)
=
-m_\delta<0,
$$
where
$$
R_h^n(a;r):=
(\rho-r+\lambda_n)p_h^n(a;r)-\lambda_n p_\ell^n(a;r).
$$
But Euler's equation implies $R_h^n(a;r)\ge0 $ whenever $s_h^n(a;r)\le0$, a contradiction. This proves
\eqref{eq:proof-lem-A-scaling-VI}.

Finally, since $\bar a_n(r)$ is the largest zero of $s_h^n(\cdot;r)$ and
$s_h^n(a;r)<0$ for all sufficiently large $a$, continuity implies $s_h^n(a;r)<0$ for all $a>\bar a_n(r)$. Together with $s_\ell^n(a;r)<0$ for $a>0$, this shows that
\begin{equation}\label{eq:proof-lem-A-scaling-VII}
\mathrm{supp}\, G_n^*(\cdot;r)
\subset
[0,\bar a_n(r)]\times\cZ_n
\subset
[0,M\eta_n]\times\cZ_n .
\end{equation}

\emph{Step 3c.} Let $\Phi_t^{n,r}(a)$ be the high-income flow,
$$
\frac{\d}{\dt}\Phi_t^{n,r}(a)
=
s_h^n(\Phi_t^{n,r}(a);r),
\qquad
\Phi_0^{n,r}(a)=a.
$$
We show that there exists $T_\delta<\infty$, independent of $n$ and $r\in K$,
such that
$$
\Phi_t^{n,r}(a)\in [\bar a_-^n(r), \bar a^n(r)],
\qquad t\ge T_\delta,
$$
for all sufficiently large $n$, all $r\in K$, and all
$0\le a\le \bar a_n(r)$.

By Step 3b, $s_h^n(\cdot;r)>0$ on $[0,\bar a_-^n(r)]$. Hence the time needed to
reach $\bar a_-^n(r)$ from any $a\le \bar a_-^n(r)$ is bounded by
$$
\int_0^{\bar a_-^n(r)}\frac{\dx}{s_h^n(x;r)}.
$$
We prove that this integral is uniformly bounded. Choose $\theta>0$ small enough, such that, for sufficiently large $n$,
\begin{equation}\label{eq:proof-lem-A-scaling-VIII}
(\rho-r+\lambda_n)(ra+z_h^n-\theta)^{-\gamma}
\le \rho-r+m_\delta,
\qquad \forall a\in [0,M\eta_n],\ r\in K.
\end{equation}
This is possible because $ra+z_h^n\to1$ and $\lambda_n\to0$. Let
$$
I_n(r):=\{a\in[0,\bar a_-^n(r)]:0<s_h^n(a;r)\le\theta\}.
$$
For $a\in I_n(r)$,
$$
c_h^n(a;r)=ra+z_h^n-s_h^n(a;r)\ge ra+z_h^n-\theta,
$$
and, by
\eqref{eq:proof-lem-A-scaling-V} and \eqref{eq:proof-lem-A-scaling-VIII},
$$
R_h^n(a;r)
=
(\rho-r+\lambda_n)p_h^n(a;r)-\lambda_n p_\ell^n(a;r)
\le -m_\delta .
$$
Euler's equation gives $R_h^n(a;r)=\partial_a p_h^n(a;r)s_h^n(a;r)$. Since $s_h^n(a;r)>0$ and $R_h^n(a;r)\le -m_\delta$, we have
$$
\partial_a p_h^n(a;r)
=
\frac{R_h^n(a;r)}{s_h^n(a;r)}
\le
-\frac{m_\delta}{s_h^n(a;r)}<0\qquad \Rightarrow\qquad 
\frac1{s_h^n(a;r)}
\le
-\frac{\partial_a p_h^n(a;r)}{m_\delta}.
$$
Moreover, since $z^n_h\to1$, $c^n_h(a;r)\geq 1/2$, for all $(a,r)\in I_n(r)\times K$, upon making $\theta$ smaller if necessary. This implies $p^n_h(a;r)\leq 2^\gamma$ on $(a,r)\in I_n(r)\times K$. Since $p_h^n(\cdot;r)$ is decreasing,
$$
\int_{I_n(r)}\frac{\da}{s_h^n(a;r)}
\le
\frac1{m_\delta}\int_{I_n(r)}-\partial_a p_h^n(a;r)\,\da
\le
\frac{2^\gamma}{m_\delta}.
$$
On the complement of $I_n(r)$ inside $[0,\bar a_-^n(r)]$, we have
$s_h^n(a;r)>\theta$, and therefore
$$
\int_{[0,\bar a_-^n(r)]\setminus I_n(r)}
\frac{\da}{s_h^n(a;r)}
\le
\frac{\bar a_-^n(r)}{\theta}
\le
\frac{M\eta_n}{\theta}
\le 1
$$
for sufficiently large $n$. Hence,
$$
\int_0^{\bar a_-^n(r)}\frac{\da}{s_h^n(a;r)}
\le
\frac{2^\gamma}{m_\delta}+1
=:T_\delta .
$$
This proves that the high-state flow reaches $\bar a_-^n(r)$ within time
$T_\delta$, uniformly in $n$ and $r\in K$.

Once the flow has reached $\bar a_-^n(r)$, it cannot go below it, because $s_h^n>0$ on $[0,\bar a_-^n(r)]$. It also cannot go above $\bar a_n(r)$ since $s_h^n(\bar a_n(r);r)=0$. Hence, for
$t\ge T_\delta$,
$$
\Phi_t^{n,r}(a)\in [\bar a_-^n(r),\bar a_n(r)].
$$
Using Step 2,
$$
\bar a_-^n(r) = \eta_n(L_\gamma(r)-\delta),
\qquad
\bar a_n(r)\le \bar a_+^n(r)= \eta_n(L_\gamma(r)+\delta),
$$
we conclude that
\begin{equation}\label{eq:proof-lem-A-scaling-IX}
|\Phi_t^{n,r}(a)-\bar a_n(r)|
\le
2\delta\eta_n,
\qquad t\ge T_\delta,
\end{equation}
for all $0\le a\le \bar a_n(r)$ and $r\in K$.

\emph{Step 3d.}
Let $(a_t,z_t)$ be the optimally controlled process started from its invariant
law $G_n^*(\cdot;r)$. By stationarity, $(a_t,z_t)\sim G_n^*(\cdot;r)$ for all
$t\ge0$. Define
$$
E_n:=\{z_t=z_h^n\ \text{for all }0\le t\le T_\delta\}.
$$
On $E_n$, the wealth path follows the high-state flow for time $T_\delta$.
By \eqref{eq:proof-lem-A-scaling-VII}, $a_0\in[0,\bar a_n(r)]$ almost surely, so that Step 3c implies
$$
|a_{T_\delta}-\bar a_n(r)|\le 2\delta\eta_n
\qquad\text{on }E_n.
$$
On $E_n^c$, we use only the crude bound $|a_{T_\delta}-\bar a_n(r)|\le M\eta_n $ from
\eqref{eq:proof-lem-A-scaling-VII}.
Therefore, using stationarity,
$$
\int_{\sX}|a-\bar a_n(r)|\,G_n^*(\da,\dz;r)
=
\E_{G^*_n}|a_{T_\delta}-\bar a_n(r)|
\le
2\delta\eta_n+M\eta_n\P(E_n^c).
$$
By stationarity,
$$
\P(E_n^c)
=
1-\pi_h^n e^{-\lambda_nT_\delta} 
\le
\pi_\ell^n+\lambda_nT_\delta.
$$
Consequently,
$$
\sup_{r\in K}
\frac1{\eta_n}
\int_{\sX}|a-\bar a_n(r)|\,G_n^*(\da,\dz;r)
\le
2\delta+M(\pi_\ell^n+\lambda_nT_\delta).
$$
Letting $n\to\infty$ and then $\delta\downarrow0$ yields, using $\pi^n_\ell,\lambda_n\to0$,
$$
\sup_{r\in K}
\left|
\frac{A_n(r)}{\eta_n}
-
\frac{\bar a_n(r)}{\eta_n}
\right|\leq 
\sup_{r\in K}
\frac1{\eta_n}
\int_{\sX}|a-\bar a_n(r)|\,G_n^*(\da,\dz;r)
\to0.
$$
Combining this with Step 2 proves \eqref{eq:A-scaling}.

\section{Proof of the comparison theorem} \label{app:comparison}
We now prove Theorem \ref{thm:comparison}, and we use an equivalent definition of viscosity solutions in terms of sub- and superdifferentials. A related comparison theorem can be found in \cite{gassiat2014investment}. We define
the superdifferential of a function $w$ at $x$ by
$$
D^+ w(x) := \big\{p\in\R:w(y)\leq w(x)+p(y-x)+o(|y-x|)\ \text{as} \ y\to x\big\}.
$$
Similarly, the subdifferential of $w$ at $x$ is defined by
$$
D^- w(x) := \big\{p\in\R:w(y)\geq w(x)+p(y-x)+o(|y-x|)\ \text{as} \ y\to x\big\}
$$
It is classical (see \cite{crandall1983viscosity}) that an equivalent characterization of a constrained viscosity solution $v$ of \eqref{eq:hh-problem-HJB} is the following:
\begin{enumerate}[(i)]
\item \emph{Subsolution property.} For every $(a,z)\in\bar\cO\times\cZ$ and $p\in D^+ v(\cdot,z)(a)$, $F(a,z,v(a,\cdot),p)\leq0$.
\item \emph{Supersolution property.} For every $(a,z)\in\cO\times\cZ$ and $p\in D^- v(\cdot,z)(a)$, $F(a,z,v(a,\cdot),p)\geq0$.
\end{enumerate}

\vspace{1em}

\noindent \emph{Step 1.} Recall that $\bar z\in\cZ$ denotes the maximal income level. Define the auxiliary function
$$
\psi:\bar\cO\longmapsto [0,\infty),\quad a\longmapsto
\begin{cases}
\exp(\eta (a-\lba)), & r\leq 0,\\
(ra+\bar z)^{\rho/r}, & r>0,
\end{cases}
$$
where $\eta:=\rho/(1+(r\lba+\overline{z})_+)$.
We claim that 
$$
w^\varepsilon (a,z) := w(a,z)+\varepsilon \psi(a),\quad (a,z)\in\bar\cO\times\cZ,
$$ 
is again a viscosity supersolution, for $\varepsilon>0$. Indeed, first observe that $D^-_a(w^\varepsilon ) = \{p+\varepsilon \psi'(a)\,:\,p\in D_a^-(w)\}$. Then, for any $(a,z)\in\cO\times \cZ$, $p\in D^-_a w(a,z)$,
$$
F(a,z,w^\varepsilon (a,\cdot),p+\varepsilon \psi'(a)) = F(a,z,w(a,\cdot),p)+ \varepsilon [\rho\psi(a)-\psi'(a)(ra+z)] + H(p)-H(p+\varepsilon \psi'(a)).
$$
We justify that each term is non-negative. By assumption, the first term satisfies $F(a,z,w(a,\cdot),p)\geq0$. The second term is non-negative by choice of $\psi$.
Finally, $H(p)-H(p+\varepsilon \psi'(a))\ge0$ since $H(\cdot)$ is decreasing, $\varepsilon \psi'(a)\ge0$, and $H(p)<\infty$. This establishes
$$
F(a,z,w^{\varepsilon }(a,\cdot),p+\varepsilon \psi'(a))\geq0,
$$
proving that $w^{\varepsilon }$ is a supersolution.\\

Clearly, if $v\leq w^\varepsilon$ for all $\varepsilon>0$, then $v\leq w$. Towards a contradiction, assume there exists $\varepsilon_*>0$ such that 
\begin{equation}\label{eq:proof-comparison-I}
M:=\sup_{\bar \cO\times\cZ}\ (v-w^{\varepsilon_*})>0.
\end{equation}
By the linear growth condition \eqref{eq:growth-condition} (in the case  $r=\rho$, the sublinear growth condition, respectively), 
$$
\lim_{a\to\infty } \ (v(a,z) - \varepsilon_*\psi(a)) =-\infty.
$$
Hence, for all $z\in\cZ$,
$$
v(a,z)-w^{\varepsilon_*}(a,z) \leq  v(a,z)-\inf w -\varepsilon_*\psi(a) \to -\infty,\quad \text{as}\ a\to\infty,
$$
so that there exist maximizers
$$
(a_*,z_*) \in \underset{\bar\cO\times\cZ}{\text{arg\,max}}\ (v-w^{\varepsilon_*}).
$$

\emph{Step 2.} For $k\ge1$, $a,a'\in\bar\cO$, and $z\in\cZ$, define 
$$
\Phi_k(a,a',z)= v(a,z)-w^{\varepsilon_*}(a',z) -\varphi_k(a,a'),\qquad  \varphi_k(a,a'):= \frac{k}{2} (a'-a)^2 +\frac12 [(k(a'-a) -1)^-]^2 
$$
Using the linear growth condition of $v$ and the definition of $w^{\varepsilon_*}$, we see that $\Phi_k$ attains its maximum over $\bar\cO\times\bar \cO\times \cZ$, and we denote
$$
(a_k,a_k',z_k)\in \underset{\bar\cO\times\bar \cO\times \cZ}{\text{arg\,max}} \ \Phi_k,\qquad k\geq1.
$$
Comparing this to $(a_*,a_*+1/k,z_*)$ yields
\begin{align*}
\Phi_k(a_k,a_k',z_k) &\geq \Phi_k(a_*,a_*+1/k,z_*)\\
&=v(a_*,z_*) - w^{\varepsilon_*}(a_*+1/k, z_*)-\frac{1}{2k}\\
&= M - \omega_k
\end{align*}
for $\omega_k:=w^{\varepsilon_*}(a_*+1/k, z_*)-w^{\varepsilon_*}(a_*,z_*)+1/(2k)$, which tends to $0$ by continuity of $w^{\varepsilon_*}$. This implies 
\be\label{eq:proof-comparison-II}
\underset{k\to\infty}{\liminf} \Phi_k(a_k,a_k',z_k)\geq M>0.
\ee
This implies that $(a_k,a_k')_{k\ge1}$ is a bounded sequence. Using the construction, we may assume that, after passing to a subsequence,
\be\label{eq:proof-comparison-III}
(a_k,a_k')\to(\hat a,\hat a),\qquad z_k=\hat z,\qquad \varphi_k(a_k,a_k')\to \ell,
\ee
for some $(\hat a,\hat z)\in\bar\cO\times \cZ$ and $\ell\ge0$. By continuity and \eqref{eq:proof-comparison-II}, 
$$
M\leq v(\hat a,\hat z)-w^{\varepsilon_*}(\hat a,\hat z)-\ell\leq M-\ell,
$$
so that $\ell=0$ and $v(\hat a,\hat z)-w^{\varepsilon_*}(\hat a,\hat z)=M$. In particular,
$$
q_k := [k(a_k'-a_k)-1]^- \to 0.
$$
For sufficiently large $k\ge1$, we have $q_k\le1/2$ so that
$$
k(a_k'-a_k)-1\ge-\frac12 \qquad \Rightarrow\qquad a_k'\geq a_k+\frac{1}{2k}\geq \lba + \frac{1}{2k} >\lba.
$$
Therefore, one may use the supersolution property of $w^{\varepsilon_*}$ at $a_k'$. For such $k$, use $a\mapsto \varphi_k(a,a_k')$ as a test function for the subsolution $v$ at $(a_k,z_k)$, and use $a'\mapsto-\varphi_k(a_k,a')$ as a test function for the supersolution $w^{\varepsilon_*}$ at $(a_k',z_k)$. Set 
$$
p_k := \partial_a \varphi_k(a_k,a_k'),\qquad p_k' := - \partial_{a'} \varphi_k(a_k,a_k').
$$
A direct calculation shows the identity 
\be\label{eq:proof-comparison-IV}
p_k=p_k'=k(a_k-a_k')+kq_k.
\ee
Using the viscosity properties,
\be\label{eq:proof-comparison-V}
F(a_k,z_k,v(a_k,\cdot),p_k)\le 0,\qquad F(a_k',z_k,w^{\varepsilon_*}(a_k',\cdot),p_k)\ge 0.
\ee
By the definition of $(a_k,a_k',z_k)$ as maximizers of $\Phi_k$, 
$$
v(a_k,y) - w^{\varepsilon_*}(a_k',y)\leq v(a_k,z_k)-w^{\varepsilon_*}(a_k',z_k),\qquad \forall y\in\cZ,
$$
so that 
$$
\sL v(a_k,\cdot )(z_k)-\sL w^{\varepsilon_*}(a_k',\cdot )(z_k)\le0.
$$
Using this, the fact that $H(p_k)<\infty$, and subtracting the inequalities in \eqref{eq:proof-comparison-V} leads to
$$
\rho(v(a_k,z_k)-w^{\varepsilon_*}(a_k',z_k))\leq p_k(r a_k+z_k) - p_k(ra_k'+z_k)= rp_k(a_k-a_k').
$$
Since $\Phi_k(a_k,a_k',z_k)\geq M-\omega_k$ and $\varphi_k\ge0$,
$$
v(a_k,z_k)-w^{\varepsilon_*}(a_k',z_k)\geq M-\omega_k,\qquad \Rightarrow\qquad \rho(M-\omega_k)\leq rp_k(a_k-a_k').
$$
It remains to show that $p_k(a_k-a_k')\to0$ as $k\to\infty$ as this would yield $\rho M\le0$, contradicting \eqref{eq:proof-comparison-I}.
Indeed, a direct computation yields, with $d_k:=a_k-a_k'$,
$$
p_k\,d_k=k\,d_k^2 + k\,q_k\,d_k.
$$
The first term goes to zero since $\varphi_k(a_k,a_k')\to0$ by \eqref{eq:proof-comparison-III}. For the second term, observe that $q_k\to0$ by \eqref{eq:proof-comparison-III}. Now, if $q_k>0$, then $q_k=1+kd_k$, so that $kq_kd_k=q_k(q_k-1)$, which tends to zero. This establishes $p_kd_k\to0$ as desired.\hfill $\Box$

\section{Proof of Lemma \ref{lem:properties-saving-rules} for $r=\rho$} \label{app:r-equal-rho}

We continue to establish \eqref{lem:properties-saving-rules-I} and \eqref{lem:properties-saving-rules-II} of Lemma \ref{lem:properties-saving-rules} in the case $r=\rho$. By the continuity property of $v_a^*$ in the interest rate recorded in Lemma \ref{lem:joint-continuity-(a,r)}, we have
\be\label{eq:appendix-savings-I}
s^*(a,\lbz)\leq 0,\qquad s^*(\lba,\lbz)=0,\qquad a\geq\lba.
\ee

\emph{Proof of Lemma \ref{lem:properties-saving-rules} \eqref{lem:properties-saving-rules-I}.} Towards a contradiction, suppose that $s^*(a_0,\lbz)=0$ for some $a_0>\lba$ and let $z\in \mathrm{arg\,max}\{v^*_a(a_0,z):z\in\cZ\}$. Following the reasoning in the proof of Lemma \ref{lem:properties-saving-rules}, we see that this necessitates $\sL v^*_a(a_0,\cdot)(z)=0$. Using irreducibility of the income process, this implies that $v^*_a(a_0,\cdot)$ is constant for all $z\in\cZ$, and we denote the common value by $p\ge0$. In particular, this implies $c^*(a_0,z)=c^*(a_0,\lbz)$, so that $s^*(a_0,z)= ra_0+z-c(a_0,\lbz)=z-\lbz$. Hence, there exist $\varepsilon,\eta>0$ such that 
\be\label{eq:appendix-savings-II}
s^*(a,z)\geq\eta \qquad \forall a\in[a_0,a_0+\varepsilon],\qquad \forall z\neq\lbz.
\ee
Define 
$$
q(a,z):=p-v^*_a(a,z),\qquad Q(a):=\max_{z\neq \lbz} q(a,z),\qquad a\in [a_0,a_0+\varepsilon].
$$
By concavity of $v^*(\cdot,z)$, each $v^*_a(\cdot,z)$ is decreasing so that $q(a,z)\geq0$ and $Q(a_0)=0$. 

Next, for $a\in(a_0,a_0+\varepsilon]$, Lemma \ref{lem:Euler} and \eqref{eq:appendix-savings-I} imply that 
\be\label{eq:appendix-savings-III}
\sum_{y\neq \lbz} \lambda(\lbz,y) (q(a,\lbz)-q(a,y)) = \sL v^*_a(a,\cdot)(\lbz)= - v_{aa}^*(a,\lbz) s^*(a,\lbz)\leq0\quad \Rightarrow\quad q(a,\lbz)\leq Q(a).
\ee

By Lemma \ref{lem:Euler} and \eqref{eq:appendix-savings-II}, for $z\neq \lbz$ and $a\in[a_0,a_0+\varepsilon]$,
$$
\partial_a q(a,z) = \frac{\sL v^*_a(a,\cdot)(z)}{s^*(a,z)} = \frac{\sum_{y\neq z} \lambda(z,y) (q(a,z)-q(a,y))}{s^*(a,z)}\leq c_* Q(a)
$$
for some $c_*<\infty$. Therefore, the right-sided derivative of $Q$ satisfies $D^+ Q(a)\leq c_* Q(a)$,  $Q(a_0)=0$, and by Gr\"{o}nwall's inequality, $Q\equiv0$ on $[a_0,a_0+\varepsilon]$. By \eqref{eq:appendix-savings-III}, $q(a,\lbz)=0$ as well. This shows that the optimal consumption rule $c^*(a,z)$ is constant and equal to $\rho a_0+\lbz$ on $[a_0,a_0+\varepsilon]\times\cZ$, so that $s^*(a,\lbz)=\rho(a-a_0)>0$. This contradicts \eqref{eq:appendix-savings-I}, and hence establishes $s^*(a,\lbz)<0$ for every $a>\lba$ and $s^*(\lba,\lbz)=0$.\hfill $\Box$

\vspace{1em}

\emph{Proof of Lemma \ref{lem:properties-saving-rules} \eqref{lem:properties-saving-rules-II}.} Since $s^*(\lba,\lbz)=0$, Lemma \ref{lem:Euler} implies $\sL v^*_a(\lba,\cdot)(\lbz)\leq 0$. Towards a contradiction, assume that equality holds. Following the proof of Lemma \ref{lem:properties-saving-rules} \eqref{lem:properties-saving-rules-I}, this implies that $c^*(a,z)$ is constant in a right-neighborhood of $\lba$. This implies that $s^*(a,\lbz)=\rho(a-\lba)>0$ for $a>\lba$ close to $\lba$, contradicting \eqref{lem:properties-saving-rules-I}. This shows that $\sL v^*_a(\lba,\cdot)(\lbz)<0$.\hfill $\Box$

\bibliographystyle{abbrvnat}
\bibliography{bibliography}

\end{document}